\numberwithin{equation}{section}
\numberwithin{figure}{section}
\theoremstyle{plain}
	\newtheorem{theorem}{Theorem}[section]
	\newtheorem{lemma}[theorem]{Lemma}
	\newtheorem{corollary}[theorem]{Corollary}
	\newtheorem{fact}[theorem]{Fact}
	\newtheorem{obs}[theorem]{Observation}
\theoremstyle{definition}
	\newtheorem{definition}[theorem]{Definition}
	\newtheorem{defn}[theorem]{Definition}
	\newtheorem*{remark*}{Remark}
\newif\ifusebb
  \DeclareSymbolFont{bbold}{U}{bbold}{m}{n}
  \DeclareSymbolFontAlphabet{\mathbbold}{bbold}
  \newcommand{\allones}{\ensuremath{\mathbbm{1}}}
  \newcommand{\allzeros}{\ensuremath{\mathbbold{0}}}
  \newcommand{\allones}{\vec{1}}
  \newcommand{\allzeros}{\vec{0}}
 \renewcommand{\epsilon}{\varepsilon}
\setlist[description,1]{align=left,leftmargin=0.1in} 
\setlist[itemize,1]{leftmargin=*}
\newenvironment{fminipage}%
  {\begin{Sbox}\begin{minipage}}%
  {\end{minipage}\end{Sbox}\fbox{\TheSbox}}
\def\pleq{\preccurlyeq}
\def\pgeq{\succcurlyeq}
\def\defeq{\stackrel{\mathrm{def}}{=}}
\def\abs#1{\left|#1  \right|}
\def\norm#1{\left\| #1 \right\|}
\renewcommand\AA{\mvar{A}}
\newcommand\BB{\mvar{B}}
\newcommand\DD{\mvar{D}}
\newcommand\EE{\mvar{E}}
\newcommand\HH{\mvar{H}}
\newcommand\II{\mvar{I}}
\newcommand\MM{\mvar{M}}
\newcommand\LL{\mvar{L}}
\newcommand\PP{\mvar{P}}
\newcommand\PPtil{\mvar{\widetilde P}}
\newcommand\UU{\mvar{U}}
\newcommand\VV{\mvar{V}}
\newcommand\XX{\mvar{X}}
\newcommand\YY{\mvar{Y}}
\newcommand\Otil{\widetilde{O}}
\newcommand{\oracle}{\mathcal{O}}
\global\long\def\R{\mathbb{{R}}}
\global\long\def\boldVar#1{\mathbf{#1}}
\global\long\def\mvar#1{\boldVar{#1}}
\global\long\def\vvar#1{\vec{#1}}
\global\long\def\defeq{\stackrel{\mathrm{{\scriptscriptstyle def}}}{=}}
\global\long\def\norm#1{\|#1\|}
\newcommand{\ve}{\vvar{e}}
\newcommand{\vindic}[1]{\ve_{i}}
\newcommand{\mI}{\mvar I}
\newcommand{\ms}{\mvar S}
\newcommand{\eps}{\epsilon}
\newcommand{\mzero}{\mvar 0}
\renewcommand{\preceq}{\pleq}
\renewcommand{\succeq}{\pgeq}
\newcommand{\mdiag}{\mathbb{D}}
\newcommand{\lA}{\ell_{\AA}}
\newcommand{\uA}{s_{\AA}}
\newcommand{\wA}{w_{\AA}}
\newcommand{\SDDM}{SDD}
\global\long\def\oracle{\mathcal{O}}
\global\long\def\abs#1{\left|#1\right|}
\global\long\def\tr{\mathrm{tr}}
\global\long\def\argmin{\mathrm{argmin}}
\global\long\def\supp{\mathrm{supp}}
\renewcommand{\preceq}{\pleq}
\newcommand{\Sc}{\operatorname{Sc}}
\renewcommand{\ln}{\log}
\begin{document}

\title{
Matrix Scaling and Balancing via Box Constrained Newton's Method and Interior Point Methods
}

\author{Michael B. Cohen\thanks{This material is based upon work supported by the National Science
Foundation under Grant No. 1111109 and Grant No. 1553428, and by the
National Defense Science and Engineering Graduate Fellowship.}\\
MIT\\
micohen@mit.edu
\and 
Aleksander M\k{a}dry\thanks{This material is based upon work supported by the National Science
Foundation under Grant No. 1553428.} \\
MIT\\
madry@mit.edu
\and
Dimitris Tsipras\footnotemark[2] \\
MIT\\
tsipras@mit.edu
\and  Adrian Vladu\thanks{This material is based upon work supported by the National Science
Foundation under Grant No. 1111109 and Grant No. 1553428} \\
MIT\\
avladu@mit.edu}

\date{}
\maketitle
\thispagestyle{empty}
\setcounter{page}{0}

\begin{abstract}
	\normalsize
In this paper, we study matrix scaling and balancing, which are fundamental problems in scientific computing, with a long line of work on them that dates back to the 1960s. We provide algorithms for both these problems that, ignoring logarithmic factors involving the dimension of the input matrix and the size of its entries, both run in time $\widetilde{O}\left(m\log \kappa \log^2 (1/\epsilon)\right)$  where $\epsilon$ is the amount of error we are willing to tolerate.
Here, $\kappa$ represents the ratio between the largest and the smallest entries of the optimal scalings. This implies that our algorithms run in nearly-linear time whenever $\kappa$ is quasi-polynomial, which includes, in particular, the case of strictly positive matrices. We complement our results by providing a separate algorithm that uses an interior-point method and runs in time $\widetilde{O}(m^{3/2} \log (1/\epsilon))$.

In order to establish these results, we develop a new second-order optimization framework that enables us to treat both problems in a unified and principled manner. This framework identifies a certain generalization of linear system solving that we can use to efficiently minimize a broad class of functions, which we call \textit{second-order robust}. We then show that in the context of the specific functions capturing matrix scaling and balancing, we can leverage and generalize the work on Laplacian system solving to make the algorithms obtained via this framework very efficient.
\end{abstract}

\vfill

\pagebreak{}

%!TEX root = main.tex

\section{Introduction}
Matrix balancing and scaling are problems of fundamental importance in scientific computing, as well as in statistics, operations research, image reconstruction, and engineering. The literature on these problems \cite{Osb60, PR69, grad1971matrix, hartfiel1971concerning, eaves1985line, KKS97, schneider1991max, sink67, sinkhorn1964relationship, wilkinson1994rounding, raghavan1984pairs, brown1959note, friedland1988additive, idel2016review} is truly extensive and dates back to 
1960s. They both are key primitives in most mainstream numerical software packages (MATLAB, R, LAPACK, EISPACK)~\cite{matlabeig, matlabbalance,rbalance,rexpm,lapackbalance}. Also, both these problems can be seen as task in which we are aiming to find diagonal scalings of a given matrix so that the rescaled matrix gains some favorable structure. 

More specifically, in the matrix scaling problem, we are given a nonnegative matrix $\AA$, and our goal is to find diagonal matrices $\XX,\YY$ such that the matrix $\XX\AA\YY$ has prescribed row and column sums. The most common instance of this problem is the one where we want to scale the matrix so to make it doubly stochastic -- in other words, we want to make all row and column sums be equal to one. This procedure has been repeatedly used since as early as 1937 in a number of diverse areas, such as telecommunication~\cite{kruithof1937ingenieur}, engineering~\cite{brown1959note}, statistics~\cite{fienberg1976analysis, sinkhorn1964relationship}, machine learning~\cite{cuturi2013sinkhorn}, and even computational complexity~\cite{LSW98,Gurvits03}.  A standard application for scaling is preconditioning linear system solving. Given a linear system $\AA x = b$, one can produce a solution by computing  $\YY (\XX\AA\YY)^{-1} \XX b$, since applying the inverse of $\XX\AA\YY$ is more numerically stable procedure than directly applying the inverse of $\AA$~\cite{wilkinson1994rounding}. Another example application, which commonly occurs in statistics, is iterative proportional fitting. This primitive is often used for standardizing cross-tabulations and has been studied since 1912~\cite{yule1912methods}. Even more interestingly, matrix scaling turned out to have surprising connections to fundamental problems in the theory of computation. Notably, in~\cite{LSW98}, it is observed that scaling can be used to approximate the permanent of any nonnegative matrix within a multiplicative factor of $e^n$. Furthermore, deciding whether the permanent of a bipartite graph's adjacency matrix is $0$ or at least $1$ is equivalent to deciding whether that graph contains a perfect matching. Such scaling--based method can, as a matter of fact, be used to compute maximum matchings in bipartite graphs, which is a classic and intensely studied problem in graph algorithms~\cite{edmonds1965maximum, gabow1989faster, madry2013navigating}. For more history and information on this problem, we refer the reader to Idel's extensive survey~\cite{idel2016review}, or~\cite{schneider1990comparative} for a list of applications.

Now, in the matrix balancing problem, we are, again, given a nonnegative matrix $\AA$, and our goal here is to find a diagonal matrix $\DD$ such that the matrix $\DD\AA\DD^{-1}$ is \emph{balanced}, that is the sum of each row is equal to the sum of the corresponding column. 
%Multiplying by $\DD$ and $\DD^{-1}$ corresponds to scaling each
%row by some factor, and dividing the corresponding column by the inverse of this factor.
This procedure has been introduced first by Osborne~\cite{Osb60}, who was using it to precondition matrices in order to increase the accuracy of the eigenvalue computation. (Note that the balancing operation does not change the eigenvalues of the matrix.) The initially proposed algorithm for it was based on a simple iterative approach, and was then followed by a long series of improvements and extensions. The initial work on this problem focused on a variant in which one aims to balance $\ell_2$-norms of rows and columns. It turns out, however, that the $\ell_1$-norm--based version we study here is equivalent. In fact, balancing problems with respect to $\ell_p$ norms, with constant $p \geq 1$, are all reducible to each other.

\subsection{Previous Work}
The early methods used for solving these problems -- Osborne's iteration for balancing, and the RAS method for scaling -- are simple iterative algorithms. However, merely the task of analyzing their convergence turned out to be a major challenge. Significant effort has gone into understanding their convergence~\cite{chen2000balancing, schulman2015analysis,  ORY17, LSW98}, and providing better analyses or better iterative methods resulted in a long line of work in this context.

The major shortcoming of the methods obtained so far for exactly solving the problem (depending only logarithmically on $1/\eps$)  is their
very large running time. In the following discussion we ommit runtime factors that depend (logarithmically) on the size of the input entries. For matrix scaling, Kalantari and Kachiyan~\cite{KK96}
obtained an algorithm that finds an $\eps$-approximate solution and runs in
time $\Otil(n^4 \log (1/\eps))$, where $n$ denotes the dimension of the matrix (we can assume the matrix is
square w.l.o.g.) and $\eps$ is the desired accuracy parameter\footnote{The precise definition of $\eps$ varies across papers. However, in the regime of logarithmic running time dependence on $1/\eps$ we are interested in here, all these definitions are essentially equivalent.}. This algorithm was based on the ellipsoid method. These authors also proposed -- but not formally analyzed -- an algorithm based on
interior point method, which they expected to run in time $\Otil(m^{3.5} \log
(1/\eps))$, where $m$ denotes the number of non-zero entries of the input matrix. Then, Nemirovsky and Rothblum~\cite{nemirovski1999complexity}
analyzed an interior point method--based algorithm which run in time $\Otil(m^4 \log
(1/\eps))$. Finally, Linial, Samorodnitsky, and Wigderson~\cite{LSW98} gave
an $\Otil(n^7\log(1/\eps))$ time algorithm that is also strongly polynomial, in the sense that it does not depend at all on the size of input entries.

For the case of matrix balancing, Parlett and Reinsch~\cite{PR69} provided an iterative method based on Osborne's iteration, without proving convergence. Then, Grad~\cite{grad1971matrix} proved that Osborne's iteration converges in the limit. The first polynomial time bound was obtained by Kalantari, Khachiyan, and Shokoufandeh~\cite{KKS97}, who gave an algorithm with running time  $\Otil(n^4 \log (1/\eps))$. 

Alternatively, if one is interested in the regime where the running time is allowed to depend polynomially -- instead of logarithmically -- on the (inverse of the) desired accuracy of the solution, there are algorithms that have an even better dependence on the other parameters. Specifically, the current state-of-the-art is given by Linial, Samorodnitsky, and Wigderson~\cite{LSW98}, who obtain $O(n^3\eps^{-2})$ running time for the scaling problem. In the case of the balancing problem, recently, Ostrovsky, Rabani, and Yousefi~\cite{ORY17} made a significant progress by obtaining running times of  $\Otil(m+n \eps^{-2})$ and $\Otil(n^{3.5}\eps^{-1})$. 

Finally, another important line of work in this domain was focused on the related $\ell_\infty$ variant of the balancing problem, where the maximum entry of each row is required to be equal to the maximum entry of the corresponding column. Schneider and Schneider~\cite{schneider1991max} gave a non-iterative algorithm running in time $O(n^4)$, improved to  $\Otil(mn + n^2)$ by Young, Tarjan, and Orlin~\cite{young1991faster}. More recently, Schulman and Sinclair~\cite{schulman2015analysis} provided an analysis of the classical Osborne-Parlett-Reinsch obtaining a running time of $\Otil(n^2m)$, and gave a version of it with running time $\Otil(n^3 \log (1/\eps))$.

\subsection{Our Contributions}
We provide algorithms for both matrix scaling and balancing problems.

For the matrix scaling problem, we establish an algorithm that runs in time
$$\Otil\left(  m \log (\kappa(\UU^*) + \kappa(\VV^*)) \log^2 \frac{\uA}{\eps}   \right)\,{,}$$
where $\UU^*$ and $\VV^*$ are the optimal scaling matrices, $\kappa(\cdot)$ is the
maximum ratio between the diagonal entries of its argument, $\uA$ is the sum of the entries in the input matrix, and $\eps$ is the measure of the target error of the scaling, formally defined in Definition~\ref{def:eps_scaling}.

For the matrix balancing problem, we establish a running time of 
$$\Otil\left(m\log \kappa(\DD^* )\log^2 \frac{w_\AA}{\eps}\right)\,{,}$$ 
where $w_\AA$ is the ratio of the sum of
the entries to the minimum nonzero entry,  $\DD^*$ is the optimal balancing matrix,  $\kappa(\cdot)$ has the same meaning as above, and $\eps$ is the measure of the balancing error, as formally defined in Definition~\ref{def:eps_balancing}.

Notably, our running times depend logarithmically on both the target accuracy and the magnitude of the entries in the optimal balancing or scaling. This implies that if the optimal solution has quasi-polynomially bounded entries, our algorithms run in nearly linear time $\Otil(m \ln (1/\eps))$ (ignoring logarithmic factors involving the entries of the input matrix). This includes, for instance, the case when input matrix has all its entries positive or, in case of matrix balancing, if there just exists a single row/column pair with all positive entries.

However, there are matrices for which $\kappa$ can be exponentially large (in $n$). For the case of such matrices we develop algorithms with negligible dependence on $\kappa$. These algorithms are based on interior point methods, with appropriately chosen barriers, commonly used in exponential programming~\cite{ben2001lectures}. We show that the linear system solves required by the interior point method every iteration can be reduced via Schur complementing to approximately solving a Laplacian system, which can be done in nearly linear time using any standard Laplacian solver~\cite{SpielmanT04,KoutisMP10,KoutisMP11,KelnerOSZ13,CohenKMPPRX14,KyngLPSS16,KyngS16}. 
This yields a running time of 
$$\Otil\left( m^{3/2} \ln \frac{\wA}{\eps} \right),$$
 where $\wA$ is the ratio between the largest and smallest nonzero entry of $\AA$. 

%!TEX root = main.tex
\subsection{Our Approach}

We approach the scaling and balancing problems by developing a continuous optimization based perspective on them. More precisely, we solve both matrix scaling and balancing problems by casting them as tasks of minimizing certain corresponding convex functions. In fact, in the case of the balancing problem, that function is directly inspired by the one used in~\cite{KKS97}; for the scaling problem, it is function derived from the one used in~\cite{KK96}. 

Since our goal is to obtain logarithmic -- instead of polynomial -- dependence on the (inverse of the) desired accuracy $\eps$, it would be tempting to use well-known tools for convex programing, such as ellipsoid method or interior point method. However, these methods are, a priori, computationally expensive. This motivates us to look for different, more direct approaches.

To this end, we develop a technique for minimizing a broader class of functions that we call
\textit{second-order robust (with respect to $\ell_\infty$)}. Intuitively, this class corresponds to
functions whose Hessians do not change too much within any unit $\ell_\infty$-ball.
And the consequence of that property that will be crucial for us is that local quadratic approximation of such functions at
any given point is relatively accurate within the unit $\ell_\infty$ neighborhood of that
point. As a result, iteratively optimizing the local approximation around
the current point, while staying within that $\ell_\infty$ neighborhood, will be guaranteed to
make progress towards minimizing the function. This iterative procedure can be viewed as a ``box-constrained'' variant of the Newton's method. 

A priori, performing a single step of such a box-constrained Newton's method, i.e., minimizing a quadratic function subject to box constraints might be a computationally costly task. We show, however, that it suffices to implement a weaker primitive, which we call a \textit{$k$-oracle}. That primitive corresponds to (approximately) minimizing a quadratic function within a region that is within a factor of $k$ larger than the target $\ell_\infty$-ball. Once such a $k$-oracle is implemented efficiently, we can compute the global optimum of our second-order robust function using a small number of calls to it. More precisely, we show that one can minimize a convex function $f$ that is second-order robust with respect to $\ell_\infty$  to within $\epsilon$ additive error from optimum in 
\begin{equation}
\label{eq:k_oracle_iterations}
O\left((k R_\infty+1)  \log \left(\frac{f(x_0) -
f(x^*)}{\epsilon}\right)\right)
\end{equation} iterations, where each iteration consists of one call to the $k$-oracle, $x_0$ is the starting point, $x^*$ is the minimizer of $f$, and $R_\infty$ is the $\ell_\infty$ radius of the level set of $x_0$.

In the light of the above, the main technical difficulty remaining is obtaining an efficient implementation of a
$k$-oracle. We show that for functions whose Hessian is symmetrically diagonally
dominant, with nonzero off-diagonal entries, or {\SDDM} for short\footnote{Such matrices 
can essentially be viewed as a Laplacian matrix plus a nonnegative diagonal.}, we can implement
a $k$-oracle, with $k=O(\log n)$, in time that is nearly linear in the sparsity of the Hessian.
We build here on the strategy underlying the Laplacian solver of Lee, Peng and Spielman~\cite{LeePS15}.  Specifically, we carefully lift the solutions corresponding to coarser (and smaller) approximations of the underlying matrix to the desired solutions corresponding to the initial matrix in a way that does not allow these lifted solutions to exceed the boundaries of a $O(\log n)$-radius $\ell_\infty$-ball.

Once the above optimization framework is developed,  applying it to the scaling and balancing problems is
fairly straightforward. It boils down to verifying that the functions that capture the respective problems are indeed
second-order robust and have an SDD Hessian, and then bounding all the relevant quantities that \eqref{eq:k_oracle_iterations} involves.

\paragraph{Independent Work} Finally, we note that Allen-Zhu, Li, Oliveira, and Wigderson ~\cite{allen2017much} obtained independently very similar results for the exact version of the problem. The running time of the algorithms they develop have a bit worse dependence on $m$, but they were able to establish better absolute bounds on $\kappa$ (in terms of the problem parameters and the magnitude of the input entries) for the general, non-doubly stochastic variant of the matrix scaling problem.

\subsection{Roadmap}
The rest of the paper is organized as follows.
First, we introduce relevant notation and concepts in Section~\ref{sec:prelim}.
Then, in Section~\ref{sec:mainloop} we formally introduce the class of convex functions we call
\textit{second-order robust with respect to $\ell_\infty$}. For these, we develop a specific optimization primitive called \textit{box-constrained Newton method}.

We describe how we can apply the primitive from Section~\ref{sec:mainloop} to matrix
balancing and scaling in Section~\ref{sec:scb}, by reducing these problem to a convex
function minimization with favorable structure.
In order to complete our algorithm, in Section~\ref{sec:nearlylineartime}, we show how to efficiently implement an iteration of the box-constrained Newton in the special case where the Hessian of the function is {\SDDM}.
In Section~\ref{sec:ipm} we provide a different approach for balancing and scaling based on
interior point methods.
Supplementary proofs and technical details are presented in the Appendix.

%!TEX root = main.tex

\section{Preliminaries}\label{sec:prelim}

\subsection{Notations}

\paragraph{Vectors} We let $\allzeros, \allones \in \R^n$ denote the all zeros and all ones
vectors, respectively. When it is clear from the context, we apply scalar operations to
vectors with the interpretation that they are applied coordinate-wise. 

\paragraph{Matrices} We write matrices in bold. We use $\mI$ to denote the identity matrix, and $\mzero$ to denote the zero matrix. Given a matrix $\AA$, we denote its number of nonzero entries by $\textnormal{nnz}(\AA)$. When it is clear from the context, we use $m$ to denote the the number of nonzeros; similarly, we use $n$ to denote the dimension of the ambient space.

 We denote by $\uA$ the sum of entries of $\AA$, by $\lA$ the minimum nonzero
 entry of $\AA$, and by $\wA$ the ratio between these quantities. We use
 $\supp(\AA)$ to denote the set of pairs of indices $(i,j)$ corresponding to the
 nonzero entries of $\AA$. Given a matrix $\AA$, we define $r_{\AA}  = \AA
 \allones$ to be the vector consisting of row sums, and $c_{\AA} = \AA^\top
 \allones$ to be the vector consisting of column sums. For a positive diagonal
 matrix $\AA$ we denote the maximum ratio between its diagonal elements by
 $\kappa(\AA)$.

\paragraph{Positive Semidefinite Ordering and Approximation} For symmetric matrices $\AA, \BB \in \R^{n\times n}$ we use $\AA \preceq \BB$ to represent the fact that that $x^\top \AA x \leq  x^\top \BB x$, for all $x$. A symmetric matrix $\AA \in \R^{n\times n}$ is positive semidefinite (PSD) if $\AA \succeq 0$. We use $\preceq, \succ, \prec$ in a similar fashion.
For vectors $x$, we define the norm $\Vert x \Vert_{\AA} = \sqrt{x^\top \AA x}$.
Given two PSD matrices $\AA$ and $\BB$, and a parameter $\alpha > 0$,  we use $\AA
\approx_\alpha \BB$ to denote the fact that $ e^{-\alpha}\cdot \BB \preceq \AA \preceq
e^{\alpha} \cdot \BB$.

\paragraph{Laplacian and SDD matrices} A family of matrices that will play an important role in this paper are symmetric diagonally dominant (SDD) matrices. These are
matrices $\AA$, that symmetric and, moreover, have each diagonal entry be larger than the sum of
absolute values of the corresponding row entries. That is, for every $i$
$$A_{ii}\geq \sum_{j\neq i} |A_{ij}|.$$
A special case of SDD matrices are Laplacian matrices, which have negative off-diagonal
entries and the sum of each row is required to be zero. The crucial fact about these matrices is that one can exploit their structure to solve linear systems in them in time that is only nearly linear \cite{SpielmanT04,KoutisMP10,KoutisMP11,KelnerOSZ13,CohenKMPPRX14,KyngLPSS16,KyngS16}.

\paragraph{Diagonal Matrices} For $x \in \R^n$ we denote by $\mdiag(x) \in \R^{n\times n}$ the
diagonal matrix where $\mdiag(x)_{ii} = x_i$. Given a nonnegative diagonal matrix $\DD$, we
use $\kappa(\DD)$ to denote the ratio between its largest and smallest entry. We will
overload notation and, for any matrix $\AA\in\R^{n\times n}$, use $\mdiag(\AA)$ to denote the
main diagonal of $\AA$, that is $(\mdiag(\AA))_{ii} = \AA_{ii}$ and $(\mdiag(\AA))_{ij}=0$
for $i\neq j$.

\paragraph{Gradients and Hessians} Given a function $f$ we denote by $\nabla
f(x)$ its gradient at $x$, and by $\nabla^2 f(x)$ its Hessian at $x$. When the
function is clear from the context, we also use $\HH_x$ to denote its Hessian at $x$.

\paragraph{Block Matrices} As part of our algorithms, we will consider partitioning the
coordinates of vectors into sets of indices $F$ and $C$. When we compute the quadratic form
of a matrix with these vectors, we need to be able to reason about how values in each
component interact with the rest of the vector. For that reason it is convenient to denote
the block form notation for a matrix $\AA$ as:
$$\AA = \begin{bmatrix}
    \AA_{[F,F]} & \AA_{[F,C]} \\
    \AA_{[C,F]} & \AA_{[C,C]} \end{bmatrix}\,{.}$$

\paragraph{Schur Complements}
For a matrix $\AA\in\R^{n\times n}$ and a partition of its
indices $(F,C)$, the Schur complement of F in $\AA$ is defined as
$$\Sc(\AA,F) \defeq \AA_{[C,C]} - \AA_{[C,F]}\AA_{[F,F]}^{-1}\AA_{[F,C]}\,{.}$$
The exact use of Schur complements will become clear in
Sections~\ref{sec:nearlylineartime},\ref{sec:ipm}. These are objects that naturally arise
during Gaussian elimination for the solution of linear systems. By pivoting out variables $F$
the remaining system to solve for variables of $C$ is exactly the Schur complement of $F$ in
$\AA$.

 %!TEX root = main.tex
\section{Box-Constrained Newton Method for Second-Order Robust Functions}\label{sec:mainloop}

The central element of our approach is developing an efficient second-order
method based minimization framework for a broad class of functions that we will call second-order robust with respect to $\ell_\infty$. To
motivate the choice of this class, recall that second-order methods for function
minimization are iterative in nature, and they boil down to repeated minimizing
the local quadratic approximation of the function around the current point.
Consequently, in order to obtain meaningful guarantees about the progress made
by such methods, one needs to ensure that this local quadratic approximation constitutes a good approximation of the function not only at the current point but also in a reasonably large neighborhood of that point. The most natural way to obtain such a guarantee is to ensure that the Hessian of the function (which is the basis of our local quadratic approximations) does not change by more than a constant factor in that neighborhood. As a result, the
functions we are interested in optimizing in this paper are the ones that satisfy that property in an
$\ell_\infty$-ball around the current point. This is formalized in the following definition.

\begin{defn}[Second-Order Robust w.r.t. $\ell_\infty$]\label{def:sor}
 We say that a convex function $f : \R^n \rightarrow \R$ is {\em second-order robust (SOR) with respect to $\ell_\infty$} if, for any $x, y \in \R^n$ such that $\Vert x-y \Vert_\infty \leq 1$,
$$\nabla^2 f(x) \approx_2 \nabla^2 f(y),\text{ that is, }
\frac{1}{e^2}\nabla^2 f(x)\pleq \nabla^2f(y)\pleq e^2\nabla^2f(x)\,{.} $$
\end{defn}

Note that the size of the $\ell_\infty$-ball, as well as the exact factor
by which the Hessian is allowed to change, are chosen somewhat arbitrarily -- all choices of the constants can be made equivalent via an appropriate rescaling. Moreover, even if these quantities are not constant,
they would only appear in the running time as a small polynomial factor.

Now, the above definition suggests a natural framework for optimizing such functions. Namely, in every iteration, we optimize a local quadratic approximation of the function within a unit $\ell_\infty$-ball around the current point. As we will see shortly, this approach can be rigorously analyzed. In particular, our key technical result is that if we apply this approach to an SOR function whose Hessians has additionally a special structure, i.e., those for which the Hessian is, essentially, a symmetric diagonally dominant (SDD) matrix, we can implement every iteration in time nearly linear in the number of nonzero entries of the Hessian. This leads to running time bounds captured by  the following theorem.

\begin{theorem}[Minimizing Second-Order Robust Functions w.r.t $\ell_\infty$]
\label{thm:sor_sdd}
Let $f : \R^n \rightarrow \R$ be a second-order robust (SOR) function with respect to
$\ell_\infty$, such that its Hessian is symmetric diagonally dominant (SDD) with nonpositive
off-diagonals, and has $m$ nonzero entries.  Given a starting point $x_0\in\R^n$ we can
compute a point $x$, such that $f(x) - f(x^*)\leq \eps$, in time
$$\Otil\left( (m+T) R_\infty \log \left(\frac{f(x_0) - f(x^*)}{ \epsilon}
\right)\right)\,{,}$$
where $x^*$ is a minimizer of $f$, $R_\infty = \sup_{x : f(x) \leq f(x_0)} \Vert x - x^* \Vert_\infty $ is the
$\ell_\infty$ diameter of the corresponding level set of $f$, and $T$ is the
time required to compute the Hessian.
\end{theorem}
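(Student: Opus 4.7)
The plan is to combine two ingredients: (i) the box-constrained Newton framework indicated by equation~\eqref{eq:k_oracle_iterations}, which reduces minimizing a second-order robust function to a sequence of approximate quadratic minimizations over $\ell_\infty$-balls, and (ii) an efficient implementation of such a quadratic minimization oracle (a $k$-oracle with $k=O(\log n)$) whenever the Hessian is SDD with nonpositive off-diagonals. The bulk of the proof is setting up the iteration, bounding the number of iterations in terms of $R_\infty$, and then building the oracle; plugging these together gives the stated runtime.

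I would first formalize the box-constrained Newton step. At the current iterate $x_t$, consider the local quadratic $q_{x_t}(\Delta) = \langle \nabla f(x_t),\Delta\rangle + \tfrac{1}{2}\Delta^\top \HH_{x_t}\Delta$, and set $x_{t+1} = x_t + \Delta_t$ where $\Delta_t$ is a (scaled) output of the $k$-oracle, i.e.\ an approximate minimizer of $q_{x_t}$ over $\{\Delta : \|\Delta\|_\infty \leq k\}$. The SOR assumption (Definition~\ref{def:sor}) implies that for any $\Delta$ with $\|\Delta\|_\infty \leq 1$, one has $\HH_{x_t+s\Delta} \approx_2 \HH_{x_t}$ for all $s\in[0,1]$, so integrating yields $f(x_t + \Delta) = f(x_t) + q_{x_t}(\Delta) \pm O(q_{x_t}(\Delta))$. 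Hence a step that decreases the quadratic model by $\delta$ (after rescaling into the unit box) produces an actual decrease of $f$ by $\Omega(\delta)$.

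Next I would bound the iteration count. The key witness is the scaled direction $(1/R_\infty)(x^* - x_t)$: this has $\ell_\infty$ norm at most $1$ by definition of $R_\infty$ on the initial level set, and by convexity its quadratic value is at most $-(1/R_\infty)\bigl(f(x_t)-f(x^*)\bigr)$ (up to the $\approx_2$ slack from SOR, absorbed into constants). Since the $k$-oracle returns a point whose objective value is no worse than that of any feasible competitor up to a factor $k$, each step shrinks the suboptimality gap by a factor of $1 - \Omega(1/(kR_\infty))$. Iterating gives $O(kR_\infty \log((f(x_0)-f(x^*))/\epsilon))$ iterations, exactly as in~\eqref{eq:k_oracle_iterations}, and with $k=O(\log n)$ this is $\Otil(R_\infty \log((f(x_0)-f(x^*))/\epsilon))$ iterations. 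Each iteration requires one Hessian evaluation (cost $T$) and one $k$-oracle call (cost $\Otil(m)$ by the next step), producing the total $\Otil((m+T)R_\infty\log((f(x_0)-f(x^*))/\epsilon))$.

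The main technical obstacle, and where I would spend the most care, is implementing the $k$-oracle with $k=O(\log n)$ in $\Otil(m)$ time for an SDDM Hessian $\HH$. The plan is to adapt the recursive solver of Lee--Peng--Spielman: at each level, partition coordinates into an ``eliminable'' set $F$ whose principal submatrix $\HH_{[F,F]}$ is sufficiently diagonally dominant and a retained set $C$, recursively solve the quadratic restricted to $C$ on a spectral sparsifier of the Schur complement $\Sc(\HH,F)$, and then lift the partial solution back using $\HH_{[F,F]}^{-1}\HH_{[F,C]}$. Standard Laplacian solvers only certify energy-norm error, whereas here the oracle must return a vector of bounded $\ell_\infty$ norm; I would therefore carefully bound the $\ell_\infty$ inflation incurred by each lift, showing that the SDDM structure with nonpositive off-diagonals turns the lifting map into an averaging-like operator that does not increase $\ell_\infty$ norm by more than a constant factor per level. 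With $O(\log n)$ recursion levels, the cumulative inflation is $k=O(\log n)$, yielding exactly the oracle guarantee needed to close the argument.
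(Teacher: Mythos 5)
Your high-level plan is exactly the paper's: Theorem~\ref{thm:sor_sdd} is obtained by combining the box-constrained Newton iteration of Theorem~\ref{thm:newtonoracle} with the $O(\log n)$-oracle of Theorem~\ref{thm:subproblem}, and your analysis of the iteration count via a witness step toward $x^*$ scaled into the feasible box mirrors the proof in Section~\ref{sec:pfofnewtonoracle}. However, there is one place where the reasoning as written does not close, and it is worth being precise because it is the technical heart of the oracle.

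You state that the lifting map ``does not increase $\ell_\infty$ norm by more than a constant factor per level,'' and conclude that over $O(\log n)$ recursion levels the cumulative inflation is $k = O(\log n)$. That arithmetic is inconsistent: a constant \emph{multiplicative} factor per level compounds to $n^{O(1)}$ over $O(\log n)$ levels, which would blow up the iteration count to polynomial. What you actually need (and what the paper's \textsc{ApproxMapping} provides) is that the lift $\PPtil$ is $\ell_\infty$-\emph{non-expansive} --- each lifted coordinate is a convex combination of the recursive solution's coordinates and $0$, so $\|\PPtil \hat x_C\|_\infty \leq \|\hat x_C\|_\infty$ with factor exactly $1$ --- and then, separately, the locally solved component $x_F$ at each level has $\|x_F\|_\infty \leq 2$. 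The total therefore grows by an \emph{additive} constant per level, yielding $\|x^{(1)}\|_\infty \leq 1 + 2d = O(\log n)$. Conflating these two contributions into a single ``constant factor'' statement is where a careful write-up would go wrong; make the lift exactly contractive, and track the additive contribution from the local solver explicitly.

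A smaller point: the $k$-oracle's quality guarantee (Definition~\ref{def:oracle}) is a factor-of-$2$ comparison against the minimum over the \emph{unit} $\ell_\infty$-ball, not ``up to a factor $k$.'' The parameter $k$ appears in the convergence rate $1-\Omega(1/(kR_\infty))$ because the returned step lives in a radius-$k$ box and must be rescaled by $1/k$ to stay within the region where the SOR bounds are valid (this is the rescaling by $e^2/k^2$ and $1/k$ in iteration~\eqref{it:boxoracle}), not because the oracle loses a factor of $k$ in objective value. You reach the correct rate, but the mechanism attributed to it is not the one that produces it.
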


Note that the bounds provided by the above theorem are, in a sense, the best possible for any kind of approach that relies on repeated minimization of a local approximation of a function in an $\ell_\infty$-ball neighborhood. In particular, as each step can make a progress of at most $1$ in $\ell_\infty$-norm towards the optimal solution, one would expect the total number of steps to be $\Omega(R_\infty)$. 

It turns out that the above theorem is all we need to establish our results for scaling and balancing problems (except the ones relying on the interior point method). That is, these results can be obtained by direct application of the above theorem to an appropriate SOR function. We provide all the details in Section \ref{sec:scb}.

Now, the first step to proving the above Theorem \ref{thm:sor_sdd} is to view each iteration of our iterative minimization procedure as a call to a certain oracle problem. 

\begin{definition}\label{def:oracle}
We say that a procedure $\oracle$ is a $k$-oracle for a class of matrices $\mathcal{M}$, if
on input $(\AA, b)$, where $\AA \in \mathcal{M} \subseteq \R^{n \times n}$, and $b \in \R^n$,
returns a vector $\tilde{z}$ satisfying 
\begin{enumerate}[label=\normalfont (\arabic*)]
\item $\Vert\tilde{z} \Vert_\infty \leq k$, and
\item $ \frac{1}{2} \tilde{z}^\top \AA \tilde{z} + b^\top \tilde{z} \leq \frac{1}{2} \cdot
\min_{\Vert z\Vert_\infty \leq 1}\left(\frac{1}{2}z^\top \AA z + b^\top
z\right)$\,{.} 
\end{enumerate}
\end{definition}

Note that the minimum of the left-hand side of Condition (2) above is always
non-positive. This is desired, since this expression is supposed to measure our function minimization progress.

Observe that minimizing the function $ \frac{1}{2} z^\top \AA z + b^\top z$
without any constraints on $z$ corresponds to solving a linear system
$\AA z=-b$. So, one can view the $k$-oracle problem as a certain generalization of linear system solving. Specifically, it is a task in which we aim to find a point in the $\ell_\infty$-ball of diameter $k$ around the origin that is closest (in a certain sense) to the solution to that linear system. If $b$ is sufficiently small, the $k$-oracle problem corresponds directly to solving that system. 

One can view the parameter $k$ as the measure of the ``quality" of our $k$-oracle. The
smaller it is, the faster convergence the overall procedure will have. Importantly, however, the value of $k$ impacts only the convergence and not the quality of the final solution. The following theorem makes this relationship precise.
\begin{theorem}\label{thm:newtonoracle}
Let $f:\R^n\rightarrow \R$ be a  function that is second-order robust with
respect to $\ell_\infty$. Let $\mathcal{O}$ be a $k$-oracle for $\{\nabla^2 f(x)
: x \in \R^n\}$, along with an initial point $x_0 \in \R^n$ and an accuracy parameter $\epsilon$. 
Let $R_\infty = \sup_{x : f(x) \leq f(x_0)} \Vert x - x^* \Vert_\infty$, where $x^*$ is a
minimizer of $f$. Then one can produce a solution $x_T$ satisfying $f(x_T)-f(x^*) \leq \epsilon$ using
 $$O\left((k R_\infty+1) \log \left(\frac{f(x_0) - f(x^*)}{\epsilon}\right) \right)$$  calls to $\mathcal{O}$.\end{theorem}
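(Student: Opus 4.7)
The plan is to run an iterative descent algorithm that, at each step $t$, calls the oracle $\oracle$ on the current Hessian $\AA_t = \nabla^2 f(x_t)$ and gradient $\vb_t = \nabla f(x_t)$, obtaining $\tilde z_t$, and then takes an appropriately rescaled step $x_{t+1} = x_t + \alpha\tilde z_t$. Since $\|\tilde z_t\|_\infty$ may be as large as $k$, I would pick $\alpha = 1/\max(k, e^2)$ so that $\|x_{t+1} - x_t\|_\infty \leq 1$ and the SOR property governs the entire segment between $x_t$ and $x_{t+1}$. The goal is then to show that each iteration shrinks the optimality gap $f(x_t) - f(x^*)$ by a factor $1 - \Theta(1/(kR_\infty + 1))$, from which the iteration count follows by taking logarithms.

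The first key step is a per-iteration progress bound. Integrating the SOR upper Hessian estimate gives, for any $w$ with $\|w\|_\infty \leq 1$,
$$f(x_t + w) \leq f(x_t) + \vb_t^\top w + \tfrac{e^2}{2}w^\top \AA_t w.$$
Substituting $w = \alpha\tilde z_t$ and using $e^2\alpha^2 \leq \alpha$ (which is exactly why I chose $\alpha \leq 1/e^2$), the right-hand side simplifies to $f(x_t) + \alpha\,Q_t(\tilde z_t)$, where $Q_t(z) \defeq \tfrac12 z^\top \AA_t z + \vb_t^\top z$. The oracle guarantee then yields
$$f(x_{t+1}) - f(x_t) \;\leq\; \alpha\, Q_t(\tilde z_t) \;\leq\; \tfrac{\alpha}{2}\min_{\|z\|_\infty \leq 1} Q_t(z).$$

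The second, and technically more delicate, step is to bound $\min_{\|z\|_\infty \leq 1} Q_t(z)$ from above in terms of the current gap. Let $\Delta = x^* - x_t$ and $M = \max(1, \|\Delta\|_\infty)$, and consider candidates $z = \alpha' \Delta$ with $\alpha' \leq 1/M$, so that $\|z\|_\infty \leq 1$. On the one hand, convexity of $f$ along the segment to $x^*$ gives $f(x_t + \alpha'\Delta) - f(x_t) \leq -\alpha'(f(x_t) - f(x^*))$. On the other, the SOR \emph{lower} bound gives $f(x_t + \alpha'\Delta) - f(x_t) \geq \alpha' \vb_t^\top \Delta + \tfrac{(\alpha')^2}{2e^2}\Delta^\top \AA_t \Delta$. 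Chaining these two yields an upper bound on $\tfrac{\alpha'}{2e^2}\Delta^\top \AA_t \Delta$ in terms of $-\vb_t^\top \Delta - (f(x_t)-f(x^*))$. Multiplying by $e^2\alpha'$ to convert the SOR coefficient $1/(2e^2)$ into the coefficient $1/2$ that appears in $Q_t$, and using the first-order inequality $\vb_t^\top \Delta \leq -(f(x_t) - f(x^*))$ to absorb the residual linear term, one obtains the clean statement
$$Q_t(\alpha'\Delta) \;\leq\; -\alpha'(f(x_t) - f(x^*)).$$
Taking $\alpha' = 1/M$ gives $\min_{\|z\|_\infty\leq 1}Q_t(z) \leq -(1/M)(f(x_t) - f(x^*))$.

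Combining the two steps yields the one-step contraction $f(x_{t+1}) - f(x^*) \leq (1 - \tfrac{\alpha}{2M})(f(x_t) - f(x^*))$. Since the procedure is monotone decreasing in $f$, $x_t$ stays in the sublevel set, so $\|\Delta\|_\infty \leq R_\infty$ and hence $M \leq 1 + R_\infty$; with $\alpha^{-1} \leq \max(k, e^2) = O(k+1)$, the factor $\alpha^{-1} M$ is $O(kR_\infty + 1)$. Unrolling the recursion and taking logarithms gives the claimed iteration count. The main obstacle is the second step: the discrepancy between the Hessian coefficient $1/2$ inside $Q_t$ and the $1/(2e^2)$ produced by the SOR lower bound has to be resolved by a careful rescaling of the scalar parameter $\alpha'$, together with simultaneous use of first-order convexity to cancel the unwanted linear cross-term; choosing the scalings in the wrong order loses the clean $-\alpha'(f(x_t) - f(x^*))$ bound.
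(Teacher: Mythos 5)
Your overall strategy—take rescaled oracle steps, show a per-iteration contraction of the gap, and conclude by taking logs—matches the paper's. Step 1 is correct. But Step 2 contains a genuine algebraic error. You claim that from
\[
\frac{(\alpha')^2}{2e^2}\Delta^\top \AA_t\Delta \;\leq\; -\alpha'\, \vb_t^\top\Delta - \alpha'\bigl(f(x_t)-f(x^*)\bigr)
\]
one can ``absorb the residual linear term'' via $\vb_t^\top\Delta \leq -(f(x_t)-f(x^*))$ to get $Q_t(\alpha'\Delta)\leq -\alpha'(f(x_t)-f(x^*))$. That direction is backwards. Multiplying the displayed inequality by $e^2$ and adding $\alpha'\vb_t^\top\Delta$ gives
\[
Q_t(\alpha'\Delta) \;\leq\; (1-e^2)\,\alpha'\,\vb_t^\top\Delta - e^2\alpha'\bigl(f(x_t)-f(x^*)\bigr),
\]
and since $1-e^2<0$ and $\vb_t^\top\Delta\leq -(f(x_t)-f(x^*))<0$, the cross-term $(1-e^2)\alpha'\vb_t^\top\Delta$ is \emph{positive} and is bounded \emph{below} (not above) by the convexity inequality. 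The claim fails concretely: for $f(x)=e^x+e^{-x}$, $x_t=1$, $x^*=0$, $\alpha'=1$, one computes $Q_t(\alpha'\Delta)\approx -0.807$ but $-\alpha'(f(x_t)-f(x^*))\approx -1.086$.

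The fix is to rescale the \emph{argument}, not the inequality: use $z=\beta\Delta$ with $\beta=\alpha'/e^2$. Then
\[
Q_t(\beta\Delta) = \frac{1}{e^2}\left[\alpha'\vb_t^\top\Delta + \frac{(\alpha')^2}{2e^2}\Delta^\top\AA_t\Delta\right] \;\leq\; \frac{1}{e^2}\bigl(f(x_t+\alpha'\Delta)-f(x_t)\bigr) \;\leq\; -\beta\bigl(f(x_t)-f(x^*)\bigr),
\]
which is the clean bound you wanted. This substitution-rescaling is exactly the trick in the paper's proof (it plugs $x+(x_L-x)/e^4$ into $f_U$; the factor is $e^4$ there because it compares $f_U$, coefficient $e^2$, to $f_L$, coefficient $1/e^2$, whereas your $Q_t$ has coefficient $1$ and so needs only $e^2$). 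Even with this fix, note that your iteration count comes out as $O(\max(k,1)\max(R_\infty,1)\log(\cdot))$, which is a factor of $k$ worse than the stated $O((kR_\infty+1)\log(\cdot))$ in the corner case $R_\infty\ll 1/k$. The paper avoids this by calling the oracle on the rescaled pair $\bigl(\tfrac{e^2}{k^2}\nabla^2 f(x),\tfrac{1}{k}\nabla f(x)\bigr)$ and stepping by $\tfrac{1}{k}\tilde z$, so the oracle's achieved value is exactly the progress in $f_U$ and the comparison point $x+\Delta/\max(kR_\infty,1)$ always lies in the effective $1/k$-ball; your convention of minimizing $Q_t$ over the full unit ball and then shrinking by $\alpha$ loses the $\alpha$ factor when the optimizer is already nearby.
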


We present the proof of this theorem in Section~\ref{sec:pfofnewtonoracle} of
the Appendix.
In Section~\ref{sec:nearlylineartime} we design an efficient $k$-oracle, with $k=O(\log n)$, for
the family of SDD matrices.
Combining Theorem~\ref{thm:subproblem} with Theorem~\ref{thm:newtonoracle}
immediately gives the proof of Theorem~\ref{thm:sor_sdd}. We remark that while
Theorems~\ref{thm:sor_sdd},~\ref{thm:newtonoracle} are stated and proved for
functions defined over $\R^n$, they can be extended in a straightforward way to
hold when $f$ is defined over an arbitrary closed, convex set.

%!TEX root = main.tex

\section{Matrix Scaling and Balancing}\label{sec:scb}

Having developed our main optimization primitives, we can develop efficient
algorithms for matrix scaling and matrix balancing.
Our approach is essentially the same for both problems, and differs only in technical details.

At the high level, we will construct convex functions with optima corresponding to exact scaling/balancing of
the input matrix. 
Moreover, the gradient of these functions at a specific scaling/balancing of
the matrix will be directly related to the quality of this particular
scaling/balancing. This will allow us to prove that approximately
optimal points correspond to $\eps$-approximate scaling/balancing. The fact that that these functions
are second-order robust with respect to $\ell_\infty$ makes it sufficient to apply the
optimization method from Section~\ref{sec:mainloop}. To complete the algorithm and
its running time analysis, we need then to address two issues.

Firstly, proving running time bounds for this method requires an upper bound on the $\ell_\infty$ radius
of the level set of the initial point, i.e. the $R_\infty$ parameter defined in Theorem~\ref{thm:newtonoracle}.
Depending on the structure of the
matrix, there are several different bounds that one can prove, depending only on parameters
of the original problem. However, the most interesting case is when we are promised
that the exact scaling/balancing of the matrix is ``small'' (in the sense that the
\emph{ratio} between factors is, say, polynomial). In that case, we can regularize the
function to turn this promise into a guarantee for the size of the level set without
sacrificing too much accuracy. Moreover, by using a simple doubling approach, we can make the algorithm 
not require explicit knowledge of the value of this parameter, and it will only appear as a factor in the final running time
of the algorithm.

Secondly, we need to ensure that we can efficiently implement $k$-oracles for the
Hessians of these functions. In our case, this boils down to proving that these Hessians are SDD
matrices with sparsity equal to that of the input matrix, and then build on the existing Laplacian solving work.
For the remainder of this section, we define the convex functions that we need optimize, show how to regularize them, and prove bounds on the corresponding $R_\infty$ parameters. We will describe and analyze the implementation of a $O(\log n)$-oracle in Section~\ref{sec:nearlylineartime}.

\subsection{Matrix Scaling}

We now formally define the scaling problem, along with the notion of $\eps$-scaling.
\begin{defn}[Matrix Scaling]
Let $\AA\in\R^{n\times n}$ be a nonnegative matrix and $r, c\in\R^{n}$ be
vectors such that $\sum_{i=1}^{n} r_i = \sum_{j=1}^{n} c_i$, and $\Vert r \Vert_\infty, \Vert c \Vert_\infty \leq 1$\footnote{In literature we also encounter this problem for non-square matrices; however solving squares is sufficient, since given $\AA\in\R^{n \times c}$, we can reduce to this instance by scaling the square matrix $\begin{bmatrix} \mzero_{c,c} & \AA^\top \\ \AA & \mzero_{r,r} \end{bmatrix}$. The upper bound on $r$ and $c$ is harmless, since for larger values we can always shrink all of $\AA$, $r$, $c$ and $\epsilon$ by the same factor in order to enforce this constraint.}. 
We say that
two nonnegative diagonal matrices $\XX$ and $\YY$ $(r,c)$-scale $\AA$ if the matrix $\MM =
\XX \AA \YY$ satisfies $\MM \allones = r$ and $\MM^\top \allones = c$, i.e. row
$i$ sums to
$r_i$ and column $j$ sums to $c_j$ for every $i,j$.
\end{defn}

\begin{defn}[$\epsilon$-$(r,c)$ scaling]
\label{def:eps_scaling}
Given nonnegative $\AA$ and positive diagonal matrices $\XX, \YY$, we say that $(\XX, \YY)$ is an $\epsilon$-$(r,c)$ scaling (or $\epsilon$-scaling, when $r$ and $c$ are clear from the context) for matrix $\AA$ if the matrix $\MM = \XX \AA \YY$ satisfies
$$  \Vert r_{\MM} - r \Vert_2^2 + \Vert c_{\MM}  - c \Vert_2^2 \leq \epsilon\,{.}$$
\end{defn}

\begin{defn}[Scalable and Almost-Scalable Matrices] A nonnegative matrix $\AA$, is called
$(r,c)$-scalable, if there exist $\XX$ and $\YY$ that $(r,c)$-scale $\AA$. It is called
almost $(r,c)$-scalable if for every $\eps >0$, there exist $\XX_\eps$ and $\YY_\eps$ that
$\eps$-$(r,c)$ scale $\AA$.
\end{defn}

There are well-known necessary and sufficient conditions about the scalability of $\AA$
stated in the following lemma.
\begin{lemma}[\cite{LSW98}]\label{lem:scalcond}
A nonnegative matrix $\AA$ is exactly $(r,c)$-scalable iff for every zero
minor $Z\times L$ of $\AA$,
\begin{enumerate}[label=\normalfont (\arabic*)]
\item $\sum_{i\in Z^c}r_i \geq \sum_{j\in L} c_j$.
\item Equality in (1) holds iff $Z^c\times L^c$ is also a zero minor.
\end{enumerate}
A nonnegative matrix $\AA$ is almost $(r,c)$-scalable iff Condition (1) above holds.
\end{lemma}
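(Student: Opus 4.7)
The plan is to establish both equivalences by studying a convex potential whose stationary points correspond to exact scalings. The necessity direction is a direct summation: if $\XX,\YY$ exactly $(r,c)$-scale $\AA$ and $Z\times L$ is any zero minor, I would sum the entries of $\XX\AA\YY$ over the rectangle $Z^c\times L$ in two different orders:
$$\sum_{j\in L} c_j \;=\; \sum_{j\in L}\sum_{i\in Z^c} (\XX\AA\YY)_{ij} \;\le\; \sum_{i\in Z^c}\sum_j (\XX\AA\YY)_{ij} \;=\; \sum_{i\in Z^c} r_i,$$
using that $A_{ij}=0$ for $i\in Z,\, j\in L$ (to drop rows in $Z$) and the prescribed row/column sums. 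Equality forces every nonzero entry of $\XX\AA\YY$ in rows $Z^c$ to lie in columns $L$, and since $\XX,\YY$ are strictly positive, this means $Z^c\times L^c$ is also a zero minor. This yields (1) and the forward direction of (2).

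For sufficiency I would introduce the convex potential
$$\varphi(x,y) \;=\; \sum_{i,j} A_{ij}\, e^{x_i + y_j} \;-\; r^\top x \;-\; c^\top y,$$
whose gradient vanishes precisely when $\XX = \mdiag(e^x),\,\YY = \mdiag(e^y)$ is an exact scaling, so the problem reduces to asking whether $\inf \varphi$ is finite (almost scalability) and whether it is attained (exact scalability). A standard recession analysis shows $\varphi$ is bounded below iff, for every direction $(a,b)$ with $a_i + b_j \le 0$ on $\supp(\AA)$, one has $r^\top a + c^\top b \le 0$. By LP duality, this condition is equivalent to feasibility of the transportation LP: find nonnegative weights $w_{ij}$ supported on $\supp(\AA)$ with row sums $r$ and column sums $c$. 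A Hall-type argument identifies transportation feasibility with Condition (1) ranging over all zero minors, establishing ``almost $(r,c)$-scalable iff (1)''.

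Finally, to lift ``almost'' to ``exact'' under Condition (2), I would argue that the sublevel sets of $\varphi$ are compact modulo the trivial symmetry $(x,y)\mapsto(x+\alpha\allones, y-\alpha\allones)$. Any direction $(a,b)$ in which $\varphi$ fails to grow must satisfy $a_i+b_j\le 0$ on $\supp(\AA)$ \emph{and} $r^\top a + c^\top b = 0$; I would show that the coordinates where $a,b$ attain their extremes cut out a zero minor $Z\times L$ that is tight in (1). If Condition (2) holds, then $Z^c\times L^c$ is also a zero minor, so the direction decouples $\varphi$ into independent pieces that are constant along the ray, and after quotienting by these symmetries the sublevel sets become compact, yielding a minimizer. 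If (2) fails, moving along that direction strictly decreases the exponential part of $\varphi$ via some nonzero entry in $Z^c\times L^c$ while leaving the linear part invariant, so $\inf\varphi$ is finite but not attained -- i.e.\ $\AA$ is almost scalable but not exactly scalable. The main obstacle is this last step: one must catalogue all directions of degeneracy of $\varphi$, match each to a tight zero minor, and verify that Condition (2) exactly collapses them into the trivial symmetry, so that compactness of the quotient (and hence attainment) follows.
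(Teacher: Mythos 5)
The paper does not supply a proof of this lemma; it is cited from~\cite{LSW98}, and the characterization itself is classical (Menon--Schneider, Rothblum--Schneider). Your proposal is therefore a self-contained argument, following a potential-function route rather than the combinatorial Hall/max-flow approach typical of the original sources. The necessity half is correct, and it becomes complete once you also record the converse direction of the iff in (2), which follows from the same double count: if $Z^c\times L^c$ is also a zero minor, then $\supp(\AA)\subseteq(Z\times L^c)\cup(Z^c\times L)$, so $\sum_{j\in L}c_j=\sum_{i\in Z^c}\sum_{j\in L}(\XX\AA\YY)_{ij}=\sum_{i\in Z^c}r_i$ with equality.

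There is a genuine gap where you acknowledge one, and two points deserve repair. First, the implication ``almost scalable $\Rightarrow\varphi$ bounded below'' is not direct---a small gradient of a convex function does not bound the function from below. The clean route is to apply your double count to an $\eps$-scaling and let $\eps\to 0$, which yields Condition (1) outright, and then feed (1) back through the recession/LP-duality chain to conclude $\varphi$ is bounded below. Second, and more seriously, the step from (1)+(2) to attainment of $\inf\varphi$ is not yet a proof. A flat recession direction $(a,b)$, with $a_i+b_j\le 0$ on $\supp(\AA)$ and $r^\top a+c^\top b=0$, generally takes more than two values on each side, so a single extremal zero minor does not decouple $\varphi$ as your sketch suggests. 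One must peel off the level sets of $a$ and $b$ one threshold at a time, invoking (2) at each intermediate cut to force zero blocks on both complementary rectangles, and only then conclude that the only flat directions are multiples of $(\allones,-\allones)$; after that, compactness modulo lineality yields a minimizer and hence an exact scaling. As written, that catalogue of degeneracies is asserted rather than carried out.
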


We will cast matrix scaling as a convex optimization problem and show that applying the
method from section~\ref{sec:mainloop} yields a good approximate scaling.

\begin{theorem}\label{thm:scaling}
Let $\AA$ be a matrix, that has an $(r,c)$ scaling $(\UU^*, \VV^*)$. Then, we can compute an $\epsilon$-$(r,c)$ scaling of $\AA$ in time
$$\Otil\left(  m \ln(\kappa(\UU^*) + \kappa(\VV^*))  \ln^2 
      (\uA/\epsilon)\right)\,{.}$$
\end{theorem}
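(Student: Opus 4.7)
The plan is to cast $(r,c)$-scaling as the minimization of the convex potential
\begin{equation*}
f(x, y) \;\defeq\; \sum_{(i,j)\in\supp(\AA)} A_{ij}\, e^{x_i - y_j} \;-\; r^\top x \;+\; c^\top y
\end{equation*}
on $\R^{2n}$ and then feed $f$ directly into Theorem~\ref{thm:sor_sdd}. The sign convention on $y$ is chosen so that, writing $\MM \defeq \mdiag(e^x)\,\AA\,\mdiag(e^{-y})$, one has $(\nabla_x f)_i = (r_\MM)_i - r_i$ and $(\nabla_y f)_j = c_j - (c_\MM)_j$. Hence $\norm{\nabla f}_2^2 = \norm{r_\MM - r}_2^2 + \norm{c_\MM - c}_2^2$, so producing an $\eps$-$(r,c)$ scaling in the sense of Definition~\ref{def:eps_scaling} reduces to driving $\norm{\nabla f}_2^2 \leq \eps$. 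The output $\XX, \YY$ is read off from the final iterate by exponentiation.

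A direct computation shows
\begin{equation*}
\nabla^2 f(x,y) \;=\; \begin{bmatrix} \mdiag(r_\MM) & -\MM \\ -\MM^\top & \mdiag(c_\MM) \end{bmatrix},
\end{equation*}
which is the Laplacian of the bipartite weighted graph whose edge $(i,j)$ carries weight $M_{ij}$. In particular it is SDD with nonpositive off-diagonals and has $O(m)$ nonzero entries. Under any $\ell_\infty$-perturbation of $(x,y)$ of radius at most $1$, every edge weight $M_{ij}$ scales by a factor in $[e^{-2}, e^2]$; since a weighted graph Laplacian is a nonnegative combination of PSD edge Laplacians, the edgewise bound transfers to the spectral ordering, giving $\nabla^2 f(x_1, y_1) \Approx{2} \nabla^2 f(x_2, y_2)$. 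Thus $f$ is second-order robust with respect to $\ell_\infty$, and the hypotheses of Theorem~\ref{thm:sor_sdd} are satisfied.

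It remains to control the parameters entering the runtime bound. Starting from $(x_0, y_0) = \vzero$, we bound the initial gap by $f(0) - f(x^*, y^*) = O(\uA \cdot (1 + R_\infty))$, since the exponential term equals $\uA$ at $0$ and $\one^\top r \leq \uA$ at the optimum, while the linear part contributes at most $R_\infty(\norm{r}_1 + \norm{c}_1)$. Thanks to the gauge freedom $(x,y)\mapsto (x+t\one, y+t\one)$, which preserves $f$ because $\one^\top r = \one^\top c$, we may center the exact-scaling optimum so that $\max(\norm{x^*}_\infty, \norm{y^*}_\infty) = O(\log(\kappa(\UU^*) + \kappa(\VV^*)))$. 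This only bounds the $\ell_\infty$-distance to a particular optimum, however, not the full sublevel-set diameter $R_\infty$. To bridge this gap we add a regularizer $\frac{\mu}{2}(\norm{x}_2^2 + \norm{y}_2^2)$ with $\mu = \poly(\eps/\uA)$ small: it breaks the gauge, forces the sublevel set of $f_\mu(0)$ to have $\ell_\infty$-radius $O(\log\kappa^* + \log(\uA/\eps))$, and perturbs both the optimizer and the gradient by at most $O(\sqrt\eps)$. Since $\kappa^* \defeq \kappa(\UU^*) + \kappa(\VV^*)$ is unknown, we combine this with a doubling search over an explicit $\ell_\infty$-box radius (with verifiable stopping criterion $\norm{\nabla f}_2^2 \leq \eps$), which costs only an additional $\log\kappa^*$ factor. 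Because iterates stay inside the sublevel set of $f(0)$, the Hessian's operator norm is controlled by its trace, hence by $O(\uA)$; by smoothness this converts $f - f^* \leq \eps' = \Theta(\eps/\uA)$ into the desired $\norm{\nabla f}_2^2 \leq \eps$. Plugging $R_\infty = O(\log\kappa^* + \log(\uA/\eps))$ and $\log((f(0)-f^*)/\eps') = O(\log(\uA/\eps))$ into Theorem~\ref{thm:sor_sdd} yields the claimed $\Otil(m\,\log(\kappa(\UU^*) + \kappa(\VV^*))\,\log^2(\uA/\eps))$ runtime.

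The main technical obstacle is the regularization argument for $R_\infty$: the penalty must be simultaneously (i)~weak enough that at an approximate minimizer of $f_\mu$ the regularizer's contribution to $\nabla f$ is $O(\sqrt\eps)$, and (ii)~strong enough that the sublevel set of $f_\mu(0)$ is contained in an $\ell_\infty$-ball whose radius depends only logarithmically on $\kappa^*$ and $\uA/\eps$. Coupled with the doubling scheme required to avoid a priori knowledge of $\kappa^*$, and the verification that the Hessian's operator norm really is $O(\uA)$ along the entire trajectory and not merely at $x^*$, this is the delicate part of the argument; the remainder is a routine reduction to the abstract machinery already developed in Sections~\ref{sec:mainloop} and~\ref{sec:nearlylineartime}.
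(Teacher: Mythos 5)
The high-level skeleton is the same as the paper's: write $(r,c)$-scaling as minimization of the exponential potential $f$, verify convexity, second-order robustness, and the SDD structure of the Hessian (your derivation here is correct and matches Lemma~\ref{lem:fscallem}), regularize to control $R_\infty$, feed into Theorem~\ref{thm:sor_sdd}, and convert a function-value gap into an $\eps$-scaling via the gradient. But the regularization step is where the proof breaks, and in a way that you cannot patch without changing the regularizer.

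You add the quadratic penalty $\frac{\mu}{2}(\norm{x}_2^2 + \norm{y}_2^2)$ with $\mu = \poly(\eps/\uA)$ and claim it simultaneously (i)~bounds the $\ell_\infty$-radius of the $f_\mu(0)$-sublevel set by $O(\log\kappa^* + \log(\uA/\eps))$ and (ii)~perturbs the gradient at the regularized optimizer by only $O(\sqrt\eps)$. These two claims are incompatible for any choice of $\mu$. From (i): every $z$ in the sublevel set satisfies $\frac{\mu}{2}\norm{z}_2^2 \leq f_\mu(0) - f(z) \leq \uA - f^*$, which gives only $\norm{z}_\infty \leq \norm{z}_2 \leq \sqrt{2(\uA - f^*)/\mu}$; with $\mu = \poly(\eps/\uA)$ this is $\poly(\uA/\eps)$, not logarithmic, so $R_\infty$ is polynomially large and the $m\,R_\infty$ term in Theorem~\ref{thm:sor_sdd} kills the claimed runtime. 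Conversely, if you push $\mu$ up to $\Theta(\uA/R^2)$ with $R$ logarithmic so that the sublevel set fits in a radius-$R$ box, the stationarity condition $\nabla f(z_\mu^*) = -\mu z_\mu^*$ at the regularized minimizer forces $\norm{\nabla f(z_\mu^*)}_2 = \mu\norm{z_\mu^*}_2$, which is $\Omega(\uA/R)$ rather than $O(\sqrt\eps)$, so the regularized optimum is no longer an $\eps$-scaling. There is no sweet spot.

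The paper's Lemma~\ref{lem:scalreg} uses the \emph{exponential} regularizer $\frac{\eps^2}{36n^2 e^B}\sum_i(e^{x_i}+e^{-x_i})$ (and similarly in $y$) precisely to escape this tradeoff. Because the penalty grows exponentially rather than quadratically, a coefficient of magnitude $\poly(\eps/(n\uA))\cdot e^{-B}$ contributes at most $O(\eps^2/n)$ over the ball $\norm{z}_\infty \leq B$ — so negligible at the approximate optimizer — yet still forces every point in the $\widetilde f(0)$-sublevel set to satisfy $|z_i| \leq \ln\left(O(n^2 e^B \uA / \eps^2)\right) = O(B \log(n\uA/\eps))$. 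It is this exponential asymmetry, not tunability of a quadratic $\mu$, that delivers a logarithmic $R_\infty$. Until you replace your regularizer with one of at least exponential growth (and verify it still preserves SDD structure and second-order robustness, which the paper's does because its Hessian contribution is diagonal and only changes by a factor $e^{\pm 1}$ over a unit $\ell_\infty$-ball), the runtime claim does not go through.
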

This implies that if $\UU^*$ and $\VV^*$ are, say, quasi-polynomially bounded, we can find an
approximate scaling in nearly linear time. 
If fact, we can generalize this statement to obtain a similar result for the case of
approximate scalings. This is made precise in Theorem~\ref{thm:escaling}.

\subsubsection{Matrix Scaling via Convex Optimization}
Recall that we want to encode the matrix scaling problem as a an instance of minimizing of a certain convex function. Given the input matrix $\AA$, the function we want to consider is: 
\begin{equation}\label{eq:scaling_fn}
f(x,y) = \sum_{1 \leq i,j \leq n} \AA_{ij} e^{x_i - y_j} - \left(\sum_{1 \leq i \leq n}
r_ix_i - \sum_{1 \leq j \leq n} c_jy_j\right)\,{.}
\end{equation}
We want to argue now that computing an (approximate) scaling of the matrix $\AA$ can indeed be recovered from an (approximate) minimum of the above function. Specifically, we want to establish the following theorem. 
\begin{theorem}\label{thm:escaling}
Suppose that there exist a point $z^*_\epsilon = (x^*_\epsilon, y^*_\epsilon)$ for which
$f(z^*_\epsilon)-f^* \leq \epsilon^2/(3n)$ and $\|z^*_\epsilon \|_\infty \leq B$. Then we can
compute an $\epsilon$-$(r,c)$ scaling of $\AA$ in time
$$\Otil\left(  m B  \ln^2 
      (\uA/\epsilon)\right)\,{.}$$
\end{theorem}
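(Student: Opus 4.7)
The plan is to obtain Theorem~\ref{thm:escaling} as a direct application of Theorem~\ref{thm:sor_sdd} to the convex function $f$ from~\eqref{eq:scaling_fn}, supplemented by a smoothness-based post-processing step that converts function-value suboptimality into a bound on $\|\nabla f\|_2$. A direct computation gives
\begin{equation*}
\nabla_x f(x,y) \;=\; r_\MM - r, \qquad \nabla_y f(x,y) \;=\; c - c_\MM,
\end{equation*}
where $\MM = \mdiag(e^x)\,\AA\,\mdiag(e^{-y})$, so $\|\nabla f(z)\|_2^2 = \|r_\MM - r\|_2^2 + \|c_\MM - c\|_2^2$. Consequently any $z = (x,y)$ with $\|\nabla f(z)\|_2^2 \le \eps$ yields an $\eps$-$(r,c)$ scaling $(\XX, \YY) = (\mdiag(e^x), \mdiag(e^{-y}))$ in the sense of Definition~\ref{def:eps_scaling}, and it suffices to produce such a $z$ within the claimed running time.

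To invoke Theorem~\ref{thm:sor_sdd}, I first verify its hypotheses on $f$. Its Hessian has diagonal blocks $\mdiag(r_\MM)$ and $\mdiag(c_\MM)$ and off-diagonal block $-\MM$; equivalently, it is the Laplacian of the weighted bipartite graph on $\{x_i\}\cup\{y_j\}$ with edge weights $\MM_{ij}$. In particular it is SDD with nonpositive off-diagonals and has the sparsity pattern of $\AA$, for $m$ nonzeros. Second-order robustness with respect to $\ell_\infty$ follows because shifting $z$ by at most $1$ in $\ell_\infty$ changes every $e^{x_i - y_j}$, and hence every Hessian entry, by a factor in $[e^{-2}, e^2]$, yielding $\nabla^2 f(z) \approx_2 \nabla^2 f(w)$.

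To control the $R_\infty$ parameter of Theorem~\ref{thm:sor_sdd}, I would restrict the optimization to the box $\mathcal{B} = \{z : \|z\|_\infty \le B+1\}$, a closed convex set to which the theorem extends as remarked after Theorem~\ref{thm:newtonoracle}. Starting from $z_0 = \vzero \in \mathcal{B}$, the level set's $\ell_\infty$-radius is bounded by the box diameter $O(B)$, and by hypothesis $z^*_\eps \in \mathcal{B}$, so the restricted minimum $\widetilde{f}^*$ satisfies $\widetilde{f}^* \le f(z^*_\eps) \le f^* + \eps^2/(3n)$. Driving the function gap below $\eps^2/(3n)$ requires $\widetilde{O}(B\log(\uA/\eps))$ Newton iterations, each realized via the $O(\log n)$-oracle of Section~\ref{sec:nearlylineartime} at cost $\widetilde{O}(m\log(\uA/\eps))$, yielding total time $\widetilde{O}(mB\log^2(\uA/\eps))$. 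If $B$ is not known in advance, trying $B' \in \{1, 2, 4, \ldots\}$ and certifying the resulting scaling after each round (at cost $O(m)$) multiplies this total only by a constant.

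Finally, the calibration $\eps^2/(3n)$ is exactly what is needed to turn $f(z) - f^* \le 2\eps^2/(3n)$ into $\|\nabla f(z)\|_2^2 \le \eps$. The standard smoothness inequality $\|\nabla f(z)\|_2^2 \le 2L(f(z) - f^*)$ holds whenever $f$ is $L$-smooth along the short segment $[z,\, z - \nabla f(z)/L]$; here $\nabla^2 f$ is a bipartite Laplacian, so $\|\nabla^2 f(z)\|_{\mathrm{op}} \le 2(\|r_\MM\|_\infty + \|c_\MM\|_\infty)$, and second-order robustness promotes this pointwise bound into local smoothness throughout a unit $\ell_\infty$-neighborhood. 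Together with $\|r\|_\infty, \|c\|_\infty \le 1$ and convexity-based control of the marginals of $\MM$ at near-optimal points, this gives an effective smoothness of $O(n)$, and hence $\|\nabla f(z)\|_2^2 = O(\eps^2) \le \eps$. The main subtlety in formalizing this plan lies precisely in that last step: since the bare Hessian entries grow exponentially with $\|z\|_\infty$, one must combine the box constraint with the fact that near-optimal $f$-value forces $\|r_\MM\|_\infty$ and $\|c_\MM\|_\infty$ to stay close to $\|r\|_\infty, \|c\|_\infty$, so that the effective smoothness really matches the $\Theta(n)$ figure built into the $\eps^2/(3n)$ target.
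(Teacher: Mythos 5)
Your skeleton matches the paper's: verify that $f$ from \eqref{eq:scaling_fn} is SOR with an SDD Hessian, control $R_\infty$, invoke Theorem~\ref{thm:sor_sdd}, and then convert the function-value gap into a gradient bound. Two of your substitutions depart from the paper, though. The first, restricting to the box $\mathcal{B}=\{z:\|z\|_\infty\le B+1\}$ in place of the paper's regularization (Lemma~\ref{lem:scalreg}), is a legitimate alternative and even tightens $R_\infty$ from $O(B\log(n\uA/\eps))$ to $O(B)$; but note that the $k$-oracle built in Section~\ref{sec:nearlylineartime} is for unconstrained quadratic subproblems over a centered $\ell_\infty$-ball, so the constrained version you need is not something the paper supplies, and the regularization is precisely how the paper avoids having to build it. This is a gap in your plan's claim of ``direct application.''

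The more serious issue is the smoothness argument you use in place of Lemma~\ref{lem:scaling_accuracy}, and you flag it yourself as ``the main subtlety.'' The inequality $\|\nabla f(z)\|_2^2\le 2L\bigl(f(z)-f^*\bigr)$ requires $L\gtrsim\|\nabla^2 f\|_{\mathrm{op}}$ along the short descent segment, and $\|\nabla^2 f(z)\|_{\mathrm{op}}\le 2\max(\|r_\MM\|_\infty,\|c_\MM\|_\infty)$. From the box constraint and $f(z)\le f^*+\eps^2/(3n)\le\uA+1$ one only gets $\sum_{ij}\MM_{ij}=f(z)+\langle r,x\rangle-\langle c,y\rangle\le \uA+1+2n(B+1)$, giving effective smoothness on the order of $\uA+nB$, not $n$. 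Plugging this into the smoothness bound yields $\|\nabla f\|_2^2\lesssim(\uA/n+B)\eps^2$, which is not $\le\eps$ in general. The assertion that near-optimality ``forces'' $\|r_\MM\|_\infty=O(1)$ is exactly what needs proof, and the only argument I see for it is the paper's own one-dimensional computation in the proof of Lemma~\ref{lem:scaling_accuracy}: the restriction $g(\delta)=f(x+\delta e_i,y)=(r_\MM)_ie^\delta-r_i\delta+\mathrm{const}$ has optimal decrease $g(0)-\min_\delta g=(r_\MM)_i-r_i-r_i\ln\bigl((r_\MM)_i/r_i\bigr)$, which grows linearly in $(r_\MM)_i$ once $(r_\MM)_i/r_i$ exceeds a constant, so a large $(r_\MM)_i$ would contradict $f(z)-f^*\le\eps^2/(3n)$. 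But once you write that out you have reproved Lemma~\ref{lem:scaling_accuracy} by the paper's method, and the smoothness framing is a detour. Either prove that bound explicitly (which collapses to the coordinate-wise argument) or cite Lemma~\ref{lem:scaling_accuracy} directly; as written, the key inequality is asserted rather than established.
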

The proof is straightforward given the lemmas below and is presented in Section~\ref{sec:mainscal} of the Appendix.
First, we will prove that approximate optimality of $f$ implies an approximate scaling of
the matrix.
\begin{lemma}\label{lem:scaling_accuracy}
Let $\AA$ be an $\epsilon$-scalable matrix. Let $f^* = \inf_{(x,y)} f(x,y)$. Then, a pair of vectors $(x,y)$ satisfying $f(x,y) - f^* \leq \epsilon^2 / 3n$, for $0 < \epsilon \leq 1$, yields an $\epsilon$-$(r,c)$ scaling of $\AA$:
$$\MM = \mdiag(\exp(x))\cdot \AA \cdot \mdiag(\exp(y))\,{.}$$
\end{lemma}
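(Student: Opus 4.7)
My plan is to bound the scaling error through $\|\nabla f(x,y)\|_2$, and then to control that gradient norm using the suboptimality gap $f(x,y) - f^*$. Direct differentiation of \eqref{eq:scaling_fn} gives
$$\frac{\partial f}{\partial x_i} = (r_\MM)_i - r_i, \qquad \frac{\partial f}{\partial y_j} = c_j - (c_\MM)_j,$$
so $\|\nabla f(x,y)\|_2^2 = \|r_\MM - r\|_2^2 + \|c_\MM - c\|_2^2$ is exactly the quantity Definition~\ref{def:eps_scaling} asks us to bound by $\epsilon^2$.

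To relate $\|\nabla f\|_2$ to the suboptimality, I would exploit that $f$ separates in $x$ when $y$ is held fixed: writing $f(\,\cdot\,, y) = \sum_i [A_i(y)\,e^{x_i} - r_i x_i] + \mathrm{const}$ with $A_i(y) = \sum_j \AA_{ij} e^{-y_j}$, each $x_i$ can be minimized independently. An elementary computation shows the per-coordinate improvement is exactly the generalized KL divergence $D(r_i\,\|\,(r_\MM)_i) := r_i \ln(r_i/(r_\MM)_i) - r_i + (r_\MM)_i \ge 0$. Summing and arguing symmetrically over $y$ yields
$$\sum_i D(r_i\,\|\,(r_\MM)_i) \le f(x,y) - f^*, \qquad \sum_j D(c_j\,\|\,(c_\MM)_j) \le f(x,y) - f^*.$$
Applying the Pinsker-type inequality $D(a\,\|\,b) \ge (a-b)^2/(2\max(a,b))$, together with the hypothesis $\|r\|_\infty, \|c\|_\infty \le 1$, gives
$$\|r_\MM - r\|_2^2 + \|c_\MM - c\|_2^2 \;\le\; 4(1+M)\bigl(f(x,y)-f^*\bigr),$$
where $M := \max_{i,j}\{(r_\MM)_i, (c_\MM)_j\}$.

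The main obstacle, and the reason for the factor $3n$ in the hypothesis, is controlling $M$. For this I would use the one-step translation $x \mapsto x - \allones$: writing $\sigma := \sum_{i,j}\MM_{ij}$, a direct calculation gives $f(x-\allones, y) = f(x,y) - (1 - e^{-1})\sigma + \sum_i r_i$. Since $f(x-\allones, y) \ge f^*$ and $f(x,y)-f^* \le \epsilon^2/(3n)$, this forces $(1 - e^{-1})\sigma \le \sum_i r_i + \epsilon^2/(3n) \le n + 1$, and therefore $M \le \sigma = O(n)$. Substituting this crude bound yields a first-pass estimate $\|\nabla f\|_2^2 = O(\epsilon^2)$, hence $\|\nabla f\|_\infty \le \|\nabla f\|_2 = O(\epsilon) \le 1$. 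I would then bootstrap: this refined bound, combined with $(r_\MM)_i = r_i + (\nabla f)_{x_i}$ and the analogous identity for columns, gives $M \le 1 + \|\nabla f\|_\infty \le 2$. Plugging back into the displayed inequality yields $\|\nabla f\|_2^2 \le 12\,(f(x,y)-f^*) \le 4\epsilon^2/n \le \epsilon^2$ for $n \ge 4$ (the remaining tiny cases being trivial), closing the argument.
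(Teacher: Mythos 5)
Your proof and the paper's rest on the same per-coordinate improvement computation: optimizing $x_i$ alone, with $y$ frozen, decreases $f$ by exactly $(r_\MM)_i - r_i - r_i\ln\bigl((r_\MM)_i/r_i\bigr)$. You cleanly recognize this as the generalized KL divergence $D(r_i\,\|\,(r_\MM)_i)$; the paper writes out the same expression and lower-bounds it by explicit Taylor estimates, splitting at $(r_\MM)_i/r_i \approx 2.62$. Beyond that the packaging genuinely diverges. The paper argues by contradiction from a single bad row, using the linear bound $t - \ln(1+t) \geq t/3$ for large $t$ to neutralize the case of a large row sum. You instead sum the per-coordinate improvements over all rows (using separability of $f$ in $x$ for fixed $y$), bound the sum by the optimality gap, and pass to $\ell_2$ via Pinsker, which exposes $M = \max\{(r_\MM)_i, (c_\MM)_j\}$. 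Controlling $M$ via the translation $x \mapsto x - \allones$ (which bounds the total mass $\sigma$) and then bootstrapping is a genuine addition not present in the paper's proof.

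The gap is in the bootstrap constants. From $\sigma \leq (n+1)/(1 - e^{-1}) \approx 1.58(n+1)$ the first pass gives $\|\nabla f\|_2^2 \leq 4\max(1,\sigma)\,(f-f^*)$, which is on the order of $2\epsilon^2$, not $\epsilon^2$; hence $\|\nabla f\|_\infty$ need not be $\leq 1$ when $\epsilon$ is close to $1$, and your claimed $M \leq 2$ does not follow after one pass. Iterating the bootstrap does converge, but only closes the argument for $n$ at least around $4$; the ``remaining tiny cases'' are not actually trivial. Indeed the lemma as stated fails at $n=1$: with $\AA = [1]$, $r = c = [1]$, $\epsilon = 1$ and $x - y = \ln 2.05$, one has $f - f^* \approx 0.332 < 1/3$ while $\|r_\MM - r\|_2^2 + \|c_\MM - c\|_2^2 \approx 2.2 > 1$. (The paper's own proof has a comparable arithmetic slip in the $t^2/4$ case, so the stated constant $\epsilon^2/(3n)$ is too generous; tightening it to, say, $\epsilon^2/(9n)$ makes both your argument and the paper's close and does not affect the downstream running-time claims.) So the structure of your argument is sound, but you must run the bootstrap to its fixed point rather than stopping after one step, and acknowledge that the hypothesis constant needs a small adjustment to cover all $n \geq 1$ and $\epsilon \leq 1$.
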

Note that we compare the value of $f(x,y)$ to its infimum, as for the case of almost scalable
matrices it is possible that this value is attained only in the limit. 

To prove the above lemma, we first look at the first and second order derivatives of $f$.
\begin{lemma}\label{lem:gradhessscal}
Let $\MM$ be the matrix obtained by scaling $\AA$ with vectors $(x,y)$, i.e. $\MM = \mdiag(\exp(x)) \cdot \AA \cdot \mdiag(\exp(y))$. The gradient and Hessian of $f$ satisfy the identities:
\begin{align*}
\nabla f(x,y) &= 
\begin{bmatrix*}[r] r_\MM\\ -c_\MM 
\end{bmatrix*} -
\begin{bmatrix*}[r]r \\
-c
\end{bmatrix*},\\
\nabla^2 f(x,y) &= 
\begin{bmatrix*}[r] \mdiag(r_\MM) & -\MM\\ -\MM^\top & \mdiag(c_\MM) 
\end{bmatrix*}.
\end{align*}
\end{lemma}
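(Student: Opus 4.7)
The proof is a direct computation: both identities follow by taking partial derivatives of the expression in \eqref{eq:scaling_fn} one coordinate at a time, together with the observation that the $(i,j)$-entry of $\MM$ is precisely $\AA_{ij}e^{x_i-y_j}$ under the convention used in the lemma, so that $(r_\MM)_i = \sum_j \AA_{ij}e^{x_i-y_j}$ and $(c_\MM)_j = \sum_i \AA_{ij}e^{x_i-y_j}$. First I would compute $\partial f / \partial x_i = \sum_j \AA_{ij}e^{x_i - y_j} - r_i$ and $\partial f / \partial y_j = -\sum_i \AA_{ij}e^{x_i - y_j} + c_j$, since the exponential factor differentiates to itself (times $+1$ for $x_i$ and $-1$ for $y_j$) and only the linear term $r_ix_i - c_jy_j$ contributes constants. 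Recognizing the inner sums as the $i$-th row sum and $j$-th column sum of $\MM$ respectively, and stacking the $x$- and $y$-components into a single block vector, yields the claimed gradient identity.

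For the Hessian I would proceed analogously, differentiating the gradient coordinate by coordinate. Because only the terms whose first index is $i$ depend on $x_i$, the mixed partial $\partial^2 f / \partial x_i \partial x_{i'}$ vanishes for $i \ne i'$, and $\partial^2 f / \partial x_i^2 = \sum_j \AA_{ij}e^{x_i-y_j} = (r_\MM)_i$; this produces the diagonal upper-left block $\mdiag(r_\MM)$. The identical argument applied to $y$ gives the diagonal lower-right block $\mdiag(c_\MM)$. For the off-diagonal blocks, $\partial^2 f / \partial x_i \partial y_j = -\AA_{ij}e^{x_i-y_j} = -\MM_{ij}$, which supplies the block $-\MM$ in the $(x,y)$ position; the $(y,x)$ block is then $-\MM^\top$ by symmetry of the Hessian. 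Assembling the four blocks in the stated order completes the verification.

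There is no substantive obstacle here: this is a routine partial-derivative calculation. The only thing to watch is that the sign conventions are applied consistently so that the ``multiplicative'' definition $\MM = \mdiag(\exp(x))\cdot\AA\cdot\mdiag(\exp(y))$ used in the lemma statement matches the ``additive'' exponents $x_i - y_j$ appearing in $f$, ensuring that the row and column sums of $\MM$ truly coincide with the exponential sums produced by differentiating $f$; once this matching is fixed, the blockwise identification above is immediate.
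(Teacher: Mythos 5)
Your proposal is correct. The paper states Lemma~\ref{lem:gradhessscal} without proof, and the direct coordinate-by-coordinate differentiation you carry out is exactly what the verification would be; you also correctly flag the small sign mismatch between $\MM = \mdiag(\exp(x))\,\AA\,\mdiag(\exp(y))$ and the exponent $x_i - y_j$ in \eqref{eq:scaling_fn}, and resolve it the only way consistent with the stated identities, namely $\MM_{ij} = \AA_{ij}e^{x_i - y_j}$.
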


We can observe that any $(x,y)$ for which $\nabla f(x,y)$ is equal to $0$ yields diagonal matrices that exactly scale $\AA$. Moreover, this statement also holds in an approximate sense. One can prove that a large gradient in $\ell_2$ norm implies that the current point is far from optimal in function value. Making this statement precise, allows us to prove Lemma~\ref{lem:scaling_accuracy}. The technical details are presented in Section~\ref{sec:pfofscaling_accuracy} of the appendix.

\subsubsection{Regularization for Solving via Box-Constrained Newton Method}

It is straightforwards to verify that the function we are minimizing (defined in
Equation~\ref{eq:scaling_fn}), satisfies the requirements necessary for us to be able to apply the tools
from Section~\ref{sec:mainloop}.

\begin{lemma}\label{lem:fscallem}
The function $f$ defined in \eqref{eq:scaling_fn} is convex, second-order robust with respect to $\ell_\infty$,
and its Hessian is \SDDM.
\end{lemma}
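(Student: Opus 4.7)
The plan is to verify the three claims—convexity, SDDM Hessian, and second-order robustness—directly from the explicit formula for $\nabla^2 f$ given in Lemma~\ref{lem:gradhessscal}. Writing $\MM = \mdiag(\exp(x))\,\AA\,\mdiag(\exp(y))$, which is entrywise nonnegative, we have
$$\nabla^2 f(x,y)=\begin{bmatrix} \mdiag(r_\MM) & -\MM\\ -\MM^\top & \mdiag(c_\MM)\end{bmatrix}.$$
The off-diagonals are exactly $-\MM_{ij}\le 0$, and within a row the absolute values of the off-diagonals in the $x_i$-row sum to $\sum_j \MM_{ij}=(r_\MM)_i$, which equals the diagonal entry; symmetrically for the $y_j$-rows. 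Hence $\nabla^2 f$ is SDD with nonpositive off-diagonals (in fact, it is the Laplacian of the bipartite graph on $[n]\sqcup[n]$ with edge weights $\MM_{ij}$), which is precisely the SDDM structure required.

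For convexity, I would compute the quadratic form directly: for any $v=(v_x,v_y)\in\R^{2n}$, expanding the blocks gives
$$v^\top \nabla^2 f(x,y)\,v=\sum_{i,j}\MM_{ij}\bigl(v_{x,i}-v_{y,j}\bigr)^{2}\ge 0,$$
so $\nabla^2 f(x,y)\succeq 0$ at every point and $f$ is convex. (Alternatively, this follows from observing that $f$ is a nonnegative combination of the convex functions $(x,y)\mapsto e^{x_i-y_j}$ plus a linear term.)

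For second-order robustness with respect to $\ell_\infty$, suppose $(x',y')$ satisfies $\|(x',y')-(x,y)\|_\infty\le 1$. Writing $\MM'$ for the scaled matrix at $(x',y')$, each entry satisfies
$$\MM'_{ij}=e^{(x'_i-x_i)-(y'_j-y_j)}\,\MM_{ij}\in\bigl[e^{-2},\,e^{2}\bigr]\cdot\MM_{ij},$$
because both $|x'_i-x_i|\le 1$ and $|y'_j-y_j|\le 1$. Plugging this into the quadratic form identity above gives, for every $v$,
$$e^{-2}\,v^\top\nabla^2 f(x,y)\,v \;\le\; v^\top\nabla^2 f(x',y')\,v \;\le\; e^{2}\,v^\top\nabla^2 f(x,y)\,v,$$
which is exactly $\nabla^2 f(x,y)\approx_2\nabla^2 f(x',y')$ in the sense of Definition~\ref{def:sor}.

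None of the steps poses a real obstacle: the whole lemma is essentially a bookkeeping check once one recognizes the Hessian as a weighted bipartite Laplacian. The only place one has to be a touch careful is the factor-of-$e^2$ constant in the second-order-robust claim (a coordinatewise change of $1$ in both $x$ and $y$ can combine to shift an entry of $\MM$ by $e^2$, not $e^1$), which is what makes the $\approx_2$ tolerance in Definition~\ref{def:sor} the right one.
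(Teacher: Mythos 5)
Your proof is correct and takes essentially the same approach as the paper: both identify the Hessian from Lemma~\ref{lem:gradhessscal} as a bipartite Laplacian, use the quadratic-form identity $\sum_{i,j}\MM_{ij}(v_i-v_j)^2$ to get convexity and SDDM structure, and argue second-order robustness by noting that an $\ell_\infty$-ball step changes each $\MM_{ij}$ by a factor in $[e^{-2},e^2]$. You are a bit more explicit than the paper about checking the SDD row-sum condition, but the underlying argument is identical.
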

\begin{proof}
The Hessian of the function $f$ (cf. Lemma \ref{lem:gradhessscal}) is clearly a Laplacian matrix. 
Therefore, it is positive semi-definite and thus $f$ is convex.  To prove that it is second-order
robust, we notice that adding some $z$ with $\|z\|_\infty\leq 1$ to the current scaling
corresponds to multiplying each row and column by some factor between $1/e$ and $e$.  By
writing down the quadratic form of the Hessian, $v^\top \nabla^2 f(x) v = \sum_{i,j} \MM_{ij}
(v_i - v_j)^2$, we observe that each $\MM_{ij}$ will only be multiplied by some factor
between $1/e^2$ and $e^2$, proving that $$\frac{1}{e^2}\nabla^2 f(x) \pleq \nabla f(x + z)
\pleq e^2 \nabla^2 f(x),$$ concluding the proof.
\end{proof}

One should observe, however, that Theorem~\ref{thm:newtonoracle} requires bounding the radius of the entire level set containing our initial point and not merely the distance to some (approximate) minimizer of our function $f$. This means that the existence of an (approximate) minimizer that is close to our initial point is not sufficient to apply Theorem \ref{thm:newtonoracle}. To circumvent that problem, we regularize the function $f$ by adding to it a term that, on one hand,  has a relatively small impact on the additive error we can achieve, but, on the other hand, ensures that the entire relevant level set is contained in some sufficiently small $\ell_\infty$-ball around our initial point. The following lemma makes these statements precise. Its proof appears in Section~\ref{sec:pfofscalreg} of the Appendix.

\begin{lemma}\label{lem:scalreg}
Let $z^*_\epsilon = (x^*_\epsilon, y^*_\epsilon)$ be a point for which
$f(z^*_\epsilon)-f^* \leq \epsilon^2/(3n)$ and $\|z^*_\epsilon \|_\infty \leq B$. Then, the regularization of $f$ defined as
\begin{equation}\label{eq:reg_scal}
\widetilde f(x,y) = f(x,y) + \frac{\eps^2}{36n^2 e^B}\left(\sum_i(e^{x_i} +
e^{-x_i})+\sum_j(e^{y_j} - e^{-y_j})\right)
\end{equation}
satisfies the following properties
\begin{enumerate}[label=\normalfont (\arabic*)]
\item $\widetilde{f}$ is second-order robust with respect to $\ell_\infty$ and its Hessian is SDD,
\item $f(z) \leq \widetilde f(z)$, and there is a point $\widetilde z^*$ such that $\widetilde{f}(\widetilde z^*) \leq f^* + \frac{\eps^2}{9n}$,
\item for all $z'$ such that $\widetilde{f}(z') \leq \widetilde{f}(0)$,
$\|z'\|_\infty = O(B \log(n \uA/\epsilon))$.
\end{enumerate}
\end{lemma}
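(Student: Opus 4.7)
The plan is to verify the three properties separately, exploiting the fact that the regularizer is \emph{separable} across coordinates: each term depends on a single variable, and therefore contributes only to the diagonal of the Hessian.

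For property (1), the Hessian $\nabla^2 \widetilde{f}$ equals $\nabla^2 f$ plus a nonnegative diagonal matrix whose $(i,i)$-entry is $\frac{\eps^2}{36 n^2 e^B}(e^{x_i}+e^{-x_i})$, and analogously for the $y$-block. Since $\nabla^2 f$ is already SDD with nonpositive off-diagonals by Lemma~\ref{lem:gradhessscal}, adding a nonnegative diagonal preserves both structural properties. For the SOR bound, the argument from Lemma~\ref{lem:fscallem} extends verbatim: shifting any coordinate by at most $1$ multiplies every exponential appearing in $\widetilde{f}$ by a factor in $[e^{-1},e]$, so each entry of the Hessian, being a positive combination of such exponentials, changes by a factor in $[e^{-2},e^{2}]$.

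For property (2), $f(z)\leq \widetilde{f}(z)$ is immediate from the nonnegativity of the exponential summands. For the near-minimizer I would simply take $\widetilde{z}^{*}=z^{*}_{\eps}$: since $\|z^{*}_{\eps}\|_{\infty}\leq B$, each of the $4n$ exponential terms in the regularizer is at most $e^{B}$, so the added cost is at most $\frac{\eps^2}{36n^2 e^B}\cdot 4n\cdot e^{B}=\frac{\eps^2}{9n}$. Combining with the assumption $f(z^{*}_{\eps})\leq f^{*}+\eps^2/(3n)$ gives $\widetilde{f}(z^{*}_{\eps})\leq f^{*}+O(\eps^2/n)$, matching the stated $\eps^2/(9n)$ up to an inessential constant (which could be tightened by rescaling the assumed accuracy of $z^{*}_{\eps}$).

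For property (3), the key point is that the regularizer grows exponentially in $\|z\|_{\infty}$: using $e^{x_i}+e^{-x_i}\geq e^{|x_i|}$ coordinate by coordinate,
$$\widetilde{f}(z) \;\geq\; f^{*} \;+\; \tfrac{\eps^{2}}{36 n^{2} e^{B}}\, e^{|x_i|}.$$
Meanwhile $\widetilde{f}(0) = f(0) + \frac{\eps^2}{9n e^B} = \uA + o(1)$. Thus for any $z'$ with $\widetilde{f}(z')\leq\widetilde{f}(0)$ we obtain $e^{|x'_i|}\leq \frac{36 n^{2} e^{B}}{\eps^{2}}\bigl(\uA + o(1) - f^{*}\bigr)$, reducing the problem to a lower bound on $f^{*}$. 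This I would get via $f^{*} \geq f(z^{*}_{\eps}) - \eps^{2}/(3n)$ together with the direct estimate $f(z^{*}_{\eps}) \geq -\sum_i r_i x^{*}_i + \sum_j c_j y^{*}_j \geq -2nB$, using $\|r\|_{\infty},\|c\|_{\infty}\leq 1$, $\|z^{*}_{\eps}\|_{\infty}\leq B$, and nonnegativity of the $\AA_{ij}e^{x^{*}_i-y^{*}_j}$ terms. Taking logs yields $|x'_i|\leq B + \log\bigl(n^{2}(\uA + 2nB + O(1))/\eps^{2}\bigr) = O(B + \log(n\uA/\eps))$, which is $O(B\log(n\uA/\eps))$ in the regime where both quantities are at least $1$; the symmetric argument handles each $y'_j$ coordinate.

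The main conceptual obstacle is engineering the coefficient $\frac{\eps^{2}}{36 n^{2} e^{B}}$ so that all three properties hold simultaneously: it must be small enough that the regularizer at $z^{*}_{\eps}$ (where exponentials can be as large as $e^{B}$) does not spoil the $O(\eps^{2}/n)$ suboptimality in (2), yet the absolute lower bound $\tfrac{\eps^{2}}{36 n^{2} e^{B}}\, e^{|x_i|}$ must still dominate once $|x_i|$ exceeds $O(B+\log(n\uA/\eps))$. The $e^{-B}$ factor is exactly what lets both constraints coexist. No single step is technically hard, but the entire balance of the argument rides on this one coefficient.
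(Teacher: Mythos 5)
Your proof is correct and follows the paper's own approach for all three parts. The one place you genuinely add something is part~(3): the paper isolates a single regularizer term and bounds it by $\widetilde{f}(0,0)$ directly, which implicitly discards $f(x,y)$ as though it were nonnegative. That is harmless for the balancing objective $\sum_{ij}\AA_{ij}e^{x_i-x_j}\geq 0$, but the scaling objective has linear terms and can be negative, so a lower bound on $f^*$ is actually needed before taking logarithms. You supply exactly this, via $f^*\geq f(z^*_\epsilon)-\epsilon^2/(3n)\geq -2nB-\epsilon^2/(3n)$, obtained from nonnegativity of the exponential part of $f$ together with $\|r\|_\infty,\|c\|_\infty\leq 1$ and $\|z^*_\epsilon\|_\infty\leq B$; the resulting extra $\log(nB)$ term is absorbed into the stated $O(B\log(n\uA/\epsilon))$ bound. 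Your two side observations are also right: the $y$-block of the regularizer in Equation~\eqref{eq:reg_scal} must read $e^{y_j}+e^{-y_j}$ (nonnegativity is used in parts~(2) and~(3)), and choosing $\widetilde{z}^*=z^*_\epsilon$ gives $\widetilde{f}(\widetilde{z}^*)\leq f^*+4\epsilon^2/(9n)$ rather than $f^*+\epsilon^2/(9n)$, a constant-factor slack that the paper's own computation incurs as well and that is harmless for the downstream theorems.
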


Theorems \ref{thm:scaling} and \ref{thm:escaling} follow from applying Theorem~\ref{thm:sor_sdd} to the  regularized function defined in \eqref{eq:reg_scal}, and then combining it with the guarantees of Lemmas~\ref{lem:scaling_accuracy} and~\ref{lem:scalreg}.
The complete proof is presented in Section~\ref{sec:mainscal} of the Appendix.
We note that we don't need an explicit knowledge of an a priori bound on $B$. We can simply run our algorithm
repeatedly, doubling our guess at the value of $B$ each time.  This will not increasing the overall running time by more
than a factor of two.

\subsubsection{Bounding the Magnitude of the Optimal and Approximately Optimal Scalings for
Doubly Stochastic Scaling}

In order to provide bounds for the magnitude of the scaling factors that only depend on the
parameters of the initial problem, we refer to
the following lemmas from \cite{KK96} for the case of double stochastic (i.e. (1,1)) scaling.
\begin{lemma}[Lemma 1 of \cite{KK96}]
If $\AA$ is strictly positive, then it can be scaled to doubly stochastic by
diagonal matrices $\UU$, $\VV$ with  $\log(\kappa(\UU) + \kappa(\VV))\leq
O(\log(\wA))$.
\end{lemma}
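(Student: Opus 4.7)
The plan is to fix any scaling pair $(\UU, \VV)$ making $\MM = \UU\AA\VV$ doubly stochastic (such a pair exists by Sinkhorn's classical result for strictly positive $\AA$, and is unique up to the trivial freedom $(\UU, \VV) \mapsto (\alpha \UU, \alpha^{-1} \VV)$, which does not change $\kappa(\UU)$ or $\kappa(\VV)$), and then directly bound the ratios of the diagonal entries using only the fact that every entry of $\AA$ lies in $[a_{\min}, a_{\max}]$, where $\wA = a_{\max}/a_{\min}$.

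Write $u_i = \UU_{ii}$ and $v_j = \VV_{jj}$. The doubly stochastic row condition $\sum_j u_i \AA_{ij} v_j = 1$ for every $i$ gives, for any two indices $i, i'$,
\[
\frac{u_i}{u_{i'}} \;=\; \frac{\sum_j \AA_{i'j}\, v_j}{\sum_j \AA_{ij}\, v_j}.
\]
Since the $v_j$ are positive, I can bound the numerator above by $a_{\max} \sum_j v_j$ and the denominator below by $a_{\min} \sum_j v_j$, which cancels the common factor $\sum_j v_j$ and yields $u_i/u_{i'} \leq a_{\max}/a_{\min} = \wA$. Taking $i, i'$ to realize the max and min of the $u_k$ gives $\kappa(\UU) \leq \wA$. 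An identical argument, starting from the column condition $\sum_i u_i \AA_{ij} v_j = 1$, produces $\kappa(\VV) \leq \wA$.

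Combining, $\kappa(\UU) + \kappa(\VV) \leq 2\wA$, so $\log(\kappa(\UU) + \kappa(\VV)) \leq \log 2 + \log \wA = O(\log \wA)$, which is the claimed bound.

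There is no real obstacle here; the proof reduces to a single two-line manipulation of the Sinkhorn fixed-point equations, exploiting the fact that for strictly positive $\AA$ the sums $\sum_j \AA_{ij} v_j$ and $\sum_i \AA_{ij} u_i$ are sandwiched between $a_{\min}$ and $a_{\max}$ times $\sum_j v_j$ (resp.\ $\sum_i u_i$). The only subtlety worth flagging is that $\kappa(\UU)$ and $\kappa(\VV)$ are invariants of the scaling (independent of the scalar gauge freedom), so the bound is meaningful as stated.
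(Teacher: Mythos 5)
Your argument is correct and self-contained, so let me first note a point of context: the paper states this lemma only by citation to Kalantari--Khachiyan~\cite{KK96} and does not reproduce a proof, so there is no in-paper argument to compare against. Your two-line manipulation of the Sinkhorn fixed-point equations is the natural proof: from $u_i \sum_j \AA_{ij} v_j = 1$ for every $i$, positivity of the $v_j$ sandwiches the weighted sum between $a_{\min}\sum_j v_j$ and $a_{\max}\sum_j v_j$, and the common factor cancels to give $\kappa(\UU)\le a_{\max}/a_{\min}$; the column condition gives the same bound for $\kappa(\VV)$. The observation that $\kappa(\UU)$ and $\kappa(\VV)$ are gauge invariants is a useful sanity check and makes the statement well posed.

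One small notational caveat worth fixing before you commit this to writing: you set $\wA = a_{\max}/a_{\min}$, but the paper defines $\wA = \uA/\lA$, i.e.\ the ratio of the \emph{sum} of all entries to the minimum nonzero entry. For a strictly positive matrix one has $\lA = a_{\min}$ and $a_{\max}\le \uA$, so $a_{\max}/a_{\min}\le \wA$; your bound is therefore \emph{stronger} than needed, and the stated conclusion $\log(\kappa(\UU)+\kappa(\VV)) = O(\log\wA)$ follows a fortiori. But you should phrase the final step as $\kappa(\UU)+\kappa(\VV)\le 2\,a_{\max}/a_{\min}\le 2\wA$ rather than asserting equality of the two ratio quantities, so the argument matches the paper's definitions.
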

\begin{lemma}[Corollary 1 of \cite{KK96}]\label{lem:kkn}
If $\AA$ is scalable, then it can be scaled to doubly stochastic by 
diagonal matrices $\UU$, $\VV$ with  $\log(\kappa(\UU) + \kappa(\VV))\leq
O(n\log(\wA))$.
\end{lemma}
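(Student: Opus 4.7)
The plan is to study the minimizer of the convex function
$$f(x,y) = \sum_{i,j} \AA_{ij} e^{x_i+y_j} - \sum_i x_i - \sum_j y_j,$$
the doubly stochastic ($r=c=\mathbf{1}$) specialization of the function from Theorem~\ref{thm:escaling}. Since $\AA$ is scalable, the minimum is attained at some $(x^*, y^*)$; setting $\UU = \mdiag(e^{x^*})$, $\VV = \mdiag(e^{y^*})$, and $\PP = \UU\AA\VV$, the first-order optimality conditions imply that $\PP$ is doubly stochastic and $\supp(\PP) = \supp(\AA)$ (the diagonal entries of $\UU, \VV$ must all be strictly positive, otherwise a row or column of $\PP$ would vanish). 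For every $(i,j) \in \supp(\AA)$,
$$\log \UU_{ii} + \log \VV_{jj} \;=\; \log \PP_{ij} - \log \AA_{ij} \;\in\; [-\log \uA,\, -\log \lA].$$

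First I would apply the Birkhoff--von Neumann decomposition to write $\PP$ as a convex combination of at most $n^2$ permutation matrices, each supported in $\supp(\AA)$. Picking a permutation $\sigma$ appearing with coefficient at least $1/n^2$ gives $\PP_{i,\sigma(i)} \geq 1/n^2$, so $\log \UU_{ii} + \log \VV_{\sigma(i)\sigma(i)}$ lies in an interval of width $O(\log(n \wA))$ uniformly in $i$. Then, to compare $\log \UU_{ii}$ with $\log \UU_{i'i'}$ across distinct rows, I would telescope the identity above along an alternating path $i \to j_1 \to i_2 \to j_2 \to \cdots \to i'$ in the bipartite support graph of $\AA$: the difference equals a signed sum of path-length many terms, each of absolute value at most $\log \wA$. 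In any connected component such an alternating path has length $O(n)$, yielding $\log \kappa(\UU) = O(n \log \wA)$, and symmetrically $\log \kappa(\VV) = O(n \log \wA)$. The invariance $(x,y) \mapsto (x+\alpha, y-\alpha)$ of $f$ combined with the Step~1 bound then lets us equalize the overall scales of $\UU$ and $\VV$ simultaneously.

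The main obstacle is the case where the bipartite support graph of $\AA$ is disconnected, since alternating paths cannot cross components. Here Lemma~\ref{lem:scalcond} saves us: if $R_1 \cup C_1$ is a connected component of the bipartite graph of $\supp(\AA)$, then both $R_1 \times C_1^c$ and $R_1^c \times C_1$ are zero minors, and scalability forces $|R_1| = |C_1|$, so the inequalities of Lemma~\ref{lem:scalcond} hold with equality for these minors. The equality-case condition then implies that $\AA$ is, up to simultaneous row/column permutation, block-diagonal with one independently scalable block per component, and the connected-case argument applies within each block to give the claimed bound $\log(\kappa(\UU) + \kappa(\VV)) = O(n \log \wA)$.
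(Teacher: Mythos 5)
The paper does not prove this lemma---it is imported as Corollary~1 of~\cite{KK96}---so there is no in-paper argument to compare against, and I assess your blind attempt on its own. Your central telescoping step is not sound. Already the first display is wrong: from $\PP_{ij}\le 1$ and $\AA_{ij}\ge\lA$ you get the upper bound $\log\UU_{ii}+\log\VV_{jj}\le-\log\lA$, but the claimed lower bound $-\log\uA$ is false, because $\PP_{ij}$ can be much smaller than $\AA_{ij}/\uA$. (The $2\times 2$ example $\AA=\left(\begin{smallmatrix}1&1\\1&M\end{smallmatrix}\right)$ with $M>9$ already gives $\PP_{22}<\AA_{22}/\uA$; more importantly, whenever $\kappa(\UU)\kappa(\VV)=e^{\Theta(n\log\wA)}$, which is exactly the regime this lemma addresses, off-permutation entries $\PP_{ij}=\UU_{ii}\AA_{ij}\VV_{jj}$ are exponentially small in $n$.) Consequently, when you telescope $\log\UU_{ii}+\log\VV_{jj}=\log\PP_{ij}-\log\AA_{ij}$ along an alternating path, the terms $\log\PP_{e_\ell}$ are bounded above but not below and do not cancel in the signed sum, so the claim that each term has absolute value $\le\log\wA$ is false. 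Indeed, substituting $\log\PP_{ij}=\log\UU_{ii}+\log\AA_{ij}+\log\VV_{jj}$ reveals that the telescoped ``identity'' is a tautology; the content must come from one-sided bounds on $\PP$, and you only have a two-sided bound on the $\sigma$-edges.

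The repair is a directed, one-sided argument rather than a telescope. Combining the universal inequality $\log\UU_{ii}+\log\VV_{jj}\le -\log\lA$ with the $\sigma$-edge lower bound $\log\UU_{\sigma^{-1}(j)\sigma^{-1}(j)}+\log\VV_{jj}\ge -2\log n-\log\uA$ yields, for every $(i,j)\in\supp(\AA)$, the one-sided bound $\log\UU_{ii}-\log\UU_{\sigma^{-1}(j)\sigma^{-1}(j)}\le 2\log n+\log\wA$. Propagating this along directed paths in the digraph on rows with edge $i\to i'$ whenever $\AA_{i,\sigma(i')}>0$ gives $\log\kappa(\UU)=O(n\log(n\wA))=O(n\log\wA)$, and the $\sigma$-edge relation transfers the bound to $\kappa(\VV)$---provided the digraph is strongly connected. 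That strong connectivity is precisely where Lemma~\ref{lem:scalcond} must be invoked again inside each connected block, not only for the block decomposition in your last paragraph: a closed set $S$ of rows would produce the zero minor $S\times\sigma(S^c)$ with $|S|+|S^c|=n$, and the equality case then forces $S^c\times\sigma(S)$ to be a zero minor too, disconnecting the bipartite support. So the scalability hypothesis does double duty, and the alternating-path telescope should be replaced by this directed argument.
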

For almost scalable matrices, there can be arbitrarily good solutions,
using arbitrarily large scaling factors. To prove bounds on the runtime of finding an
approximate doubly-stochastic matrix, we will have to explicitly demonstrate an vector that
approximately minimizes function $f$ while having small $\ell_\infty$ norm.
\begin{lemma}\label{lem:almostdss}
If $\AA$ is almost-doubly-stochastic scalable, then there exist points $(x,y)$ such that
$f(x,y) - f^* \leq \eps^2/3n$, such that $\|(x,y)\|_\infty \leq O(n\log(n\wA/\eps))$.
\end{lemma}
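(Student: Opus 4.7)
The plan is to exhibit such a point via a perturbation argument. Define $\AA_\delta := \AA + \delta J$, where $J$ is the all-ones matrix and $\delta > 0$ is a parameter to be chosen. Since every entry of $\AA_\delta$ is at least $\delta$, the matrix $\AA_\delta$ is strictly positive and therefore exactly $(\allones, \allones)$-scalable. Applying Lemma~\ref{lem:kkn} to $\AA_\delta$ yields log-factors $(x_\delta, y_\delta)$ that exactly scale $\AA_\delta$ to doubly stochastic, with $\log(\kappa(\cdot)) = O(n \log w_{\AA_\delta}) = O(n \log(\uA / \delta))$. Exploiting the shift invariance of $f$ under $(x, y) \mapsto (x + t \allones, y + t \allones)$ (and of the scaling, since $e^{x_i - y_j}$ is unchanged), we may choose the common shift $t$ so that $\|(x_\delta, y_\delta)\|_\infty = O(n \log(\uA/\delta))$.

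The next step is to compare $f(x_\delta, y_\delta)$ with $f^*$, where I write $f = f_\AA$ to emphasize dependence on the matrix. Setting $S(x, y) := \sum_{i,j} e^{x_i - y_j}$, a direct calculation gives $f_{\AA_\delta}(x, y) - f_\AA(x, y) = \delta S(x, y)$. Since $(x_\delta, y_\delta)$ is an unconstrained minimizer of $f_{\AA_\delta}$, we have $f_{\AA_\delta}(x_\delta, y_\delta) = f_{\AA_\delta}^*$; moreover, summing the row-sum optimality conditions $\sum_j (\AA_\delta)_{ij}\, e^{x_{\delta,i} - y_{\delta,j}} = 1$ yields $\sum_{ij} \AA_{ij}\, e^{x_{\delta,i} - y_{\delta,j}} + \delta S(x_\delta, y_\delta) = n$, whence $\delta S(x_\delta, y_\delta) \leq n$. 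Combining, $f_\AA(x_\delta, y_\delta) = f_{\AA_\delta}^* - \delta S(x_\delta, y_\delta) \leq f_{\AA_\delta}^*$, so the task reduces to bounding $f_{\AA_\delta}^* - f_\AA^*$ by $\eps^2/(3n)$.

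To bound $f_{\AA_\delta}^* - f_\AA^*$, note that for any $(x, y)$, $f_{\AA_\delta}^* \leq f_{\AA_\delta}(x, y) = f_\AA(x, y) + \delta S(x, y)$. Thus, if one exhibits a point $(x_\eta, y_\eta)$ with $f_\AA(x_\eta, y_\eta) - f_\AA^* \leq \eta$ and $S(x_\eta, y_\eta) \leq S_0$, then setting $\eta := \eps^2/(6n)$ and $\delta := \eps^2/(6 n S_0)$ gives the desired inequality. The main obstacle is constructing such $(x_\eta, y_\eta)$ with controlled $S_0$: for almost-scalable but non-scalable $\AA$, any minimizing sequence of $f_\AA$ escapes to infinity and its $S$-values diverge, so one must track this divergence quantitatively. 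To resolve this I would appeal to the block decomposition of $\AA$ implied by its critical zero minors (Lemma~\ref{lem:scalcond}): within each maximal block on which exact scaling exists, Lemma~\ref{lem:kkn} gives log-factors of magnitude $O(n \log \wA)$, while the inter-block separation required to achieve additive gap $\eta$ contributes at most an additional $O(\log(1/\eta))$. Aggregating these ingredients gives $S_0 \leq (n \wA / \eta)^{O(n)}$, and therefore $\log(1/\delta) = O(n \log(n \wA / \eps))$. Combined with the bound from the first step, this yields $\|(x_\delta, y_\delta)\|_\infty = O(n \log(n \wA / \eps))$, as required.
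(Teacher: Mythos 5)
Your perturbation framing via $\AA_\delta = \AA + \delta J$ is appealing (the observation that the doubly-stochastic constraint forces $\delta\, S(x_\delta,y_\delta)\le n$ is slick), but it runs into two problems.

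First, there is a quantitative error in the final accounting. You invoke Lemma~\ref{lem:kkn} on $\AA_\delta$ to obtain $\|(x_\delta,y_\delta)\|_\infty = O\bigl(n\log(\uA/\delta)\bigr)$, and you later derive $\log(1/\delta) = O\bigl(n\log(n\wA/\eps)\bigr)$. Substituting the latter into the former yields $\|(x_\delta,y_\delta)\|_\infty = O\bigl(n^2\log(n\wA/\eps)\bigr)$, which is quadratic in $n$ — a factor of $n$ worse than the lemma claims. The fix is to notice that $\AA_\delta$ is strictly positive and therefore the \emph{strictly positive} bound (Lemma~1 of~\cite{KK96}, which the paper records just above Lemma~\ref{lem:kkn}) applies, giving $\|(x_\delta,y_\delta)\|_\infty = O(\log(\uA/\delta))$ and restoring the claimed $O(n\log(n\wA/\eps))$.

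Second, and more fundamentally, the argument is circular. The step you label ``the main obstacle'' — producing $(x_\eta,y_\eta)$ with $f_\AA(x_\eta,y_\eta) - f_\AA^* \le \eta$ and $S(x_\eta,y_\eta)\le S_0 = (n\wA/\eta)^{O(n)}$ — is, up to the cosmetic replacement of an $\ell_\infty$ bound by a bound on $S$, exactly the statement of the lemma you are trying to prove, and the way you propose to establish it (block-triangular decomposition from Lemma~\ref{lem:scalcond}, Lemma~\ref{lem:kkn} on each diagonal block, explicit inter-block shifts of size $O(\log(1/\eta))$ to suppress the below-diagonal contributions) is precisely the paper's own proof. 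Once you have carried out that construction, the perturbation scaffold becomes redundant: you could set $\eta = \eps^2/3n$ and conclude directly, with no need for $\AA_\delta$ at all. As written, the $\AA_\delta$ detour adds length and a subtle quantitative pitfall without making the core construction any easier; the real content of the proof is entirely in your final paragraph, which you state only in outline (in particular, the claim that the inter-block separation costs only $O(\log(1/\eta))$ per block pair requires the paper's shifting argument, which you do not spell out).
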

The proof of the lemma is presented in Section~\ref{sec:pfofalmostdss}.

For the general case of $(r,c)$-scaling we refer to the recent lemmas from the
parallel work of~\cite{allen2017much}. The assumption that the scaling targets
are integral is mild, since one can approximate real numbers by rational ones
which can then by scaled to be integral (the dependence on this scaling is
logarithmic).

\begin{lemma}[Lemma 3.3 of \cite{allen2017much}]\label{lem:rcbounds}
If $\AA$ is almost $(r,c)$-scalable with $r$, $c$ being integral, then it can be
$\eps$-scaled by diagonal matrices $\UU$, $\VV$ with  $\log(\kappa(\UU) +
\kappa(\VV))\leq O(n\log(n\wA\|r\|_1/\eps))$.
\end{lemma}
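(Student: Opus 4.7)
The plan is to recast the question in terms of the convex function $f$ from \eqref{eq:scaling_fn}: writing $\UU = \mdiag(\exp(x))$ and $\VV = \mdiag(\exp(y))$, one has $\log(\kappa(\UU)+\kappa(\VV)) \leq 2(\|x\|_\infty+\|y\|_\infty)+O(1)$, so it suffices to exhibit an approximate minimizer $(x,y)$ of $f$ that corresponds to an $\eps$-$(r,c)$-scaling of $\AA$ and satisfies $\|(x,y)\|_\infty = O(n\log(n \wA \|r\|_1/\eps))$. Following the pattern of Lemma~\ref{lem:almostdss} for the doubly-stochastic case, my strategy is to proceed by induction on the ``almost-scalable depth'' of $\AA$, peeling off the obstructions predicted by Lemma~\ref{lem:scalcond}.

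In the base case, $\AA$ is exactly $(r,c)$-scalable, and a bound of the form $\log(\kappa(\UU^*)+\kappa(\VV^*)) = O(n \log(n \wA \|r\|_1))$ can be obtained by analyzing the KKT conditions at the minimizer $(x^*,y^*)$ of $f$: the identity $\sum_j \AA_{ij} e^{x^*_i - y^*_j} = r_i$ (which is just $\nabla_{x_i} f = 0$) propagates magnitude bounds across the dependency graph induced by $\supp(\AA)$, with the factor $\|r\|_1$ entering through the ratio $r_i / \AA_{ij}$ and the factor of $n$ arising from the length of dependency chains. For the inductive step, when $\AA$ is only almost-scalable, Lemma~\ref{lem:scalcond} exhibits a zero minor $Z\times L$ for which $\sum_{i\in Z^c} r_i = \sum_{j\in L} c_j$ but $Z^c\times L^c$ is not zero. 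I would set $x_i = -M$ for $i\in Z$ and $y_j = +M$ for $j\in L^c$ with $M = \Theta(\log(n \wA \|r\|_1/\eps))$; this drives the $\AA$-entries indexed by $Z\times L^c$ to be negligibly small and makes the cross-sum contributions from this truncation fall below the per-level error budget. The residual scaling problem then decomposes into two strictly smaller almost-scalable subproblems on $Z^c\times L$ and $Z\times L^c$ with the correspondingly restricted targets, and we recurse on each.

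Since each recursive call strictly reduces the size of the involved block, the recursion has depth at most $n$, and each level additively contributes $O(M)$ to $\|(x,y)\|_\infty$, giving the claimed bound. The main obstacle will be carefully accounting for error accumulation across the $n$ levels: every truncation introduces slack in certain row or column sums, and the sum of these slacks together with the base-case residual must be controlled to $\eps^2/(3n)$, so that Lemma~\ref{lem:scaling_accuracy} converts it into an $\eps$-$(r,c)$-scaling. The integrality of $r,c$ is essential in this accounting because it guarantees that any non-tight zero minor has slack at least $1$; this uniformly controls how aggressively the exponential scaling factors must grow and permits a single uniform choice $M = O(\log(n \wA \|r\|_1/\eps))$ to suffice at every level, avoiding a blowup from cascading tolerance reductions.
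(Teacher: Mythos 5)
The statement you were asked to prove is imported: the paper cites it as Lemma~3.3 of the concurrent work of Allen-Zhu, Li, Oliveira, and Wigderson~\cite{allen2017much} and gives no proof of its own, so there is nothing internal to compare against. The closest analogue that the paper \emph{does} prove is Lemma~\ref{lem:almostdss} for the doubly-stochastic case, and your strategy is clearly modeled on it; unfortunately it contains a concrete error in the inductive step that makes the recursion incoherent as written.

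With $\MM_{ij} = \AA_{ij}e^{x_i - y_j}$, and a tight zero minor $Z\times L$ as supplied by Lemma~\ref{lem:scalcond}, the block that must be driven to zero in the limit is $Z^c\times L^c$, while the two blocks $Z^c\times L$ and $Z\times L^c$ are the surviving sub-problems. Your choice $x_i = -M$ on $Z$ and $y_j = +M$ on $L^c$ multiplies $Z^c\times L^c$ by $e^{x_i - M}$ (good, small for bounded $x_i$), but it \emph{also} multiplies the entire block $Z\times L^c$ by $e^{-2M}$. That block cannot be small: the rows of $Z$ and the columns of $L^c$ must still hit the positive targets $r|_Z$, $c|_{L^c}$, and you yourself then recurse on $Z\times L^c$ as a subproblem --- which contradicts having just annihilated it. The fix is to shift both sides of the \emph{same} diagonal block by the \emph{same} sign, e.g.\ $x_i = +M$ for $i\in Z$ and $y_j = +M$ for $j\in L^c$ (or equivalently $-M$ on both $Z^c$ and $L$). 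This cancels inside $Z\times L^c$, leaves $Z^c\times L$ untouched, and still damps $Z^c\times L^c$ by a factor $e^{-M}$; it is exactly the mechanism used in the paper's own proof of Lemma~\ref{lem:almostdss} (``scale all the columns of a block and all the rows of that block by the same amount''), lifted from $(\allones,\allones)$-scaling to general $(r,c)$.

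Beyond that, the base case is gestured at rather than argued (the KKT-propagation and how $n$ and $\|r\|_1$ enter would need to be carried out; this is essentially the content of Lemma~\ref{lem:kkn} and its $(r,c)$ generalization), and your observation about integrality forcing slack $\ge 1$ in non-tight minors is plausible but not connected to the error budget in a way one could check. Overall your decomposition is in the right family, but the inductive offset is wrong as stated, and the accounting would have to be completed before this could be called a proof.
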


\subsection{Matrix Balancing}\label{sec:mbal}
Our approach for the balancing problem is completely analogous to the one we used for the scaling problem. There are only
minor technical differences. To state them, we first formally define the problem and the notion of approximation we are considering for it.
\begin{defn}[Matrix Balancing]
Let $\AA$ be a square nonnegative matrix.
We say that $\AA$ is balanced if the sum of each row is equal to the sum of the corresponding
column, i.e. $r_\AA = c_\AA$.
We say that a nonnegative diagonal matrix $\DD$ balances $\AA$ if the matrix $\MM = \DD \AA
\DD^{-1}$ is balanced.
\end{defn}

\begin{defn}[$\epsilon$-Balanced Matrix \cite{KKS97}]
\label{def:eps_balancing}
We say that a nonnegative matrix $\MM \in \R^{n\times n}$ is $\epsilon$-balanced if
$$ \frac{\Vert r_{\MM} - c_{\MM} \Vert_2}{ \sum_{1\leq i, j\leq n} \MM_{ij}} =   \frac{\sqrt{\sum_{i=1}^n ((r_{\MM})_i - (c_{\MM})_i)^2}}{\sum_{1\leq i, j\leq n} \MM_{ij}} \leq \eps.$$
\end{defn}
Observe that this definition is invariant to a global scaling of all the entries of the matrix
by some factor.
There is a very simple condition that characterizes the set of matrices that can be balanced
\begin{lemma}[\cite{KKS97}]
A nonnegative matrix $\AA\in\R^{n\times n}$ can be balanced if and only if the graph with
adjacency matrix $\AA$ is strongly connected.
\end{lemma}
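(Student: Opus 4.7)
The plan is to prove the two directions separately. The necessity direction ($\AA$ balanceable $\Rightarrow$ $G$ strongly connected) is a simple cut argument, while the sufficiency direction requires solving a convex optimization problem analogous to the scaling function from Section~\ref{sec:scb}.

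For necessity, suppose $G$ is not strongly connected. Considering the condensation DAG of the SCCs of $G$, one can find a nonempty proper subset $S \subsetneq V$ that is a union of source SCCs, so that $\AA_{ji} = 0$ whenever $j \notin S$ and $i \in S$, yet $\AA_{ij_0} > 0$ for some $i \in S$, $j_0 \notin S$. For any positive diagonal scaling $\MM = \DD \AA \DD^{-1}$, this support pattern is preserved, so
\[
\sum_{i \in S} (r_\MM)_i - \sum_{i \in S} (c_\MM)_i = \sum_{i \in S,\, j \notin S} \MM_{ij} - \sum_{i \in S,\, j \notin S} \MM_{ji} > 0,
\]
which contradicts the balance condition $r_\MM = c_\MM$. (When $G$ has multiple weakly connected components, the argument is applied inside each component separately; the non-trivial case is when $G$ is weakly connected but not strongly connected, which is what yields such an $S$.)

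For sufficiency, define the convex function
\[
g(x) = \sum_{i,j} \AA_{ij}\, e^{x_i - x_j},
\]
whose gradient satisfies $\nabla g(x) = r_\MM - c_\MM$ with $\MM = \mdiag(e^x)\, \AA\, \mdiag(e^{-x})$, so any minimizer of $g$ produces a diagonal matrix $\DD = \mdiag(e^x)$ that balances $\AA$. Convexity of $g$ follows exactly as in Lemma~\ref{lem:fscallem} (the Hessian is a weighted Laplacian of the graph $G$). The function is translation invariant, $g(x + t\allones) = g(x)$, so the problem naturally lives on the subspace $H = \{x : \allones^\top x = 0\}$, and it suffices to exhibit a minimizer there. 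Since $g$ is continuous and bounded below by $0$, the only thing left is to establish that $g$ is \emph{coercive} on $H$, i.e., $g(x) \to \infty$ as $\|x\|_\infty \to \infty$ with $x \in H$; then a minimizer exists by standard compactness.

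The main obstacle is thus the coercivity step, and this is precisely where strong connectivity enters. Fix $x \in H$ with $\|x\|_\infty = R$, and let $p, q$ be indices achieving $x_p = \max_i x_i$ and $x_q = \min_i x_i$. Because $\allones^\top x = 0$, we have $x_p - x_q \geq R$. By strong connectivity of $G$, there is a directed path $p = v_0, v_1, \ldots, v_L = q$ in $G$ with $L \leq n-1$; telescoping the differences $x_{v_k} - x_{v_{k+1}}$ along this path yields at least one edge $(v_k, v_{k+1}) \in \supp(\AA)$ with
\[
x_{v_k} - x_{v_{k+1}} \geq \frac{x_p - x_q}{L} \geq \frac{R}{n-1}.
\]
Since $\AA_{v_k v_{k+1}} \geq \lA$ for the minimum nonzero entry $\lA$, we get $g(x) \geq \lA\, e^{R/(n-1)}$, which tends to infinity as $R \to \infty$. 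This establishes coercivity on $H$, yields a minimizer of $g$, and completes the proof by reading off the balancing diagonal from its coordinates.
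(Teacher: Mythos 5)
The paper cites this lemma to~\cite{KKS97} without supplying a proof, so there is no internal argument to compare against; the student's proposal should be judged on its own merits.

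Both directions are essentially correct and natural. The sufficiency argument is coherent with the paper's own machinery: it minimizes exactly the function $f(x)=\sum_{i,j}\AA_{ij}e^{x_i-x_j}$ from Equation~\eqref{eq:fbal} (which the paper itself attributes to~\cite{KKS97}), and the coercivity step is sound — with $\allones^\top x=0$ one has $x_p-x_q\geq\|x\|_\infty$, and telescoping along a directed $p\to q$ path of length $L\leq n-1$ forces some edge $(v_k,v_{k+1})\in\supp(\AA)$ to carry potential gap at least $\|x\|_\infty/(n-1)$, so $g(x)\geq\lA e^{\|x\|_\infty/(n-1)}\to\infty$. A minimizer of a continuous coercive convex function on the subspace $H$ exists, and first-order optimality there gives $r_\MM=c_\MM$ (the gradient is automatically in $H$ since $\allones^\top(r_\MM-c_\MM)=0$, so stationarity on $H$ suffices). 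The necessity argument is the standard cut argument: take $S$ to be the vertex set of a source SCC; since conjugation by a positive diagonal preserves support, the cut asymmetry $\sum_{i\in S,j\notin S}\MM_{ij}>0=\sum_{i\in S,j\notin S}\MM_{ji}$ contradicts balance.

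One point you should state more forthrightly: the ``only if'' direction, and hence the lemma as literally written, presupposes that $G$ is weakly connected. If $G$ splits into two or more weakly connected components each of which is strongly connected (e.g.\ a block-diagonal $\AA$ with irreducible blocks), then $\AA$ is balanceable even though $G$ is not strongly connected, and your source-SCC argument produces an $S$ with no outgoing edges, so no contradiction arises. Your parenthetical remark gestures at this but frames it as a harmless reduction; in fact it is a genuine hypothesis the lemma needs. This is a defect of the cited statement rather than of your proof, and the paper's surrounding discussion (reducing a non-strongly-connected input to a block lower-triangular form with strongly connected diagonal blocks) tacitly carries the same restriction — but since you are supplying a proof of the lemma as stated, you should flag the weak-connectivity assumption explicitly rather than bury it.
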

In the case when the graph is not strongly connected, the matrix can have its rows and
columns rearranged so as to be written as a lower triangular block matrix with strongly connected
diagonal blocks. The reason no exact balancing exists is that off diagonal block elements will always
create imbalances.
This, however, is not an obstacle for approximately balancing the matrix.
Once we balance the diagonal blocks, we can set all of the off-diagonal block entries to a very small value, say $\eps/n$, so that they don't cause significant imbalances. This corresponds to implicitly scaling the block rows and collumns by a very large amount, making the off-diagonal entries arbitrarily close to zero.
Therefore, since the case of matrices that cannot be exactly balanced is easy to detect, and can be easily reduced to the exactly balanceable case, from now on we consider only matrices that can be balanced, and therefore represent strongly connected graphs.

We can now state our main theorem for this section, which follows our initial discussion.
\begin{theorem}\label{thm:balancing}
Let $\AA$ be a matrix that can be balanced by the diagonal matrix $\DD^*$. Then, we can compute an $\epsilon$-approximate balancing of $\AA$ in time
$$\Otil(m \log \kappa(\DD^*) \log^2(\wA/\epsilon) )\,{.}$$
\end{theorem}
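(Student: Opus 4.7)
The plan is to mirror the scaling proof of Section~\ref{sec:scb}, now with a single set of variables $x \in \R^n$ rather than a pair. I would introduce the convex function
$$f(x) = \sum_{1 \le i,j \le n} \AA_{ij}\, e^{x_i - x_j},$$
so that $f(x)$ equals the total mass of $\MM = \mdiag(\exp(x))\, \AA\, \mdiag(\exp(x))^{-1}$. A direct calculation yields $\nabla f(x) = r_\MM - c_\MM$ and $\nabla^2 f(x) = \mdiag(r_\MM + c_\MM) - \MM - \MM^\top$, i.e. the Hessian is the weighted Laplacian of $\MM + \MM^\top$, which is SDD with non-positive off-diagonals, and $f$ is convex. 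Second-order robustness with respect to $\ell_\infty$ follows from the same quadratic-form argument as in Lemma~\ref{lem:fscallem}: adding any $z$ with $\|z\|_\infty \le 1$ multiplies each $\MM_{ij}$ by a factor in $[e^{-2}, e^2]$, so $\nabla^2 f(x+z) \approx_2 \nabla^2 f(x)$.

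Next I would prove the analog of Lemma~\ref{lem:scaling_accuracy}: since $f(x) = \sum_{ij} \MM_{ij}$ and $\nabla f(x) = r_\MM - c_\MM$, Definition~\ref{def:eps_balancing} is equivalent to $\|\nabla f(x)\|_2 / f(x) \le \eps$. The key descent step is contrapositive: if $\|\nabla f(x)\|_2 > \eps f(x)$, a short step along $-\nabla f(x)$, normalized to lie in the unit $\ell_\infty$ ball, decreases $f$ by $\Omega(\eps^2 f(x)/\poly(n))$ via a second-order Taylor expansion whose remainder is controlled by SOR. Hence it suffices to reduce $f(x) - f^*$ below $\eps^2 f^*/\poly(n)$ in order to recover an $\eps$-balancing.

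The function $f$ is translation-invariant, $f(x + c \allones) = f(x)$, so it has no unique minimizer. I would break this symmetry and simultaneously bound the level set by regularizing, in close analogy with Lemma~\ref{lem:scalreg}, as
$$\widetilde f(x) = f(x) + \mu \sum_i \bigl(e^{x_i} + e^{-x_i}\bigr),$$
with $\mu$ a suitably small polynomial in $\eps/(n \wA)$ times $f^*$. Since an optimal balancer $\DD^*$ can be globally rescaled to have geometric mean $1$, the corresponding $x^*$ satisfies $\|x^*\|_\infty \le \log \kappa(\DD^*)$. The same calculation as in Lemma~\ref{lem:scalreg} then yields that $\widetilde f$ is SOR with SDD Hessian (the regularizer only adds a nonnegative diagonal), its infimum is within $O(\eps^2/n)$ of $f^*$, and the level set of $x_0 = \vzero$ has $\ell_\infty$-radius $B = O\bigl(\log \kappa(\DD^*) \cdot \log(n \wA/\eps)\bigr)$.

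The main obstacle will be the first descent argument above, which must convert an $\ell_2$-bound on the gradient into a function-value decrease while respecting the $\ell_\infty$ box constraint and avoiding the translation null direction (in which the gradient is automatically zero, so we should really work on $\allones^\perp$ when measuring progress). Once this is in hand, applying Theorem~\ref{thm:sor_sdd} to $\widetilde f$ produces an iterate $x_T$ with $\widetilde f(x_T) - \min \widetilde f \le \eps^2/\poly(n,\wA)$ in time $\Otil(m B \log^2(\wA/\eps)) = \Otil(m \log \kappa(\DD^*) \log^2(\wA/\eps))$, as desired. A standard doubling strategy removes the need for a priori knowledge of $\kappa(\DD^*)$, and matrices that are not strongly connected reduce to the strongly connected case as sketched at the start of Section~\ref{sec:mbal}.
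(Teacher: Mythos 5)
Your overall structure — function $f(x)=\sum_{ij}\AA_{ij}e^{x_i-x_j}$, gradient/Hessian computation, SOR via the quadratic-form argument, an accuracy lemma tying small optimality gap to $\epsilon$-balancing, regularization to bound $R_\infty$, and then invoking Theorem~\ref{thm:sor_sdd} plus a doubling trick — matches the paper's proof exactly. The one place where you flag an obstacle is the accuracy lemma (the analogue of Lemma~\ref{lem:balancing_requirement}), and here your worries are in fact misplaced: there is no $\ell_\infty$ box constraint to respect in that lemma (the box constraint lives entirely inside the Newton oracle, not in the argument relating $f(x)-f_*$ to the balancing error), and the translation null direction causes no trouble because $\nabla f(x)=r_\MM-c_\MM$ already lies in $\allones^\perp$. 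The paper's argument is cleaner than your gradient-step sketch and avoids both concerns as well as the spurious $\poly(n)$: since the Hessian is an SDD Laplacian with trace $2f(x)$, one has $\nabla^2 f(z)\preceq 4f(z)\mI$; Taylor with the Lagrange remainder plus convexity then gives, for any $y$ with $f(y)\le f(x)$, $f(y)\le f(x)+\langle\nabla f(x),y-x\rangle+2\|y-x\|_2^2 f(x)$, and minimizing over $y$ yields the Polyak--{\L}ojasiewicz-type bound $f(x)-f_*\ge\|\nabla f(x)\|_2^2/(8f(x))$. Combined with $f(x)\ge\lA$ (strong connectivity forces some edge on a cycle with $e^{x_i-x_j}\ge 1$), this directly gives $\|\nabla f(x)\|_2/f(x)\le\epsilon$ once $f(x)-f_*\le\epsilon^2\lA/8$, with no step construction needed. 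Finally, be careful that the regularization weight needs the $e^{-B}$ factor as in Lemma~\ref{lem:scalreg} (the paper uses $\mu=\epsilon^2\lA/(48ne^B)$), otherwise the regularizer at a point with $\|x\|_\infty\approx B$ could swamp the objective; your phrasing of $\mu$ as ``a polynomial in $\epsilon/(n\wA)$ times $f^*$'' omits this.
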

This immediately implies that if $\DD^*$ is, say, quasi-polynomially conditioned, we can find an approximate balancing in nearly linear time. 

Again, we can generalize this result to hold for approximate balancings. We make this
statement precise in Theorem~\ref{thm:ebalancing}.

\subsubsection{Reducing Matrix Balancing to Convex Optimization}
Similarly to the case of the scaling problem, we encode this problem as a minimization of an appropriately constructed convex function. The function we consider here is 
\begin{equation}\label{eq:fbal}
f(x) = \sum_{1\leq i,j\leq n} \AA_{ij} e^{x_i - x_j},
\end{equation}
and this function was already defined in~\cite{KKS97}. Similarly to the case of matrix scaling, we will show that (approximately) minimizing this function corresponds to (approximately) balancing the
matrix $\AA$.  For the rest of this section, we will define $f_*$ to be the infimum value of
$f$ in its domain, that is $f_* = \inf_{x} f(x)$. The main theorem of this section is the
following.

\begin{theorem}\label{thm:ebalancing}
Suppose that there exists a point $x$ such that $f(x) \leq f_* + \eps^2\lA/24$, and $\| x
\|_{\infty} \leq B$. Then, we can compute an $\eps$-approximate balancing of $\AA$ in time
$$\Otil(m  B \log^2(\wA/\epsilon) )\,{.}$$
\end{theorem}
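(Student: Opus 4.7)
The plan is to mirror the scaling analysis of the previous subsection, establishing four ingredients: (i) the function $f$ from~\eqref{eq:fbal} is second-order robust with respect to $\ell_\infty$ and has an \SDDM{} Hessian with support contained in that of $\AA + \AA^\top$; (ii) any point $x$ with $f(x) - f_* \leq \eps^2\lA/24$ already yields an $\eps$-balancing of $\AA$; (iii) $f$ admits a regularization that bounds the $\ell_\infty$ radius of the relevant level set in terms of $B$; and (iv) applying Theorem~\ref{thm:sor_sdd} to the regularized function delivers the claimed running time.

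For (i), a direct computation gives $\nabla f(x) = r_\MM - c_\MM$ and $\nabla^2 f(x) = \mdiag(r_\MM + c_\MM) - \MM - \MM^\top$, where $\MM = \mdiag(e^x)\AA\mdiag(e^{-x})$. The Hessian is exactly the Laplacian of the symmetrization of $\MM$, so it is \SDDM{}. Second-order robustness follows because a perturbation $z$ with $\|z\|_\infty \leq 1$ rescales each entry $\MM_{ij}$ by $e^{z_i - z_j} \in [e^{-2}, e^2]$, and this multiplicative factor propagates to the Hessian in the positive semidefinite order.

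For (ii), I would combine the SOR Taylor estimate $f(x+z) \leq f(x) + \nabla f(x)^\top z + (e^2/2)\,z^\top\nabla^2 f(x)\, z$, valid for $\|z\|_\infty \leq 1$, with the standard SDD bound $\lambda_{\max}(\nabla^2 f(x)) \leq 2\max_i\bigl((r_\MM)_i + (c_\MM)_i\bigr) \leq 4 f(x)$. Taking a short descent step $z = -\eta \nabla f(x)$ with $\eta = \Theta(1/f(x))$ (checking $\|z\|_\infty \leq 1$ via $\|\nabla f\|_\infty \leq 2 f(x)$) yields $f(x) - f_* \geq \Omega(\|\nabla f(x)\|_2^2/f(x))$. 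A short AM-GM around any cycle $i_1 \to \cdots \to i_k \to i_1$ in the strongly connected support of $\AA$ gives $f(x) \geq k\bigl(\prod_j \AA_{i_j i_{j+1}}\bigr)^{1/k} \geq \lA$, since the exponentials telescope to $1$; hence $f_* \geq \lA$. Combining these with the hypothesis $f(x) - f_* \leq \eps^2 \lA/24$ yields $\|r_\MM - c_\MM\|_2 = \|\nabla f(x)\|_2 \leq \eps f(x) = \eps\sum_{ij}\MM_{ij}$, which is exactly the $\eps$-balancing condition of Definition~\ref{def:eps_balancing}.

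For (iii), I regularize by
$$\widetilde f(x) = f(x) + \mu \sum_{i=1}^{n}\bigl(e^{x_i} + e^{-x_i}\bigr), \qquad \mu = \Theta\!\left(\tfrac{\eps^2 \lA}{n e^B}\right).$$
The added term is separable, so its Hessian is a nonnegative diagonal, preserving the \SDDM{} structure; it is also second-order robust since a unit-$\ell_\infty$ perturbation rescales each $e^{x_i} + e^{-x_i}$ by a factor in $[e^{-1}, e]$. The choice of $\mu$ ensures $\widetilde f(x^*_\eps) - f_* = O(\eps^2 \lA)$ at the assumed point, while the level-set estimate $\mu\sum_i(e^{x'_i}+e^{-x'_i}) \leq \widetilde f(0) \leq \uA + 2\mu n$ forces $\|x'\|_\infty = O(B + \log(n\wA/\eps))$ for any $x'$ in the initial level set of $\widetilde f$. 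Feeding $\widetilde f$ into Theorem~\ref{thm:sor_sdd} with this $R_\infty$ and target accuracy $\Theta(\eps^2\lA)$ yields $\Otil(m B \log^2(\wA/\eps))$, and a standard doubling search over guesses for $B$ removes the need to know it a priori. The main obstacle is (iii): $f$ is invariant under translations $x \mapsto x + c\one$, so $\nabla^2 f$ is always singular along $\one$ and the level sets are unbounded along that direction; the regularizer is chosen precisely to break this symmetry at a controlled cost in accuracy, mirroring the role of the corresponding term in Lemma~\ref{lem:scalreg} for scaling.
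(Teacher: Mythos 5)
Your proposal is correct and follows essentially the same route as the paper: you establish the gradient/Hessian formulas (Lemma~\ref{lem:gradhess}), the SOR and \SDDM{} properties (Lemma~\ref{lem:fballem}), the ``small function gap $\Rightarrow$ $\eps$-balancing'' implication (Lemma~\ref{lem:balancing_requirement}), and the regularization with the same $\mu = \Theta(\eps^2\lA/(ne^B))$ (Lemma~\ref{lem:balreg}), then invoke Theorem~\ref{thm:sor_sdd}. Two cosmetic differences: you derive $f(x)\ge\lA$ by AM--GM around a cycle rather than by noting some cycle edge has $x_i - x_j \ge 0$, and you get the $\|\nabla f\|_2/f(x)$ bound via the local SOR Taylor estimate (incurring an extra $e^2$ versus the paper's global $\nabla^2 f \preceq 4f\mI$ bound along the segment), which costs only a harmless constant absorbed by adjusting $\eps$.
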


Similarly to the matrix scaling case, the proof of this theorem follows directly from the key lemmas presented below. The proof is presented in Section~\ref{sec:mainbal} of the Appendix. First, we prove that small additive
error in function optimization implies an approximate balancing for $\AA$.
\begin{lemma}\label{lem:balancing_requirement}
Consider a matrix $\AA$ and the corresponding function $f$. Any vector $x$ satisfying $f(x) -
f_* \leq \epsilon^2 \lA / 8$ yields an $\epsilon$-approximate balancing of $\AA$:
$$\MM = \mdiag(\exp(x)) \cdot \AA  \cdot \mdiag(\exp(-x))\,{.}$$
\end{lemma}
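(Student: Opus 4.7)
The plan is to relate the balancing error of $\MM$ directly to the gradient $\nabla f(x)$ and then argue that near-optimality of $f$ forces the gradient to be small.  Direct differentiation gives $(\nabla f(x))_i = \sum_j \MM_{ij} - \sum_j \MM_{ji} = (r_{\MM})_i - (c_{\MM})_i$, and clearly $f(x) = \sum_{i,j}\MM_{ij}$.  So the $\epsilon$-balancing condition of Definition~\ref{def:eps_balancing} is exactly $\|\nabla f(x)\|_2 \le \epsilon f(x)$, and the lemma reduces to deriving this gradient bound from the hypothesis $f(x) - f_* \le \epsilon^2\lA/8$.

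To convert suboptimality into a gradient bound, I would perform a one-step gradient-descent analysis.  Let $g = \nabla f(x)$ and consider $x' = x - t g$.  Using the elementary inequality $e^{-s} \le 1 - s + s^2$, valid for $|s|\le 1$, I would expand
$$
f(x') \;\le\; f(x) - t\|g\|_2^2 + t^2 \sum_{i,j}\MM_{ij}(g_i - g_j)^2,
$$
provided $2t\|g\|_\infty \le 1$.  Bounding $(g_i-g_j)^2 \le 4\|g\|_\infty^2$ makes the quadratic remainder at most $4t^2\|g\|_\infty^2 f(x)$.  Optimising $t$ over the admissible range (and using $f(x') \ge 0$ as a sanity bound in the borderline case where the unconstrained optimum would leave this range) yields $f(x) - f_* \ge \|g\|_2^4/(C\|g\|_\infty^2 f(x))$ for an absolute constant $C$.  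Upper-bounding $\|g\|_\infty \le \|g\|_2$ then gives $\|g\|_2^2 \le C f(x)(f(x) - f_*)$.

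The last ingredient is a lower bound $f(x) \ge 2\lA$.  Since $\AA$ is assumed balanceable, its support graph is strongly connected, so it contains some directed cycle $i_1\to i_2 \to \cdots \to i_k \to i_1$ with $k \ge 2$.  The exponential factors $e^{x_{i_t}-x_{i_{t+1}}}$ telescope along the cycle, so AM--GM yields
$$
f(x) \;\ge\; \sum_{t=1}^{k}\MM_{i_t i_{t+1}} \;\ge\; k \Bigl(\prod_{t=1}^{k}\AA_{i_t i_{t+1}}\Bigr)^{1/k} \;\ge\; k\lA \;\ge\; 2\lA.
$$
Substituting the hypothesis $f(x) - f_* \le \epsilon^2\lA/8$ into the previous inequality, $\|g\|_2^2 \le C f(x)\cdot \epsilon^2\lA/8 \le (C/16)\,\epsilon^2 f(x)^2$, which after tightening constants delivers $\|g\|_2 \le \epsilon f(x)$, i.e.\ the desired $\epsilon$-balancing.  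The main obstacle I expect is purely the bookkeeping: matching exactly the constant $1/8$ in the hypothesis requires either a sharper second-order estimate than $e^{-s}\le 1-s+s^2$ (for instance, invoking second-order robustness of $f$ directly, as in Definition~\ref{def:sor}) or a careful argument in the corner case where the unconstrained optimal step size would lie outside the Taylor validity region.
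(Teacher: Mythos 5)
Your proposal is correct and follows essentially the same strategy as the paper's proof: reduce the $\epsilon$-balancing condition to the gradient bound $\|\nabla f(x)\|_2\le\epsilon f(x)$, derive that bound by analyzing a single descent step, and close with a lower bound on $f(x)$ coming from a directed cycle in the (strongly connected) support graph. The only substantive difference is in how the second-order remainder is controlled: the paper exploits the {\SDDM} structure of $\nabla^2 f$ to get the global spectral bound $\nabla^2 f(y)\preceq 2\,\tr(\nabla^2 f(y))\,\mI=4f(y)\mI\preceq 4f(x)\mI$ on the level set, which lets it minimize the quadratic upper envelope over all $y$ without any step-size restriction, whereas you work coordinatewise from $e^{-s}\le 1-s+s^2$ and must therefore track the admissible range $2t\|g\|_\infty\le 1$; you correctly notice this and dispose of the borderline case via $f(x')\ge 0$, which in fact shows the constraint is never binding. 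Both routes produce $\|g\|_2^2 = O\bigl(f(x)(f(x)-f_*)\bigr)$ with constants compatible with the hypothesis $f(x)-f_*\le\epsilon^2\lA/8$, and your AM--GM refinement $f(x)\ge 2\lA$ gives you exactly the slack needed to absorb the slightly larger constant your Taylor bound produces.
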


Proving the lemma requires computing the first and second order derivatives of $f$.
\begin{lemma}\label{lem:gradhess} Let $\MM$ be the matrix obtained by balancing $\AA$ with
the vector $x$, which corresponds to $\MM = \mdiag(\exp(x)) \cdot \AA \cdot \mdiag(\exp(-x))$. The gradient and Hessian of $f$ satisfy the identities:
\begin{align*}
\nabla f(x) &= r_{\MM}-c_{\MM} \,{,} \\
\nabla^2 f(x) &= \mdiag(r_{\MM}+c_{\MM}) - (\MM+\MM^\top)\,{.}
\end{align*}
\end{lemma}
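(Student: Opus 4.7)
The plan is to carry out the calculation by direct differentiation of the series defining $f$, using the key observation that under the parameterization $\MM_{ij} = \AA_{ij} e^{x_i - x_j}$ each individual summand of $f(x)$ is just $\MM_{ij}$. This identity will let us read off the gradient and Hessian entries as row/column sums of $\MM$ with essentially no computation beyond bookkeeping.

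First I would compute the gradient. Fix an index $k$ and differentiate $f(x) = \sum_{i,j} \AA_{ij} e^{x_i - x_j}$ with respect to $x_k$. Only terms with $i = k$ or $j = k$ survive, and using $\AA_{ij} e^{x_i - x_j} = \MM_{ij}$, I get
\[
\frac{\partial f}{\partial x_k} = \sum_{j} \MM_{kj} - \sum_{i} \MM_{ik} = (r_\MM)_k - (c_\MM)_k,
\]
which is exactly the claimed gradient identity $\nabla f(x) = r_\MM - c_\MM$.

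Next I would compute the Hessian by differentiating the gradient expression once more. For the diagonal entry, differentiating $\sum_j \AA_{kj} e^{x_k - x_j} - \sum_i \AA_{ik} e^{x_i - x_k}$ with respect to $x_k$ yields $\sum_j \MM_{kj} + \sum_i \MM_{ik} = (r_\MM + c_\MM)_k$, since both sums now pick up a positive sign (the first because $x_k$ appears with a $+$ in the exponent, the second because $x_k$ appears with a $-$ and we differentiate the $-x_k$ term). For an off-diagonal entry $(k,\ell)$ with $\ell \neq k$, only two terms in the gradient depend on $x_\ell$, namely the $j=\ell$ term of the first sum and the $i=\ell$ term of the second; differentiating gives $-\AA_{k\ell} e^{x_k - x_\ell} - \AA_{\ell k} e^{x_\ell - x_k} = -\MM_{k\ell} - \MM_{\ell k}$. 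Packaging the diagonal and off-diagonal contributions yields $\nabla^2 f(x) = \mdiag(r_\MM + c_\MM) - (\MM + \MM^\top)$, as claimed.

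There is no real obstacle here; the whole argument is a routine application of the chain rule together with the identification of $\MM$ with the termwise exponential scaling of $\AA$. The only mild care needed is in the signs when differentiating $e^{x_i - x_j}$ in the two sums defining the gradient, which is what produces the positive sum $r_\MM + c_\MM$ on the diagonal and the negative symmetrization $-(\MM + \MM^\top)$ off the diagonal.
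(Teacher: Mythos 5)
Direct differentiation is the natural approach here, and the paper states the lemma without proof, so there is no distinct argument to compare against. Your gradient computation is fine: including the $i=j=k$ term in both sums adds and subtracts the same constant $\MM_{kk}$, so extending the sums to all $i,j$ is harmless.

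There is, however, a slip in the Hessian diagonal. The $j=k$ term of $\sum_j \AA_{kj}e^{x_k-x_j}$ is the constant $\AA_{kk}$, whose derivative with respect to $x_k$ is zero, not $\MM_{kk}$; likewise for the $i=k$ term of the second sum. So in fact
\[
\frac{\partial^2 f}{\partial x_k^2} = (r_\MM)_k + (c_\MM)_k - 2\MM_{kk},
\]
not $(r_\MM + c_\MM)_k$ as you state. The lemma's formula $\mdiag(r_\MM+c_\MM)-(\MM+\MM^\top)$ does have $(k,k)$ entry $(r_\MM)_k+(c_\MM)_k - 2\MM_{kk}$, so the identity is correct --- but your packaging step then silently subtracts $2\MM_{kk}$ from a diagonal value your computation never included. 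The two slips cancel, producing the right answer by accident. This is vacuous when $\AA$ has zero diagonal (and in the balancing application a nonzero diagonal of $\AA$ only adds a constant to $f$), but the lemma as written permits $\AA_{kk}\ne 0$, so the intermediate diagonal claim should read $(r_\MM)_k+(c_\MM)_k-2\MM_{kk}$ for the argument to be self-consistent.
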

Intuitively, since the gradient is $0$ precisely when the corresponding point produces an
exact balancing, a small gradient should imply a good approximate balancing. This guides the
proof of Lemma~\ref{lem:balancing_requirement}. We will prove that a large gradient
corresponds to being able to significantly decrease the function value, thus contradicting
the approximate optimality of the point.
The technical details are presented in Section~\ref{sec:pfofbalreq}.

 \subsubsection{Regularization for Solving via Box-Constrained Newton Method}
We observe that the function $f$ defined in~\eqref{eq:fbal} satisfies all the conditions required to efficiently minimize it using the method we described in Section~\ref{sec:mainloop}.
\begin{lemma}\label{lem:fballem}
The function $f$ is convex, second-order robust with respect to $\ell_\infty$,
and its Hessian is \SDDM.
\end{lemma}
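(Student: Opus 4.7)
The plan is to follow exactly the template of the proof of Lemma \ref{lem:fscallem} for matrix scaling, since the function $f$ in \eqref{eq:fbal} has essentially the same exponential-sum structure. All three claims (convexity, SOR w.r.t.\ $\ell_\infty$, and SDDM Hessian) will be read off from the explicit formula for $\nabla^2 f(x)$ given in Lemma \ref{lem:gradhess}, namely
\[
\nabla^2 f(x) \;=\; \mdiag(r_\MM + c_\MM) - (\MM + \MM^\top),
\]
where $\MM = \mdiag(\exp(x))\,\AA\,\mdiag(\exp(-x))$ is entry-wise nonnegative.

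First I would recognize that $\nabla^2 f(x)$ is the Laplacian of the undirected graph with symmetric edge weights $\MM_{ij}+\MM_{ji}\geq 0$: the diagonal is exactly the weighted degree $(r_\MM+c_\MM)_i=\sum_{j}(\MM_{ij}+\MM_{ji})$, and the off-diagonals are $-(\MM_{ij}+\MM_{ji})\leq 0$. This immediately yields the SDDM property (in fact Laplacian, hence SDDM with equality in the diagonal-dominance inequality) and, since every Laplacian is positive semidefinite, also gives the convexity of $f$.

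Second, I would verify second-order robustness by evaluating the quadratic form. For any $v\in\R^n$,
\[
v^\top \nabla^2 f(x)\, v \;=\; \sum_{1\leq i<j\leq n}(\MM_{ij}+\MM_{ji})(v_i-v_j)^2.
\]
If we replace $x$ by $x+z$ with $\|z\|_\infty\leq 1$, then each entry $\MM_{ij}=\AA_{ij}e^{x_i-x_j}$ gets multiplied by $e^{z_i-z_j}$, and $|z_i-z_j|\leq 2$ implies $e^{z_i-z_j}\in[e^{-2},e^{2}]$. Hence every coefficient $\MM_{ij}+\MM_{ji}$ in the displayed sum is scaled by a factor in $[e^{-2},e^{2}]$, which gives
\[
e^{-2}\,v^\top \nabla^2 f(x)\,v \;\leq\; v^\top \nabla^2 f(x+z)\,v \;\leq\; e^{2}\,v^\top \nabla^2 f(x)\,v
\]
for all $v$, i.e.\ $\nabla^2 f(x+z)\approx_2 \nabla^2 f(x)$, which is exactly Definition \ref{def:sor}.

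There is no real obstacle here; the proof is a direct structural computation, and the only thing to be careful about is the bookkeeping of the factor of $2$ in the exponent (coming from the difference $z_i-z_j$ rather than a single coordinate shift), which is what makes the constant in Definition \ref{def:sor} come out to exactly $2$ and matches the corresponding step in the scaling proof.
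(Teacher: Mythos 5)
Your proof is correct and takes exactly the approach the paper intends (the paper omits the proof of this lemma, pointing implicitly to the identical argument given for Lemma~\ref{lem:fscallem}): read the Laplacian/SDDM structure and convexity off the explicit Hessian from Lemma~\ref{lem:gradhess}, then check second-order robustness by observing that a perturbation with $\|z\|_\infty\leq 1$ rescales each $\MM_{ij}$ by a factor in $[e^{-2},e^2]$ in the quadratic form. One small remark: the diagonal entries $\MM_{ii}$ of the scaled matrix cancel exactly in $\mdiag(r_\MM+c_\MM)-(\MM+\MM^\top)$, so the Hessian is literally a Laplacian even if $\AA$ has nonzero diagonal, which is what you need for the quadratic-form identity.
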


The method we described in Section~\ref{sec:mainloop} depends on a promise concerning the point we initialize it with. Recall that in order to apply Theorem~\ref{thm:sor_sdd} we require an upper bound on the size of the $\ell_\infty$-ball containing the level set of the initial point. In order to provide good bounds, we regularize $f$. The description and effect of this regularization in captured in the following lemma.

\begin{lemma}\label{lem:balreg}
Suppose that there exists a point $x$ such that $f(x) \leq f_* + \eps^2\lA/24$, and $\| x
\|_{\infty} \leq B$. Then, the regularization of $f$ is defined as
\begin{equation}
\widetilde f(x) = f(x) + \frac{\eps^2 \lA}{48 n e^B} \sum_{i=1}^n  (e^{x_i} + e^{-x_i})
\end{equation} and satisfies the following properties:
\begin{enumerate}
\item $\widetilde{f}$ is second-order robust with respect to $\ell_\infty$ and has a {\SDDM} Hessian,
\item $f(x)\leq \widetilde f(x)$, and if $\widetilde{x}^*$ is the minimizer of $\widetilde{f}$, then
$\widetilde{f}(\widetilde{x}^*) \leq f(x^*) + \epsilon^2 \lA /24$,
\item for all $y$ such that $\widetilde{f}(y) \leq \widetilde{f}(0)$, $\|y - x^*\|_\infty = O(B \log (n \wA/\epsilon))$.
\end{enumerate}
\end{lemma}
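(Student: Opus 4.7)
The plan is to follow the same template as the scaling-case Lemma~\ref{lem:scalreg}. Write $R(x)=\alpha\sum_i(e^{x_i}+e^{-x_i})$ with $\alpha=\eps^2\lA/(48ne^B)$, so $\widetilde f = f+R$. The coefficient $\alpha$ is chosen to be small enough that $R$ is negligible throughout the ball $\|x\|_\infty\le B$ (so the optimum value is barely perturbed) yet large enough that $R$ eventually dominates $f(0)=\uA$ (so the level set through the origin is confined).

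For property (1), the Hessian of $R$ is the strictly positive diagonal matrix $\alpha\,\mdiag(e^{x_i}+e^{-x_i})$. Adding a positive diagonal to an SDDM matrix preserves the SDDM property, so combined with Lemma~\ref{lem:fballem} we get that $\nabla^2\widetilde f$ is SDDM. For SOR, within a unit $\ell_\infty$ shift each $e^{\pm x_i}$ changes by a factor in $[1/e,e]$, so the diagonal entries of $\nabla^2 R$ change by at most a factor of $e$; combined with the SOR guarantee for $f$, this yields the required $e^{\pm 2}$ spectral approximation for $\nabla^2\widetilde f$.

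For property (2), the bound $f\le\widetilde f$ is immediate from $R\ge 0$. Using the hypothesized $x$ with $\|x\|_\infty\le B$, we have $e^{x_i}+e^{-x_i}\le 2e^B$ for each $i$, so $R(x)\le 2\alpha n e^B=\eps^2\lA/24$. Therefore
$\widetilde f(\widetilde x^*)\le\widetilde f(x)\le f(x)+\eps^2\lA/24\le f_*+\eps^2\lA/12$,
which matches the claimed bound up to the tuning of the constant in $\alpha$ (one may equivalently halve $\alpha$ to achieve the exact $\eps^2\lA/24$ stated).

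For property (3), I evaluate $\widetilde f(0)=f(0)+R(0)=\uA+\eps^2\lA/(24e^B)$. For any $y$ with $\widetilde f(y)\le\widetilde f(0)$, since $f\ge 0$,
\[
\alpha\sum_i\bigl(e^{y_i}+e^{-y_i}\bigr)\le\widetilde f(y)\le\uA+\eps^2\lA/(24e^B),
\]
which after dividing by $\alpha$ and using $\wA=\uA/\lA$ yields $\sum_i(e^{y_i}+e^{-y_i})=O(ne^B\wA/\eps^2)$. Since $e^{|y_i|}\le e^{y_i}+e^{-y_i}$ is bounded by the entire sum, every coordinate satisfies $|y_i|\le B+O(\log(n\wA/\eps))$, i.e.\ $\|y\|_\infty=O(B\log(n\wA/\eps))$. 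The minimizer $\widetilde x^*$ itself lies in this level set, so a triangle inequality gives $\|y-\widetilde x^*\|_\infty=O(B\log(n\wA/\eps))$, as required.

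No step is deep — each is a one-line estimate — but the main subtlety I would watch for is the translation invariance of $f$ (any shift $x\mapsto x+c\,\vones$ leaves $f$ unchanged), which means that without regularization neither $x^*$ nor $R_\infty$ is well defined. Adding $R$ breaks this symmetry by anchoring the minimizer near $0$, and the particular choice of $\alpha$ is precisely the sweet spot that makes both property (2) and property (3) hold simultaneously; verifying this balance is the only place where constants really matter.
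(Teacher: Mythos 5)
Your proof is correct and follows the same template the paper uses (the paper itself simply defers to the proof of Lemma~\ref{lem:scalreg}, which proceeds exactly as you do: nonnegativity of the regularizer, evaluating $R$ on the ball $\|x\|_\infty\le B$, and reading off the $\ell_\infty$ bound from the regularizer's growth). The factor-of-two slack you flag in property~(2) is real but harmless, and your closing remark about translation invariance of $f$ and how the regularizer anchors the minimizer is a useful clarification the paper leaves implicit.
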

The details of the lemma are identical to Lemma~\ref{lem:scalreg}, and we therefore omit the
proof.
In particular, this lemma implies that approximately optimizing the regularized function will still
produce an approximately balanced matrix.

Theorem~\ref{thm:balancing} then follows by applying Theorem~\ref{thm:sor_sdd} to the regularized function defined in Lemma~\ref{lem:balreg}, and combining it with the error guarantee of Lemma~\ref{lem:balancing_requirement}.
The details are presented in Section~\ref{sec:mainbal} of the Appendix. Similarly to the case of the scaling problem, we don't need to know any a priori bound on $B$. Just trying increasingly larger
value of $B$ (i.e., doubling our guess at each iteration) is sufficient.

\subsubsection{Bounding the Condition Number of the Optimal Balancing}
As we saw above, the running time given by Theorem~\ref{thm:balancing} depends
logarithmically on $\kappa(\DD^*)$, where $\DD^*$ is the matrix that achieves the optimal
balancing. While, in general $\kappa(\DD^*)$ can be exponentially large (and therefore we
might be better off running the interior point method described in Section~\ref{sec:ipm}), tighter connectivity of the graph implies better bounds:

\begin{lemma}\label{lem:baldiam}
Let $\AA \in \R^{n\times n}$ be a nonnegative matrix. Suppose that the graph with adjacency
matrix $\AA$ is strongly connected, and every vertex can reach every other vertex within at
most $k$ hops.  Then the matrix $\DD^*$ that perfectly balances $\AA$ has $\log \kappa(\DD^*)
= O(k \log \wA )$.
\end{lemma}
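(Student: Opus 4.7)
Write $\DD^* = \mdiag(\exp(x^*))$, where $x^*$ is a minimizer of the convex function $f(x) = \sum_{ij}\AA_{ij} e^{x_i - x_j}$ from \eqref{eq:fbal}; the existence of a perfect balancing means $x^*$ is attained (and by Lemma~\ref{lem:gradhess}, $\nabla f(x^*)=0$ is exactly the balance condition $r_\MM = c_\MM$, so this is consistent). Note that $\log\kappa(\DD^*) = \max_i x^*_i - \min_j x^*_j$, and that $f$ is translation invariant, so we can shift $x^*$ arbitrarily without loss of generality.

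The first key observation is that the total weight of the balanced matrix $\MM = \DD^* \AA (\DD^*)^{-1}$ is controlled: $\sum_{ij} \MM_{ij} = f(x^*) \le f(\vec 0) = \sum_{ij}\AA_{ij} = \uA$, since $x^*$ is a minimizer of $f$. In particular every individual entry of $\MM$ is at most $\uA$.

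The second step is a pigeonhole along a short path. Let $u,v$ attain the maximum and minimum of $x^*$, and let $u = u_0, u_1, \ldots, u_\ell = v$ be a directed path of length $\ell \le k$ in the graph of $\AA$, which exists by the hypothesis. Writing $\delta_i = x^*_{u_i} - x^*_{u_{i+1}}$, the $\delta_i$ sum to $x^*_u - x^*_v = \log \kappa(\DD^*)$, so some $\delta_{i^*} \ge \log\kappa(\DD^*)/k$. For that edge,
\[
\MM_{u_{i^*}, u_{i^*+1}} \;=\; \AA_{u_{i^*}, u_{i^*+1}} \, e^{\delta_{i^*}} \;\ge\; \lA \cdot e^{\log\kappa(\DD^*)/k}\,{.}
\]

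Combining the two steps gives $\lA \cdot e^{\log\kappa(\DD^*)/k} \le \uA$, i.e.\ $\log\kappa(\DD^*) \le k\log(\uA/\lA) = k\log \wA$, which is the desired $O(k\log \wA)$ bound. The only mildly nontrivial ingredient is the very first step (using minimality to bound the total mass of $\MM$ by $\uA$); everything else is a direct pigeonhole, so I don't anticipate a real obstacle.
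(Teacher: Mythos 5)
Your proof is correct and takes essentially the same route as the paper: the key step in both is to observe that minimality gives $f(x^*) \le f(\allzeros) = \uA$, hence every entry $\MM_{ij} = \AA_{ij}e^{x^*_i - x^*_j}$ of the balanced matrix is at most $\uA$, and then to propagate the resulting per-edge bound $x^*_i - x^*_j \le \ln(\uA/\AA_{ij}) \le \ln\wA$ along a directed path of length at most $k$ from the argmax to the argmin of $x^*$. The only cosmetic difference is that you phrase the last step as a pigeonhole (some edge on the path carries increment at least $\log\kappa/k$) while the paper simply sums the per-edge bounds along the path; these are the same calculation.
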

The proof of the lemma is in Section~\ref{sec:pfofbaldiam} of the Appendix and it yields the following upper bound on the value of $\kappa(\DD^*)$.
\begin{corollary}\label{cor:cond}
If  $\AA$ is a balanceable matrix, and $\DD^*$ perfectly balances it, then $\log \kappa(\DD^*) = O(n \ln \wA)$. If $\AA$ is strictly positive, then $\log \kappa(\DD^*) = O(\ln \wA)$.
\end{corollary}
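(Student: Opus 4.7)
The plan is to exploit the convexity/optimality characterization (Lemma~\ref{lem:gradhess}) together with the path-connectedness assumption. Writing $d^*_i = e^{x^*_i}$ where $x^*$ is the minimizer of $f(x)=\sum_{ij}\AA_{ij}e^{x_i-x_j}$, and letting $\MM=\DD^*\AA(\DD^*)^{-1}$ be the balanced matrix, the total weight $W=\sum_{ij}\MM_{ij}=f(x^*)$ satisfies the crucial bound
\[
W = f(x^*)\;\leq\; f(0)\;=\;\sum_{ij}\AA_{ij}\;=\;\uA,
\]
simply because $x^*$ minimizes $f$. This is the only optimality property I plan to use; no first-order conditions enter explicitly.

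Next, I would translate a large condition number into a long path. Balancing is invariant under multiplying $\DD^*$ by a scalar, so I can talk about the ratios $d^*_i/d^*_j$ directly. Pick $a=\arg\max_i d^*_i$ and $b=\arg\min_i d^*_i$, so $\kappa(\DD^*)=d^*_a/d^*_b$. By the diameter hypothesis there is a directed path $a=v_0,v_1,\dots,v_\ell=b$ of length $\ell\leq k$ in the support of $\AA$, meaning each $\AA_{v_{s-1}v_s}\geq \lA$. Along this path I examine the corresponding entries of $\MM$:
\[
\MM_{v_{s-1}v_s}\;=\;\frac{d^*_{v_{s-1}}}{d^*_{v_s}}\,\AA_{v_{s-1}v_s}.
\]

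Now I use the key inequality that every entry of $\MM$ is bounded by the total weight, $\MM_{v_{s-1}v_s}\leq W\leq\uA$. Rearranging gives
\[
\frac{d^*_{v_{s-1}}}{d^*_{v_s}}\;\leq\;\frac{\uA}{\AA_{v_{s-1}v_s}}\;\leq\;\frac{\uA}{\lA}\;=\;\wA.
\]
Taking the product over $s=1,\dots,\ell$ telescopes to
\[
\kappa(\DD^*)\;=\;\frac{d^*_a}{d^*_b}\;=\;\prod_{s=1}^\ell \frac{d^*_{v_{s-1}}}{d^*_{v_s}}\;\leq\;\wA^{\ell}\;\leq\;\wA^{k},
\]
so $\log\kappa(\DD^*)\leq k\log\wA$, which is the claim.

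The only subtle step is identifying the right a priori bound on $W$; a naive bound like $W\leq \kappa(\DD^*)\cdot\uA$ would circularly depend on $\kappa(\DD^*)$. The cleanest escape is the one above: evaluating $f$ at the trivial point $x=0$ produces $\uA$ for free, and by optimality this caps $f(x^*)=W$. Everything else is a routine telescoping along the short path guaranteed by the diameter assumption. Corollary~\ref{cor:cond} then follows immediately by plugging in $k\leq n-1$ for the general strongly connected case and $k=1$ when $\AA$ is strictly positive.
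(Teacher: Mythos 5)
Your proof is correct and follows essentially the same route as the paper: you re-derive Lemma~\ref{lem:baldiam} using the bound $f(x^*) \leq f(0) = \uA$, deduce that each entry of the balanced matrix is at most $\uA$ so each edge gives $x_i - x_j \leq \ln\wA$, and telescope along a shortest path between the extremal coordinates; the corollary then follows by taking $k \leq n-1$ (strong connectivity) or $k=1$ (strict positivity). The only cosmetic difference is working multiplicatively with $d^*_i = e^{x_i^*}$ rather than additively with the $x_i^*$, which does not change the argument.
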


\subsection{Discussion of Numerical Precision Aspects}
The exposition of the analysis so far is under the assumption of exact arithmetic. However,
our algorithms do in fact tolerate finite fixed-point precision on the scale of the
natural parameters of the problem ($n$, $\epsilon$, $\uA$ and $\wA$). It is therefore
sufficient to use a number of bits that is logarithmic in the input parameters of the problem.

Between iterations, we store a fixed-point representation of the variables $x_i$.  These are,
by construction, bounded by the parameter $R_\infty$.  It is important (at least if using
fixed-point rather than floating-point) that we are storing the $x_i$ rather than the actual
scalings $e^{x_i}$.

When iterating, we first determine the post-scaling elements of the matrix.  These can also
simply be stored in fixed-point--i.e. up to additive error. Note that this rounding could
completely eliminate very small entries of the matrix.  This representation then gives us the
gradient and Hessian of the problem up to additive error. To make the additive error
polynomially small only a logarithmic number of bits are needed, because the entries
of the scaled matrix can never be more than polynomial (in the natural parameters mentioned).
This follows from the fact that the objective function, which includes the sum of all the entries,
cannot increase.

Finally, there is an polynomially small absolute lower bound on the eigenvalues of the Hessian,
simply from the regularizer itself.  This ensures that additive error to the gradient and
Hessian can only affect the function value improvement by a polynomially larger amount,
and ensures the stability of the $k$-oracle algorithm.
Thus polynomially small error is sufficient, requiring only logarithmically many bits.

\subsection{Matrix Scaling and Balancing as Nonlinear Flow Problems}
An intriguing property of the matrix scaling and matrix balancing problems is that they both can be phrased as an instance of a more general problem. This problem can be seen as generalization of the electrical flow problem. That is, the problem of finding a potential-induced flow that routes a fixed demand in the case when Ohm's Law, i.e., the relationship between the potential difference on a given edge and the flow flowing through it is exponential instead of being linear.  (See ~\cite{duffin1947nonlinear} for a comprehensive treatment of such nonlinear networks.) To see this, given a weighted directed graph $G = (V, E, w)$ let us  define the \textit{edge-vertex} incidence matrix $\BB$ being an $n \times m$ matrix with rows indexed by edges and columns indexed by vertices such that
$$
\BB_{v,e} = \begin{cases}
1 & \textnormal{if } e = (v, u) \in E\,{,} \\
-1 & \textnormal{if } e = (u, v) \in E\,{,} \\
0 & \textnormal{otherwise.}
\end{cases}
$$

Using this matrix we define the nonlinear operator $\mathcal{L}$ as follows.
\begin{defn}\label{def:nonlin}
Let $G = (V,E,w)$ be a directed graph with vertex-edge incidence matrix $\BB$, and let $\mvar{W} = \mdiag(w)$. We define the operator $\mathcal{L}$ associated with $G$ as
\begin{equation}\label{eq:nonlin}
\mathcal{L}(x) = \BB^\top \mvar{W} \exp(\BB x ) \,{.}
\end{equation}
\end{defn}
This can be seen as a nonlinear generalization of the Laplacian operator, which is a linear operator defined as $\LL = \BB^\top \mvar{W} \BB$. There is extensive literature on solving Laplacian linear systems~\cite{SpielmanT04,KoutisMP10,KoutisMP11,KelnerOSZ13,CohenKMPPRX14,KyngLPSS16,KyngS16}. We argue that our framework can be used to solve systems of the form
\begin{equation}
\mathcal{L}(x) = d\,{.}
\end{equation}
This can be seen as finding vertex potentials $x$ which induce a flow vector $f$:
$$
f_{uv} = w_{uv} \cdot e^{x_u - x_v}\,{,} \quad \textnormal{for all $(u,v) \in E$},
$$
such that $f$ routes a given demand $d$. This should be contrasted with the case of electrical flows where the flow is induced as $f_{uv}=w_{uv}(x_u-x_v)$. As it turns out, the solution to the system $\mathcal{L}(x)=d$ is the minimizer of a function similar to those defined in Equations~\ref{eq:scaling_fn} and~\ref{eq:fbal}. More precisely:
\begin{lemma}
Let $G = (V,E,w)$ be a directed graph with nonnegative weights,  let  $\AA$ be its adjacency matrix, and let $\mathcal{L}$ be the operator associated with  $G$, as defined as in Equation~\ref{eq:nonlin}.
Consider the function $f$ defined as
\begin{equation}
f(x) = \sum_{1\leq i, j \leq n} \AA_{ij} e^{x_i - x_j} - \sum_{1\leq i \leq n} d_i x_i\,{.}
\end{equation}
Then $f$ has a minimizer  $x$ if and only if it is the solution to the system $\mathcal{L}(x) = d$.
\end{lemma}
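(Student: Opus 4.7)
The plan is to reduce the lemma to the standard first-order optimality condition for smooth convex functions by showing that $\nabla f(x) = \mathcal{L}(x) - d$ and that $f$ is convex, so that $\nabla f(x) = 0$ characterizes minimizers.

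First I would verify convexity: each summand $\AA_{ij} e^{x_i - x_j}$ is the composition of the convex function $e^t$ with the linear map $(x_i, x_j) \mapsto x_i - x_j$, scaled by the nonnegative coefficient $\AA_{ij}$, hence convex; summing preserves convexity, and the term $-\sum_i d_i x_i$ is linear, so $f$ is convex on $\R^n$.

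Next I would compute the gradient componentwise. Differentiating $f$ with respect to $x_v$ yields one contribution from each term in which $v$ appears as the first index of $\AA$ and one from each term in which it appears as the second, giving
\[
(\nabla f(x))_v \;=\; \sum_{j} \AA_{vj}\, e^{x_v - x_j} \;-\; \sum_{i} \AA_{iv}\, e^{x_i - x_v} \;-\; d_v.
\]
I would then match this against $\mathcal{L}(x) = \BB^\top \mvar{W} \exp(\BB x)$. For each directed edge $e = (u,w)$, the entry $(\BB x)_e$ equals $x_u - x_w$, and in $\BB^\top \mvar{W} \exp(\BB x)$ vertex $v$ receives $+w_e e^{x_v - x_w}$ from each outgoing edge $e = (v, w)$ and $-w_e e^{x_u - x_v}$ from each incoming edge $e = (u, v)$. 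Identifying $\AA_{vw}$ with $w_{(v,w)}$ shows that this coincides exactly with the expression above for $(\nabla f(x))_v + d_v$, establishing $\nabla f(x) = \mathcal{L}(x) - d$.

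With both ingredients in hand, the biconditional is immediate from convex analysis: for a convex differentiable function, $x$ is a global minimizer if and only if $\nabla f(x) = 0$, which is equivalent to $\mathcal{L}(x) = d$. The only subtlety, which I would flag rather than treat as a real obstacle, is that $f$ need not actually attain its infimum (depending on whether $d$ lies in the appropriate image, $f$ could be unbounded below or approach its infimum only in the limit). The lemma should therefore be read as the statement that $x$ is a minimizer of $f$ if and only if $\mathcal{L}(x) = d$; once this reading is fixed, the proof is essentially a direct gradient calculation, with the only notational care needed being in translating between the adjacency-matrix form of $f$ and the incidence-matrix form of $\mathcal{L}$.
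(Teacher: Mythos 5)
Your proof is correct and follows essentially the same route as the paper, which simply observes that writing the first-order optimality condition $\nabla f(x) = 0$ for the convex function $f$ recovers the system $\mathcal{L}(x) = d$. Your explicit componentwise gradient computation and the remark about $f$ possibly not attaining its infimum are just more detailed write-ups of the same argument.
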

The proof follows directly from writing optimality conditions for $f$, noting that the condition that $\nabla f(x) = 0$ is equivalent to $\mathcal{L}(x) = d$. Similarly to Theorem~\ref{thm:escaling} and Theorem~\ref{thm:ebalancing}, we can provide conditions on function value error to bound the error $\|d - \mathcal{L}(x)\|_2$. Also, in order to obtain a good running time, we require regularizing $f$ in a manner similar to the regularization applied in Lemmas~\ref{lem:scalreg} and~\ref{lem:balreg}. Note that in the case of the scaling and balancing problems, since we require problem specific error guarantees, the regularization needs to be customized accordingly. 

Finally, we state without proof that balancing and scaling are instances of solving $\mathcal{L}(x) = d$.
\begin{obs}
Let $\AA \in \R^{n \times n}$ be a balanceable nonnegative matrix. Let $\mathcal{L}$ be the nonlinear operator associated with the graph with adjacency matrix $\AA$.
Then the solution $x$ to $\mathcal{L}(x) = 0$ yields a balancing $\mdiag(\exp(x))$.
\end{obs}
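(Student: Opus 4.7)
The plan is to unpack the definition of $\mathcal{L}$ coordinate-wise and directly identify $\mathcal{L}(x)$ with the imbalance vector $r_\MM - c_\MM$ of the scaled matrix $\MM := \mdiag(\exp(x))\,\AA\,\mdiag(\exp(-x))$. Once this identity is established, the claim is immediate, since by definition balancedness of $\MM$ is the statement $r_\MM = c_\MM$.

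First I would fix the graph associated to $\AA$: take $V = [n]$, let $E = \supp(\AA)$ be the set of directed edges $(i,j)$ with $\AA_{ij}>0$, and assign weights $w_{ij} = \AA_{ij}$. Let $\BB$ be the corresponding vertex--edge incidence matrix under the sign convention from Definition~\ref{def:nonlin}, and $\mvar{W} = \mdiag(w)$. Then for any edge $e = (i,j)$, the $e$-th coordinate of $\BB x$ equals $x_i - x_j$, so the $e$-th coordinate of $\mvar{W}\exp(\BB x)$ is $\AA_{ij}\,e^{x_i - x_j}$, which is exactly $\MM_{ij}$.

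Next I would apply $\BB^\top$ to this edge vector. At a vertex $v$, this operation collects $+\MM_{vj}$ for each outgoing edge $(v,j)$ and $-\MM_{iv}$ for each incoming edge $(i,v)$, producing
\[
(\mathcal{L}(x))_v \;=\; \sum_{j} \MM_{vj} - \sum_{i} \MM_{iv} \;=\; (r_\MM)_v - (c_\MM)_v.
\]
This is precisely the gradient identity already recorded in Lemma~\ref{lem:gradhess}, so $\mathcal{L}(x) = \nabla f(x)$ for the balancing objective $f$ from \eqref{eq:fbal}. Hence $\mathcal{L}(x) = 0$ if and only if $r_\MM = c_\MM$, i.e.\ $\MM$ is balanced, and therefore $\DD := \mdiag(\exp(x))$ is a balancing of $\AA$.

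The only non-automatic part is verifying that the sign convention of $\BB$ in Definition~\ref{def:nonlin} lines up so that outgoing edges from $v$ contribute the row sums $(r_\MM)_v$ and incoming edges contribute the column sums $(c_\MM)_v$, rather than the reverse; the rest is bookkeeping. Since $\AA$ is assumed balanceable, the existence of such an $x$ (so that the hypothesis of the observation is not vacuous) follows from the results of~\cite{KKS97} invoked in Section~\ref{sec:mbal}.
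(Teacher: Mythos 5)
Your proof is correct, and since the paper explicitly states this observation ``without proof,'' there is no alternative argument to compare against; your coordinate-wise unpacking of $\mathcal{L}(x) = \BB^\top \mvar{W}\exp(\BB x)$ into $r_\MM - c_\MM$ is exactly the intended computation, and it agrees with the gradient identity $\nabla f(x) = r_\MM - c_\MM$ from Lemma~\ref{lem:gradhess} that the paper itself cites in the surrounding lemma. Your side remark that balanceability of $\AA$ guarantees the existence of a solution $x$ (so the observation's hypothesis is not vacuous) is a sensible addition the paper leaves implicit.
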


\begin{obs}
Let $\AA \in \R^{n \times n}$ be a $(r,c)$-scalable nonnegative matrix. Let $\mathcal{L}$ be the nonlinear operator associated with the graph with adjacency matrix
$\begin{bmatrix} 
\mzero & \AA  \\
\mzero & \mzero
\end{bmatrix}$. Then the solution $z = (x,y)^\top$, with $x,y \in \R^n$ to $\mathcal{L}(z) = (r, -c)^\top$ yields a $(r,c)$-scaling $(\mdiag(\exp(x)), \mdiag(\exp(y))$.
\end{obs}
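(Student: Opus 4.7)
The plan is to reduce the claim to a direct unpacking of the gradient computation performed for the scaling objective in Lemma~\ref{lem:gradhessscal}, exploiting the lemma immediately preceding this observation (the one stating that minimizers of the objective $\sum \widehat{\AA}_{ij} e^{z_i - z_j} - \sum d_i z_i$ coincide with solutions of $\mathcal{L}(z) = d$). First, I would specialize the nonlinear operator $\mathcal{L}$ to the $2n$-vertex directed graph whose weighted adjacency matrix is $\widehat{\AA} = \begin{bmatrix} \mzero & \AA \\ \mzero & \mzero \end{bmatrix}$. Each edge of this graph goes from some vertex $i \in \{1,\ldots,n\}$ to some vertex $n+j \in \{n+1,\ldots,2n\}$, carrying weight $\AA_{ij}$, so there are exactly $|\supp(\AA)|$ edges.

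Next, writing $z = (x,y)^\top$ with $x,y \in \R^n$, for the edge $e = (i,n+j)$ one has $(\BB z)_e = x_i - y_j$, so by Definition~\ref{def:nonlin}
\begin{align*}
(\mathcal{L}(z))_i &= \sum_j \AA_{ij} e^{x_i - y_j}, \quad i=1,\ldots,n, \\
(\mathcal{L}(z))_{n+j} &= -\sum_i \AA_{ij} e^{x_i - y_j}, \quad j=1,\ldots,n.
\end{align*}
If we define $\MM = \mdiag(\exp(x)) \AA \mdiag(\exp(y))$, so that $\MM_{ij} = \AA_{ij} e^{x_i - y_j}$, then the first block above is precisely the vector $r_{\MM}$ of row sums of $\MM$ and the second is $-c_{\MM}$. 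Consequently, the system $\mathcal{L}(z) = (r,-c)^\top$ is equivalent to the conjunction $r_{\MM} = r$ and $c_{\MM} = c$, which is exactly the definition of $(\XX,\YY) = (\mdiag(\exp(x)), \mdiag(\exp(y)))$ being a $(r,c)$-scaling of $\AA$.

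To close the loop, I would invoke Lemma~\ref{lem:gradhessscal}, which identifies $\nabla f(x,y)$ with $(r_{\MM}-r, -(c_{\MM}-c))^\top$ for the scaling objective $f$ from \eqref{eq:scaling_fn}: thus $\mathcal{L}(z) = (r,-c)^\top$ is the first-order optimality condition $\nabla f(x,y) = 0$ for the same $f$ produced by the preceding lemma applied to the graph $\widehat{\AA}$ with demand $(r,-c)^\top$. The hypothesis that $\AA$ is $(r,c)$-scalable guarantees that such a $z$ exists. There is no real obstacle here; the only point that requires a moment of care is checking that the edge orientations in the definition of $\BB$ distribute the demand between the two blocks of coordinates with the sign pattern $(r,-c)$ rather than $(-r,c)$ or any permutation thereof, which is why the adjacency matrix is placed in the upper-right block (edges from $\{1,\ldots,n\}$ to $\{n+1,\ldots,2n\}$) rather than the lower-left.
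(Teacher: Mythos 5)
The paper explicitly states this observation without proof, so there is no argument to compare against; your verification is correct and is the natural one: unfold Definition~\ref{def:nonlin} for the bipartite graph on $2n$ vertices whose only edges run from $i$ to $n+j$ with weight $\AA_{ij}$, observe that the two blocks of $\mathcal{L}(z)$ are precisely $r_\MM$ and $-c_\MM$ for $\MM_{ij}=\AA_{ij}e^{x_i-y_j}$, and conclude that $\mathcal{L}(z)=(r,-c)^\top$ is exactly the $(r,c)$-scaling condition (equivalently, $\nabla f(x,y)=0$ via Lemma~\ref{lem:gradhessscal}). One small caveat that you inherit from the paper rather than introduce yourself: writing $\MM = \mdiag(\exp(x))\,\AA\,\mdiag(\exp(y))$ alongside $\MM_{ij}=\AA_{ij}e^{x_i-y_j}$ is internally inconsistent (the former gives $e^{x_i+y_j}$); the scaling the computation actually produces is $\bigl(\mdiag(\exp(x)),\mdiag(\exp(-y))\bigr)$, and the same sign ambiguity is already present in the statement of Lemma~\ref{lem:gradhessscal} and of the observation itself.
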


%!TEX root = main.tex
\section{Implementing an $O(\log n)$-Oracle in Nearly Linear Time}\label{sec:nearlylineartime}

In Section~\ref{sec:scb} we reduced the balancing and scaling problems to the approximate
minimization of second-order robust functions with respect to the $\ell_\infty$ norm. All
that is left to have a complete algorithm, we need a fast procedure to implement a $k$-oracle
as in Definition~\ref{def:oracle}.
Namely, show how to construct an $O(\log n)$-oracle for the problem,
\begin{equation}
\label{eq:problem}
\min_{\|x\|_\infty \leq 1} x^\top\MM x + \langle b, x\rangle\,{,}
\end{equation}
where $\MM$ is an {\SDDM} matrix. For this section, whenever we say that a matrix is SDD we
will also imply that the off-diagonal entries are nonpositive.

One possible approach, is to use standard convex optimization reductions to turn this problem
into the minimization of the maximum of an $\ell_\infty$ norm and an $\ell_2$ norm subject to
linear constraints. This problem can be solved in time $\Otil(mn^{1/3})$ using the
multiplicative weights framework as applied in \cite{CKMST,CMMP13}. The resulting algorithm
for implementing the $k$-oracle would take time $\Otil(m+n^{4/3})$, by taking advantage of
spectral sparsification algorithms \cite{SS11,LeeS15,LeeS17}.
Instead, we will come up with a faster algorithm.

Our approach, based on the Lee-Peng-Spielman solver \cite{LeePS15}, is to identify large sets of
vertices where the problem is ``easy'' to solve and then deal with the rest of the
graph (reduced in size) recursively. The particular notion of ``easy'' we are
going to use, is that of strong diagonal dominance.
\begin{definition}\label{def:aSDD}
A matrix $\MM$ is \emph{$\alpha$-strongly diagonally dominant ($\alpha$-SDD)},
if for all $i$
$$\MM_{ii} \geq (1+\alpha) \sum_{j\neq i} |\MM_{ij}|.$$
\end{definition}
The reason that such matrices enable us to solve the corresponding problems
fast is that they can be well-approximated by a diagonal matrix.
\begin{lemma}\label{lem:diagapprox}
Every $\alpha$-SDD matrix $\MM$, with diagonal $\mdiag(\MM)$, satisfies
$$\left (1 - \frac{1}{1+\alpha} \right )\mdiag(\MM) \pleq \MM \pleq \left(1+\frac{1}{1+\alpha}\right) \mdiag(\MM).$$
\end{lemma}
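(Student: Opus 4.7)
The plan is to write $\MM = \mdiag(\MM) + \NN$ where $\NN$ collects the off-diagonal entries, and to bound the quadratic form of $\NN$ by that of $\mdiag(\MM)$. Since the off-diagonal entries of $\MM$ are nonpositive (as stipulated at the start of this section), the claim reduces to showing the two-sided bound
$$-\tfrac{1}{1+\alpha}\,\mdiag(\MM) \pleq \NN \pleq \tfrac{1}{1+\alpha}\,\mdiag(\MM),$$
which, after adding $\mdiag(\MM)$ throughout, is exactly the statement of the lemma.

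To establish this spectral bound, I would bound $|x^\top \NN x|$ for an arbitrary $x \in \R^n$ by the triangle inequality and the elementary AM-GM bound $|x_i x_j| \leq \tfrac{1}{2}(x_i^2 + x_j^2)$. This yields
$$|x^\top \NN x| \leq \sum_{i \neq j} |\MM_{ij}| |x_i||x_j| \leq \sum_{i \neq j} |\MM_{ij}|\cdot \tfrac{1}{2}(x_i^2 + x_j^2) = \sum_i x_i^2 \sum_{j \neq i} |\MM_{ij}|,$$
where the last equality uses symmetry of $\MM$ to pair up the two terms.

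Now I would invoke the $\alpha$-SDD hypothesis from Definition~\ref{def:aSDD}, which gives $\sum_{j \neq i} |\MM_{ij}| \leq \MM_{ii}/(1+\alpha)$ for every $i$. Substituting this into the previous display yields
$$|x^\top \NN x| \leq \tfrac{1}{1+\alpha}\sum_i \MM_{ii}\, x_i^2 = \tfrac{1}{1+\alpha}\, x^\top \mdiag(\MM)\, x,$$
which is the desired spectral bound on $\NN$, completing the argument.

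There is no real obstacle here; the only thing to be a bit careful about is the bookkeeping factor of $2$ when turning the double sum over ordered pairs $(i,j)$, $i \neq j$, into a single sum weighted by $\MM_{ii}$, and remembering that the lemma is purely an algebraic consequence of strong diagonal dominance (the assumption of nonpositive off-diagonals is not actually needed for the spectral bound, only for the fact that $\MM$ lies in the class under consideration).
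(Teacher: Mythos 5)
Your proof is correct and is essentially the same argument as the paper's: the paper bounds the off-diagonal part by applying the $2\times 2$ inequality $\begin{bsmallmatrix} -1 & 0 \\ 0 & -1 \end{bsmallmatrix} \preceq \begin{bsmallmatrix} 0 & 1 \\ 1 & 0 \end{bsmallmatrix} \preceq \begin{bsmallmatrix} 1 & 0 \\ 0 & 1 \end{bsmallmatrix}$ entrywise and then invoking the $\alpha$-SDD property, which is precisely the matrix form of the AM-GM step $2|x_i x_j| \leq x_i^2 + x_j^2$ that you use at the scalar level. Your side remark that nonpositivity of the off-diagonals is unnecessary for this spectral bound is also accurate; only diagonal dominance is used.
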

\begin{proof}
This follows from the fact that
\begin{equation*}
\begin{bmatrix} -1 & 0 \\ 0 & -1 \end{bmatrix} \preceq \begin{bmatrix} 0 & 1 \\ 1 & 0 \end{bmatrix} \preceq \begin{bmatrix} 1 & 0 \\ 0 & 1 \end{bmatrix}
\end{equation*}
Applying this to each off-diagonal entry, the off-diagonal part of the matrix will be
bounded between diagonal matrices; by the $\alpha$-SDD property these can be bounded by
$\pm \frac{1}{1+\alpha} \mdiag(\MM)$.
\end{proof}
In our context, problems in the form of Equation~\ref{eq:problem}, where $\MM$ is
an $\alpha$-SDD matrix for some $\alpha \geq \Omega(1)$, can be turned into well
conditioned quadratic minimization problems for which we can apply standard linearly
convergent algorithms. 
For a more detailed description and analysis of such algorithms can be found in \cite{nesterov1998introductory}.

\begin{lemma}
There is an algorithm $\textsc{FastSolve}$, that given 
an $\Omega(1)$-SDD matrix $\MM$, and $\epsilon > 0$, returns a point
$\widetilde x$, such that $\|\widetilde x\|_\infty\leq 2$, and
$$\widetilde x^\top \MM \widetilde x + \langle \widetilde x,b\rangle
\leq (1-\epsilon) \min_{\|x\|_\infty\leq 2} x^\top \MM x + \langle x,b\rangle$$
in time $O(m\log(1/\epsilon))$, where $m$ is the number of nonzero entries of $\MM$.
\end{lemma}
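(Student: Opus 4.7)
The plan is to reduce the box-constrained quadratic minimization to a well-conditioned problem via diagonal preconditioning, and then apply a standard linearly-convergent first-order method. Write $\mvar{D} = \mdiag(\MM)$; since $\MM$ is $\alpha$-SDD with $\alpha = \Omega(1)$, Lemma~\ref{lem:diagapprox} gives $\frac{1}{2} \mvar{D} \pleq \MM \pleq \frac{3}{2}\mvar{D}$ (up to absolute constants). Substituting $y = \mvar{D}^{1/2} x$ transforms the objective into $\tilde h(y) = y^\top \mvar{M}' y + \langle b', y \rangle$, where $\mvar{M}' = \mvar{D}^{-1/2} \MM \mvar{D}^{-1/2}$ has all eigenvalues in a constant-factor range of $1$, and $b' = \mvar{D}^{-1/2} b$. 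Crucially, because the preconditioner is diagonal, the box constraint $\|x\|_\infty \leq 2$ maps to the axis-aligned box $|y_i| \leq 2\sqrt{\MM_{ii}}$, onto which projection is just coordinate-wise clamping in $O(n)$ time.

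Next, I would apply Nesterov's accelerated projected gradient descent to $\tilde h$ restricted to this box, starting from $y_0 = 0$. Since $\tilde h$ is $\Omega(1)$-strongly convex and $O(1)$-smooth (with constants inherited from $\alpha$), the standard analysis (e.g.,~\cite{nesterov1998introductory}) yields
$$\tilde h(y_k) - \tilde h(y^*) \leq (1 - c)^k \bigl(\tilde h(y_0) - \tilde h(y^*)\bigr)$$
for some absolute constant $c > 0$, where $y^*$ is the box-constrained minimizer. Each iteration performs one matrix-vector product with $\MM$ (plus two diagonal rescalings and a clamp), costing $O(m)$. Hence $k = O(\log(1/\epsilon))$ iterations suffice, for a total of $O(m \log(1/\epsilon))$ time.

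To convert the resulting additive guarantee into the multiplicative $(1-\epsilon)$ form required by the lemma, observe that since $y = 0$ is feasible with $\tilde h(0) = 0$, the optimum $v^* = \tilde h(y^*)$ is non-positive. Our bound then reads $\tilde h(y_k) - v^* \leq \epsilon(0 - v^*) = -\epsilon v^*$, i.e. $\tilde h(y_k) \leq (1-\epsilon) v^*$. Undoing the change of variables by setting $\widetilde x = \mvar{D}^{-1/2} y_k$ preserves both the objective value and membership in the box $\|\widetilde x\|_\infty \leq 2$, which yields exactly the output guarantee claimed.

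The conceptual step --- noticing that $\Omega(1)$-SDD means constantly well-conditioned after diagonal rescaling --- is really the entire content, and once that is in hand the rest is a textbook invocation of accelerated projected gradient. The only technical care needed is to make sure the projection onto the rescaled box is indeed computable per-coordinate (it is, because both the constraint and the rescaling are diagonal), and to verify that starting from $y_0 = 0$ gives the clean $h(0) = 0$ baseline that converts additive progress into a multiplicative ratio without any extraneous range parameters appearing in the iteration count. I do not expect any essential obstacle beyond these bookkeeping points.
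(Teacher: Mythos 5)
Your proposal is correct and follows essentially the same route as the paper: diagonal preconditioning via Lemma~\ref{lem:diagapprox}, a change of variables that keeps the constraint set an axis-aligned box so projection is coordinate-wise truncation, linear convergence from the $O(1)$ condition number, and the $\tilde h(0)=0$ baseline to turn the additive bound into the $(1-\epsilon)$ multiplicative one. The only divergence is that you invoke Nesterov's accelerated projected gradient where the paper uses plain proximal (projected) gradient descent; since the condition number is already $O(1)$ after preconditioning, both give $O(\log(1/\epsilon))$ iterations and $O(m)$ work per iteration, so acceleration buys nothing here and the plain method is the slightly simpler choice.
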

\begin{proof}
By Lemma~\ref{lem:diagapprox}, there is some diagonal matrix $\DD$ such that 
$$\left (1 - \frac{1}{1+\alpha} \right )\DD \pleq \MM \pleq \left(1+\frac{1}{1+\alpha}\right) \DD.$$
By applying the transformation $x = \DD^{-1/2}z$, the problem
becomes
$$\min_{\|\DD^{-1/2}z\|_\infty\leq 2} h(z) = z^\top \DD^{-1/2}\MM\DD^{-1/2}z + \langle
\DD^{-1/2}b, x\rangle.$$
We will apply proximal gradient descent, defined by the sequence $z^{(0)}=0$ and
$$z^{(t+1)} = \arg\min_{\|\DD^{-1/2}z\|_\infty\leq 2} \left\{\langle \nabla
h(z^{(t)}),z\rangle + \left(1+\frac{1}{1+\alpha}\right)
\frac{\|z-z^{(t)}\|_2^2}{2}\right\}.$$
Computing $z^{(t+1)}$ from $z^{(t)}$ corresponds to computing
$$z' = z^{(t)} - \frac{1+\alpha}{2+\alpha}\nabla h(z^{(t)}) =
    z^{(t)} - \frac{1+\alpha}{2+\alpha} \DD^{-1/2}\MM\DD^{-1/2}z^{(t)},$$
and projecting it to the space $\|\DD^{-1/2}z\|_\infty\leq 2$, by simply trunctating any coordinates
exceeding the bounds.  We can clearly implement each iteration in linear time in the number
of nonzeros of $\MM$. Since the condition number of the function is at most
$(1+2/\alpha)=O(1)$, such a step will imply that
$$h(z^{(t)}) - h(z^{(t+1)}) \geq \frac{1}{O(1)}(h(z^{(t)}) - h(z^*),$$
and thus inductively,
$$h(z^{(t)}) - h(z^*) \leq \left(1-\frac{1}{O(1)}\right)^t (h(z^{(0)}) - h(z^*)).$$
Therefore, after $O(\log(1/\epsilon))$ steps we will have a point with
$h(z^{(t)}) - h(z^*) \leq \epsilon (h(z^{(0)}) - h(z^*)).$
The fact that $h(0) = 0$ concludes the proof.
\end{proof}

An even simpler case is when the matrix is of size 1, in which case the problem can be \emph{exactly}
solved in constant time:
\begin{lemma}
There is an algorithm $\textsc{TrivialSolve}$, that given a 1 by 1 matrix $\MM$
returns an $x$ optimizing $x^\top\MM x + \langle b, x\rangle$ over the interval $[-1, 1]$.
\end{lemma}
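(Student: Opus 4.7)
The plan is to observe that this is a one-dimensional optimization of a quadratic function over the closed interval $[-1,1]$, and to handle it by a short case analysis on the sign of the single entry of $\MM$.

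First I would write $\MM = [m]$ and $b = [\beta]$ as scalars, so the objective becomes $g(x) = m x^2 + \beta x$ for $x \in [-1,1]$. Since this is a univariate quadratic on a compact interval, its minimum is attained either at a critical point in the interior or at one of the two endpoints $\pm 1$. Computing $g'(x) = 2mx + \beta$, the only candidate critical point is $x_0 = -\beta/(2m)$, provided $m \neq 0$.

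The algorithm $\textsc{TrivialSolve}$ simply evaluates $g$ at up to three points: the endpoints $-1$ and $+1$, and, if $m > 0$ and $x_0 = -\beta/(2m) \in [-1,1]$, also at $x_0$; it then returns whichever of these attains the smallest value. Correctness follows by case analysis: if $m > 0$, the quadratic is strictly convex, so the global minimum over $\R$ is at $x_0$, and the constrained minimum is either $x_0$ (if $x_0 \in [-1,1]$) or the endpoint closer to $x_0$; if $m = 0$, the objective is linear and minimized at $-\sgn(\beta)$, i.e.\ at an endpoint; if $m < 0$, the quadratic is concave and its minimum on $[-1,1]$ is attained at an endpoint. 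In every case the returned point is a true minimizer. Each arithmetic operation and comparison takes constant time, so the total running time is $O(1)$.

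There is no real obstacle here; the only thing worth being careful about is the degenerate cases $m = 0$ and $m < 0$, and the boundary case $\lvert x_0 \rvert > 1$, which are all resolved by including the two endpoints among the candidates. Hence the procedure exactly optimizes $x^\top \MM x + \langle b, x\rangle$ over $[-1,1]$, as required.
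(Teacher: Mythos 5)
Your proof is correct and takes essentially the same approach as the paper: check the two endpoints and the unconstrained critical point, then return the best candidate. The paper's version simply invokes convexity (which holds in context since $\MM$ comes from an SDD family and is thus nonnegative), so your extra handling of the $m<0$ and $m=0$ cases is harmless but unnecessary.
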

\begin{proof}
By convexity, there must exist an optimal $x$ that is either one of the two endpoints
of the interval, or the unique global optimum of the function over the line.  One may
simply check all candidates and return the best value.
\end{proof}

A key insight of \cite{LeePS15} is that one can find $\Omega(1)$-SDD submatrices of $\MM$ of
size $\Omega(n)$. We denote such a subset by $F$ and $V\setminus F$ by $C$. To ensure that solving
the problem for $x_F$ will not interfere with our solution $x_C$ we map a solution
$\hat x_C$ supported only on coordinates of $C$ to a solution $x_C$ through a
linear mapping $\PP$. If $\PP$ were the energy minimizing extension of
voltages on $C$ to voltages on $V$,
$$(\PP\hat x_C)_{F}=\MM_{[F,F]}^{-1}\MM_{[F,C]}\hat x_C,$$
we would have that $x_F$ and $x_C$ are $\MM$-orthogonal,
since $ x_F^\top\MM \PP \hat x_C = 0$. Then, optimizing over $\hat x_C$
involves the quadratic $\PP^\top \MM \PP$ which is exactly equal to
$\MM_{[C,C]}-\MM_{[C,F]} \MM_{[F,F]}^{-1} \MM_{[F,C]} = \Sc(\MM, F)$.
Applying this proccess recursively leads to the notion of vertex sparsifier
chains that we will heavily rely on.

\begin{definition}[Definition 5.7 of \cite{LeePS15}]\label{def:sparschain}
For any $SDD$ matrix $\MM^{(0)}$, a vertex sparsifier chain of $\MM^{(0)}$ with
parameters $\alpha_i\geq 4$ and $1/2\geq \eps_i > 0$, is a sequence of matrices
and subsets $(\MM^{(1)},\ldots,\MM^{(d)};F_1,\ldots, F_{d-1})$ such that
\begin{enumerate}
\item $\MM^{(1)}\approx_{\eps_0} \MM^{(0)}$,
\item $\MM^{(i+1)}\approx_{\eps_i} \Sc(\MM^{(i)}, F_i)$,
\item $\MM_{[F_i,F_i]}^{(i)}$ is $\alpha_i$-strongly diagonally dominant, and
\item $\MM^{(d)}$ has size 1.
\end{enumerate}
\end{definition}
Note that this last requirement is slightly different from \cite{LeePS15}:
we require the chain to end with size 1, rather than just being constant.
However, the chain construction from \cite{LeePS15} immediately extends to
this requirement (they presumably proposed stopping early because it is
a simple optimization that would likely be valuable in any implementation).

To be able to reason about the approximation guarantees of the chain as a whole
we will use an error-quantifying definition.
\begin{definition}[Definition 5.9 of \cite{LeePS15}]\label{def:chainwork}
An $\epsilon$-vertex sparsifier chain of an SDD matrix $\MM^{(0)}$ of work $W$, is a
vertex sparsifier chain of $\MM^{(0)}$ with
parameters $\alpha_i\geq 4$ and $1/2\geq \eps_i > 0$ that satisfies
\begin{enumerate}
\item $2\sum_{i=0}^{d-1}\eps_i \leq \epsilon$,
\item $\sum_{i=0}^{d-1} m_i \log_{\alpha_i}\eps_i^{-1}\leq W$, where $m_i$ is the
number of nonzeros in $L^{(i)}$.
\end{enumerate}
\end{definition}

Finally, the construction of such chains, as well as their error guarantees have
been already analyzed in $\cite{LeePS15}$ and can be used in a black-box
manner.\\

\begin{theorem}[Theorem 5.10 of \cite{LeePS15}]\label{thm:sparschain}
Every SDD matrix $\MM$ of dimension $n$ has a $\delta$-vertex sparsifier chain of work
$O(n)$ and $d\leq O(\log n)$, for any constant $0 < \delta \leq 1$. Such a chain can be
constructed in time, $\Otil(m)$.
\end{theorem}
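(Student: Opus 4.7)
The plan is to follow the iterative construction that underlies the Lee--Peng--Spielman solver \cite{LeePS15}. The chain is built one layer at a time by (i) extracting from the current matrix a large subset of rows on which the submatrix is strongly diagonally dominant, (ii) forming and spectrally sparsifying the Schur complement onto the remaining rows, and (iii) recursing on the sparsified complement. Three ingredients must be assembled: a subset-selection lemma, a Schur complement sparsification routine, and an accounting argument that yields linear total work and logarithmic depth.

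First I would invoke the structural lemma that for any absolute constant $\alpha \geq 4$, any SDD matrix $\MM$ of dimension $n$ contains a subset $F \subseteq [n]$ with $|F| = \Omega(n)$ such that $\MM_{[F,F]}$ is $\alpha$-SDD in the sense of Definition~\ref{def:aSDD}. This is established by bounding the total off-diagonal mass against the total diagonal mass of $\MM$ and applying a Markov-style deletion argument: rows whose local off-diagonal contribution is too large are discarded, and a counting argument shows a constant fraction must survive. The surviving set can be identified in $O(m)$ time by a single linear pass. This supplies the subsets $F_i$ required by Definition~\ref{def:sparschain}.

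Next I would invoke Schur complement sparsification. The key observation is that when $\MM_{[F,F]}$ is $\alpha$-SDD for $\alpha = \Omega(1)$, the matrix $\mdiag(\MM_{[F,F]})^{-1}\MM_{[F,F]}$ admits a Neumann series expansion whose Jacobi iteration contracts at rate $1/(1+\alpha)$; truncating at $O(\log \tfrac{1}{\eps})$ terms approximates $\Sc(\MM, F)$ to spectral error $\eps$. Each term in the truncated series corresponds to a length-bounded random walk from $C$ to $C$ through $F$, so one may sample walks and reweigh them via effective-resistance/leverage-score sampling to obtain a spectral sparsifier of $\Sc(\MM, F)$ with $\Otil(|C|)$ nonzero entries, \emph{without} ever writing down the potentially dense Schur complement. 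The running time of this step is $\Otil(m \log \tfrac{1}{\eps})$.

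Finally the depth and work bounds follow by geometric accounting. Since $|F_i| \geq c\, n_i$ for an absolute constant $c>0$, sizes satisfy $n_{i+1} \leq (1-c)\, n_i$, giving $d = O(\log n)$. Choosing $\alpha_i$ to be a sufficiently large constant and $\eps_i$ balanced (for instance $\eps_i = \delta/(2d)$, so that $2\sum_i \eps_i \leq \delta$), the accounting $\sum_i m_i \log_{\alpha_i} \eps_i^{-1}$ collapses to $O(n)$: the sparsity bound $m_i = O(n_i)$ obtained after each sparsification decays geometrically, while the per-level factor $\log_{\alpha_i} \eps_i^{-1}$ grows only polylogarithmically and can be absorbed into the geometric series by mildly growing $\alpha_i$ at deeper levels, as in \cite{LeePS15}. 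The overall construction time is $\Otil(m)$ since the sparsification at the top level dominates. The main obstacle, and the technical heart of the argument, is the Schur complement sparsification step: a naive exact computation of $\Sc(\MM_i, F_i)$ can inflate the nonzero count from $O(n_i)$ to $\Omega(|C_i|^2)$, so the entire scheme lives or dies by the ability to produce a spectral sparsifier in time proportional to $m_i$ rather than to the dense complement, which requires the careful combination of the Neumann expansion, random-walk interpretation, and matrix concentration bounds for sums of rank-one PSD terms.
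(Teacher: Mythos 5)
This theorem is not proved in the paper at all: it is Theorem~5.10 of \cite{LeePS15}, and the paper explicitly states that the construction ``can be used in a black-box manner.'' The only original content the paper adds is the remark immediately following the theorem statement, namely that \cite{LeePS15} proves it for $\delta=1$ and for chains terminating at constant (rather than exactly $1$) size, and that adjusting to arbitrary constant $\delta \leq 1$ and to a chain ending at dimension one changes the work and length by at most a constant factor. So there is no ``paper's own proof'' to compare against, and the task you were asked to do --- prove the theorem from scratch --- is really a reconstruction of the Lee--Peng--Spielman argument rather than of anything in this paper.

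That said, your reconstruction is a broadly faithful high-level account of how \cite{LeePS15} establishes the result: extract a linear-size $\alpha$-SDD index set $F$, approximate $\Sc(\MM,F)$ via a truncated Neumann/Jacobi expansion of $\MM_{[F,F]}^{-1}$, spectrally sparsify the result, and recurse, with a geometric size decrease giving $d = O(\log n)$. Two cautions on precision. First, you assert $\widetilde O(|C|)$ edges after sparsification, but the work bound $\sum_i m_i \log_{\alpha_i}\eps_i^{-1} = O(n)$ (no hidden polylog) is tight enough that the bookkeeping of sparsifier sizes, $\alpha_i$, and $\eps_i$ across levels has to be done carefully --- ``absorbed into the geometric series'' is the right slogan but is the entire technical content of Lemma 5.12 in \cite{LeePS15}, not a throwaway step. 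Second, your random-walk/leverage-score sampling description of the Schur complement sparsification is closer in spirit to the later Kyng--Sachdeva approximate Cholesky viewpoint than to the term-by-term sparsification of the truncated series that \cite{LeePS15} actually performs; both are correct routes, but if you intend to cite \cite{LeePS15} you should describe its mechanism. For the purposes of this paper, though, none of this needs to be re-derived: one cites Theorem~5.10 of \cite{LeePS15} and verifies, as the paper does, that the minor variant (constant $\delta$, size-one terminus) follows by the same construction.
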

We note that Theorem~\ref{thm:sparschain} was stated for $\delta = 1$, but it is
straightforward to modify to proof without changing the work or the length of
the chain by more than a constant factor.

Since we cannot exactly compute the energy minizing mapping $\PP$, we will define
an approximate mapping that suffices for our purposes.

\begin{definition}\label{def:mapping}
A linear mapping $\PPtil$ is an \emph{$\epsilon$-approximate voltage extension from $C$
to $V$ according to $L$} if for any $\hat x_C\in\R^{|C|}$,
\begin{enumerate}
\item $\| (\PPtil- \PP) \hat x_C \|_\MM \leq \epsilon \| \PP \hat x_C \|_\MM$,
\item $\PPtil$ is the identity on coordinates in $C$
\item the coordinates of $\PPtil \hat x_C$ are convex combination of the coordinates
of $\hat x_C$ and 0.
\end{enumerate}
where $\PP$ is the energy minimizing extension.
\end{definition}

We will construct such a mapping through a simple averaging scheme. First we set
the voltage of every vertex in $F$ to be the weighted average of its neighbors
in $C$. Then at every step we replace its voltage by the weighted average of
\emph{all} its neighbors. (Here, excess diagonal is treated as an edge to
a vertex with voltage 0.) We do so for $O(\log(1/\epsilon))$ iterations. We
formally describe the procedure in Figure~\ref{fig:mapping}.

It is easy to see that all steps of the algorithm are linear maps, and we can therefore also implement its transpose.
\begin{figure}
\frame{
\begin{minipage}{\textwidth}
\vspace{10pt}
\hspace{5pt}
$\textsc{ApproxMapping}(\MM,C,\epsilon) \hat x_C$

\begin{enumerate}
\item $T \gets \frac{\log \left ( \sqrt{1+\frac{1}{\alpha}} \right ) + \log \left ( \frac{1}{\epsilon} \right )}{\log(1+\alpha)}$
\item
$x^{(0)}_C \gets \hat x_C$

$x^{(0)}_F \gets -\EE^{-1} \MM_{[F,C]} \hat x_C$, $\EE$ is external degree matrix $\EE_{ii} = \MM_{ii}-\sum_{j \in F,j \neq i} |\MM_{ij}|$
\item For $t\gets 1,\ldots, T$: \\
$x^{(t)}_C \gets x^{(t-1)}_C$

$x^{(t)}_F \gets -\mdiag(\MM_{[F,F]})^{-1} \MM_{[F,F \cup C]} x^{(t-1)}$

\item Return $x^{(T)}$
\end{enumerate}
\vspace{5pt}
\end{minipage}
}
\caption{Implementation of an $\eps$-approximate mapping}
\label{fig:mapping}
\end{figure}

\begin{lemma}
For any SDD matrix $\MM$, given an $\Omega(1)$-SDD subset $F$ and some $\epsilon
>0$ one can apply an $\epsilon$-approximate voltage extension mapping in time
$O(m\log(1/\epsilon))$.
\end{lemma}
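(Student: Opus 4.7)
The plan is to verify the three conditions in Definition~\ref{def:mapping} for the linear map $\PPtil : \hat x_C \mapsto x^{(T)}$ defined by $\textsc{ApproxMapping}$, and to bound its running time. The running time is straightforward: the initialization computes $-\EE^{-1}\MM_{[F,C]}\hat x_C$ via one sparse matvec plus $|F|$ scalar divisions, and each of the $T$ subsequent iterations performs one sparse matvec against the rows of $\MM$ indexed by $F$ followed by a diagonal inversion, all in $O(m)$ time per round. Since $\alpha = \Omega(1)$, the prescribed $T$ is $O(\log(1/\epsilon))$, yielding the claimed $O(m\log(1/\epsilon))$ bound.

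Condition 2 is immediate: the algorithm never mutates $x_C$, so $\PPtil$ restricted to $C$ is the identity. For Condition 3 I would induct on $t$ and show every coordinate of $x^{(t)}$ is a convex combination of the entries of $\hat x_C$ and $0$. At $t=0$, the coordinates in $C$ are trivially of this form, and for $i\in F$ we have $x_i^{(0)} = \sum_{j\in C}(|\MM_{ij}|/\EE_{ii})\hat x_j$, where we use that off-diagonal entries of $\MM$ are nonpositive; SDD-ness of $\MM$ gives $\sum_{j\in C}|\MM_{ij}|\le \EE_{ii}$, so the total weight is at most $1$ with the remainder implicitly assigned to $0$. For the inductive step, the update rewrites as a weighted average of the entries of $x^{(t-1)}$ with nonnegative weights $|\MM_{ij}|/\MM_{ii}$ summing to at most $1$, and a convex combination of convex combinations of $\{\hat x_C\}\cup\{0\}$ remains one.

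The main obstacle is Condition 1, for which I would interpret the iteration on $x_F$ as a fixed-point iteration for the linear system $\MM_{[F,F]}\,y = -\MM_{[F,C]}\hat x_C$ whose exact solution is $(\PP\hat x_C)_F$. Setting $e^{(t)} = x_F^{(t)} - (\PP\hat x_C)_F$, the error satisfies $e^{(t)} = \bigl(\mI - \mdiag(\MM_{[F,F]})^{-1}\MM_{[F,F]}\bigr)\, e^{(t-1)}$. Conjugating by $\mdiag(\MM_{[F,F]})^{1/2}$ turns this iteration matrix into $\mI - \mdiag(\MM_{[F,F]})^{-1/2}\MM_{[F,F]}\mdiag(\MM_{[F,F]})^{-1/2}$, whose eigenvalues lie in $[-1/(1+\alpha), 1/(1+\alpha)]$ by Lemma~\ref{lem:diagapprox} applied to the $\alpha$-SDD matrix $\MM_{[F,F]}$. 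This yields $\|e^{(t)}\|_{\mdiag(\MM_{[F,F]})}\le (1+\alpha)^{-t}\|e^{(0)}\|_{\mdiag(\MM_{[F,F]})}$, and another invocation of Lemma~\ref{lem:diagapprox} converts this to a bound in the $\MM_{[F,F]}$-norm at the cost of an $O(\sqrt{1+1/\alpha})$ factor.

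To convert the residual bound into the claimed inequality, note that since $\PPtil\hat x_C$ and $\PP\hat x_C$ coincide on $C$, their difference is supported on $F$, giving $\|(\PPtil-\PP)\hat x_C\|_\MM = \|e^{(T)}\|_{\MM_{[F,F]}}$. Since the harmonic extension is $\MM$-orthogonal to every zero-boundary perturbation, $\|x^{(0)}\|_\MM^2 = \|\PP\hat x_C\|_\MM^2 + \|e^{(0)}\|_{\MM_{[F,F]}}^2$, so it suffices to bound $\|x^{(0)}\|_\MM$ within an $O(\sqrt{1+1/\alpha})$ factor of $\|\PP\hat x_C\|_\MM$; this follows because $x_F^{(0)}$ is the exact minimizer of the ``external-edge'' surrogate energy defined by $\EE$, which is spectrally comparable to the relevant blocks of $\MM$ up to the same $O(1+1/\alpha)$ factor coming from the SDD slack. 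Combining the two $\sqrt{1+1/\alpha}$ factors with the pseudocode choice $T = (\log\sqrt{1+1/\alpha} + \log(1/\epsilon))/\log(1+\alpha)$ produces the desired contraction $\|(\PPtil-\PP)\hat x_C\|_\MM \le \epsilon\,\|\PP\hat x_C\|_\MM$, completing Condition 1.
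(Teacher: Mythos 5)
Your overall structure matches the paper's: you treat the update as a Jacobi iteration for the system $\MM_{[F,F]}\,y = -\MM_{[F,C]}\hat x_C$, observe via Lemma~\ref{lem:diagapprox} that the iteration matrix has spectral norm at most $1/(1+\alpha)$ in the $\mdiag(\MM_{[F,F]})$-inner product, and then reduce the problem to bounding the initial error against $\|\PP\hat x_C\|_\MM$. Your verification of Conditions 2 and 3 by induction is slightly more explicit than the paper's (which just asserts them ``by construction''), and it is correct: nonpositive off-diagonals plus diagonal dominance yield nonnegative weights summing to at most $1$ at every step.

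The one place you are noticeably looser than the paper is the bound on the initial error. You route through the orthogonality identity $\|x^{(0)}\|_\MM^2 = \|\PP\hat x_C\|_\MM^2 + \|e^{(0)}\|_{\MM_{[F,F]}}^2$ and claim $\|x^{(0)}\|_\MM \leq O(\sqrt{1+1/\alpha})\,\|\PP\hat x_C\|_\MM$ because ``$x_F^{(0)}$ minimizes the external-edge surrogate energy, which is spectrally comparable to the relevant blocks of $\MM$.'' That phrase does not by itself close the gap: the surrogate energy has Hessian $\EE$, which is a diagonal matrix, not $\MM_{[F,F]}$, and spectral closeness of Hessians does not automatically compare the value of one quadratic at the other's minimizer. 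What actually makes the argument go through --- and what the paper does directly --- is a two-step observation: first, since the surrogate $\widetilde E(y) = \sum_{i\in F}\sum_{j\in C\cup\{\varnothing\}} w_{ij}(y_i - (\hat x_C)_j)^2$ is a sub-sum of the edge terms in the full quadratic form, $\widetilde E((\PP\hat x_C)_F) \leq \|\PP\hat x_C\|_\MM^2$, and because $x_F^{(0)}$ is the exact minimizer of $\widetilde E$ with curvature $\EE$, one has $\|e^{(0)}\|_\EE^2 = \widetilde E((\PP\hat x_C)_F) - \widetilde E(x_F^{(0)}) \leq \|\PP\hat x_C\|_\MM^2$; second, $\alpha$-SDD of $\MM_{[F,F]}$ gives $\EE \succeq \frac{\alpha}{1+\alpha}\mdiag(\MM_{[F,F]})$, yielding $\|e^{(0)}\|_{\mdiag(\MM)} \leq \sqrt{1+1/\alpha}\,\|\PP\hat x_C\|_\MM$. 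Once you have this, your detour through $\|x^{(0)}\|_\MM$ and the Pythagorean identity is unnecessary --- the paper bounds $\|v^{(0)}\|_{\mdiag(\MM)}$ directly and then iterates. So: same approach, correct structure, but the key bound on the initial error needs to be filled in with the sub-sum/variational argument rather than a generic ``spectral comparability'' appeal.
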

\begin{proof}
The linearity of the mapping, properties 2 and 3, as well as the runtime claimed
hold inductively by the construction of the mapping.
We will now prove property 1, namely, if $P$ is the true energy minimizing
mapping and $\widetilde P$ is our approximate mapping, for any $\hat
x\in\R^{|C|}$, $\| (\widetilde P - P) \hat x_C \|_{\MM} \leq \epsilon \| P \hat x_C \|_{\MM}$.

First, we will bound the error of $x^{(0)}$.  We define $v^{(t)} = x^{(t)} - P \hat x_C$,
which is 0 outside $F$ by construction.  We will use the notation $w_{ij} = -{\MM}_{ij}$--i.e.
the edge weight between $i$ and $j$, and $w_{i\varnothing} = \MM_{ii}-\sum_{j \neq i} |\MM_{ij}|$.
Here, $w_{i\varnothing}$ accounts for ``excess diagonal'' of $\MM$, which we treat as an edge
to a ``virtual vertex'' $\varnothing$.  We will also define $(\hat x_C)_\varnothing = 0$.
Now, we have
\begin{align*}
\| P \hat x_C \|_{\MM}^2 &\geq \sum_{i\in F, j\in C\cup\{\varnothing\}} w_{ij} ((P \hat x_C)_i - (\hat x_C)_j)^2 \\
&\geq \sum_{i\in F} \left ( \sum_{j \in C\cup\{\varnothing\}} w_{ij} \right ) (v^{(0)}_i)^2 \\
&\geq \frac{\alpha}{1+\alpha} \| v^{(0)} \|_{\mdiag(\MM)}^2.
\end{align*}
Rearranging gives
$\| v^{(0)} \|_{\mdiag(\MM)} \leq \sqrt{1+\frac{1}{\alpha}} \| P \hat x_C \|_{\MM}$.

Next, we note that $v^{(t)} = \mdiag(\MM)^{-1} (\II-\MM)_{[F,F]} v^{(t-1)}$.
This follows from the fact that
\begin{equation*}
-\mdiag(\MM_{[F,F]})^{-1} \MM_{[F,F \cup C]} (P \hat x_C) = 0
\end{equation*}
since
$\MM (P \hat x_C)$ is 0 on $F$.  Applying
Lemma~\ref{lem:diagapprox}, we get
\begin{equation*}
\| \mdiag(\MM)^{-1/2} (\II-\MM)_{[F,F]} \mdiag(\MM)^{-1/2} \| \leq \frac{1}{1+\alpha}.
\end{equation*}
This implies that
\begin{equation*}
\| v^{(t)} \|_{\mdiag(\MM)} \leq \frac{1}{1+\alpha} \| v^{(t-1)} \|_{\mdiag(\MM)}.
\end{equation*}
By induction, we have
\begin{equation*}
\| v^{(t)} \|_{\mdiag(\MM)} \leq \frac{\sqrt{1+\frac{1}{\alpha}}}{(1+\alpha)^t} \| P \hat x_C \|_{\MM}.
\end{equation*}
Finally, using Lemma~\ref{lem:diagapprox} again, we get
\begin{equation*}
\| v^{(t)} \|_{\mdiag(\MM)} \leq \frac{\sqrt{1+\frac{2}{\alpha}}}{(1+\alpha)^t} \| P \hat x_C \|_{\MM}.
\end{equation*}
\end{proof}

Given such a mapping, we can uniquely express any voltage vector as $x = x_C +
x_F$, where $x_C = \PPtil \hat x_C$ (i.e. it is in the image of $\widetilde P$)
and $x_F$ is supported on $F$.  By the convex combination property of $\PPtil$, we have $\|
x_C \|_\infty \leq \| \hat x_C \|_\infty \leq \| x \|_\infty$; since $x_F = x -
x_C$, by the triangle inequality we have $\| x_F \|_\infty \leq 2 \| x
\|_\infty$. The domain $\| x \|_\infty \leq 1$ is therefore contained in $\|
\hat x_C \|_\infty \leq 1, \| x_F \|_\infty \leq 2$. Moreover, any point for
which $\|\hat x_C\|_\infty \leq k$, $\|x_F\|_\infty \leq 2$, corresponds to
a point $x = \PPtil \hat x_C + x_F$ with $\|x\|_\infty \leq k+2$.

Having expressed all of the components of our approach, stating the algorithm is
simple. Given the decomposition of the problem the vertex sparsifier chain
provides, we will solve the smallest problem and then iteratively combine it
with the solution of the submatrices along the chain. The algorithm is formally
described in Figure~\ref{fig:chain}, and the main claim in
Theorem~\ref{thm:subproblem}.

\begin{figure}
\frame{
\begin{minipage}{\textwidth}
\vspace{10pt}
\hspace{5pt}
$\textsc{OptimizeChain}((\MM^{(1)},\ldots,\MM^{(d)};F_1,\ldots,
F_{d-1};\eps_0,\ldots \eps_{d-1}), b)$

\begin{enumerate}
\item
$b^{(1)} \gets b/e^{\eps_0}$
\item For $i\gets 1,\ldots, d-1$
\begin{enumerate}
\item $\PPtil^{(i)}\gets \textsc{ApproxMapping}(\MM^{(i)},F_i,\eps_i)$
\item $b^{(i+1)} \gets (\PPtil^{(i)})^\top b^{(i)}/(e^{\eps_i}(1 + \eps_i +
\eps_i^2))$
\end{enumerate}
\item $x^{(d)}\gets\textsc{TrivialSolve}(\MM^{(d)},b^{(d)})$
\item For $i\gets d-1,\ldots, 1$
\begin{enumerate}
\item $x_C^{(i)} \gets \PPtil^{(i)} x^{(i+1)}$
\item $x_F^{(i)} \gets \textsc{FastSolve}(\MM_{[F_i,F_i]}^{(i)}, b_{F_i}^{(i)},
\eps_i)$
\item $x^{(i)} \gets x_C^{(i)} + x_F^{(i)}$
\end{enumerate}
\item Return $x^{(1)}$
\end{enumerate}
\vspace{5pt}
\end{minipage}
}
\caption{Optimizing a vertex sparsifier chain}
\label{fig:chain}
\end{figure}

To facilitate the analysis we first state the following decoupling lemma.
\begin{lemma}\label{lem:decouple}
Consider an SDD matrix $\MM$, a partition of its columns $(F,C)$ and some $0 <
\epsilon \leq 1/2$. Let
$\PPtil$ be an $\epsilon$-approximate voltage extension from $C$ to $(F,C)$ as
define in Definition~\ref{def:mapping}. Then
\begin{align*}
(\PPtil \hat x_C + x_F)^\top \MM (\PPtil \hat x_C + x_F) &\leq
(1+\epsilon+\epsilon^2) (\PP \hat x_C)^\top \MM (\PP \hat x_C) + (1+\epsilon)
x_F^\top \MM x_F\\
 (\PPtil \hat x_C + x_F)^\top \MM (\PPtil \hat x_C + x_F) &\geq (1-\epsilon+\epsilon^2) (\PP \hat x_C)^\top \MM (\PP \hat x_C) + (1-\epsilon)
x_F^\top \MM x_F .
\end{align*}
\end{lemma}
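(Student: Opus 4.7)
The plan is to expand the quadratic form after splitting $\PPtil \hat x_C = \PP \hat x_C + \Delta$, where $\Delta \defeq (\PPtil - \PP) \hat x_C$, and then exploit the $\MM$-orthogonality between $\PP \hat x_C$ and any vector supported on $F$. Since both $\PPtil$ and $\PP$ act as the identity on $C$, the error vector $\Delta$ is supported on $F$; and by the energy-minimizing property of $\PP$, the vector $\MM \PP \hat x_C$ vanishes on $F$. Hence $(\PP\hat x_C)^\top \MM \Delta = 0$ and $(\PP\hat x_C)^\top \MM x_F = 0$, which makes the expansion collapse to
\begin{equation*}
(\PPtil \hat x_C + x_F)^\top \MM (\PPtil \hat x_C + x_F) = \|\PP \hat x_C\|_\MM^2 + \|\Delta\|_\MM^2 + 2\Delta^\top \MM x_F + \|x_F\|_\MM^2.
\end{equation*}

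All that remains is to control the single cross term $2\Delta^\top \MM x_F$ using the two facts at my disposal: the $\MM$-Cauchy--Schwarz inequality, which gives $|\Delta^\top \MM x_F| \leq \|\Delta\|_\MM \|x_F\|_\MM$, and property (1) of Definition~\ref{def:mapping}, namely $\|\Delta\|_\MM \leq \epsilon \|\PP \hat x_C\|_\MM$. The key bookkeeping trick will be to apply the weighted AM--GM inequality $2ab \leq a^2/\lambda + \lambda b^2$ with the parameter $\lambda = \epsilon$. This choice is what makes the $\epsilon^2$ absorption work cleanly: it produces $2|\Delta^\top \MM x_F| \leq \|\Delta\|_\MM^2 / \epsilon + \epsilon \|x_F\|_\MM^2$, and after substituting $\|\Delta\|_\MM^2 \leq \epsilon^2 \|\PP \hat x_C\|_\MM^2$ the coefficient of $\|\PP \hat x_C\|_\MM^2$ contributed by the cross term becomes exactly $\epsilon$, while the $\|\Delta\|_\MM^2$ term itself adds at most $\epsilon^2 \|\PP \hat x_C\|_\MM^2$.

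For the upper bound, combining these pieces yields
\begin{equation*}
\|\PP \hat x_C\|_\MM^2 + \|\Delta\|_\MM^2(1 + 1/\epsilon) + (1+\epsilon)\|x_F\|_\MM^2 \leq (1+\epsilon+\epsilon^2) \|\PP \hat x_C\|_\MM^2 + (1+\epsilon)\|x_F\|_\MM^2,
\end{equation*}
and for the lower bound the same $\lambda = \epsilon$ choice applied to $-2|\Delta^\top \MM x_F|$ gives $\|\PP \hat x_C\|_\MM^2 - \|\Delta\|_\MM^2(1/\epsilon - 1) + (1-\epsilon)\|x_F\|_\MM^2 \geq (1-\epsilon+\epsilon^2) \|\PP \hat x_C\|_\MM^2 + (1-\epsilon)\|x_F\|_\MM^2$, where I used that $1/\epsilon - 1 > 0$ (since $\epsilon \leq 1/2$) so that $\|\Delta\|_\MM^2 \leq \epsilon^2 \|\PP \hat x_C\|_\MM^2$ produces the right sign.

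There is no real obstacle here; the only subtlety is picking the right weight in AM--GM to recover the advertised $(1 \pm \epsilon + \epsilon^2)$ coefficient rather than the looser $(1\pm\epsilon)$ one that a naive split would give. The structural ingredients (orthogonality of $\PP \hat x_C$ against anything supported on $F$, and the norm bound on $\Delta$) do all the real work; the rest is a one-line weighted AM--GM computation.
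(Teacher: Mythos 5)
Your proof is correct and follows essentially the same route as the paper: decompose $\PPtil \hat x_C = \PP \hat x_C + \Delta$, exploit the $\MM$-orthogonality of $\PP \hat x_C$ against anything supported on $F$ to collapse the expansion, and control the surviving cross term $2\Delta^\top \MM x_F$ with a weighted AM--GM at weight $\epsilon$ followed by the norm bound $\|\Delta\|_\MM \le \epsilon\|\PP\hat x_C\|_\MM$. One small plus on your side: explicitly combining the residual $\|\Delta\|_\MM^2$ with the $\mp\|\Delta\|_\MM^2/\epsilon$ contributed by AM--GM before invoking the norm bound is exactly the bookkeeping that lands you on $1\pm\epsilon+\epsilon^2$ rather than the looser $1\pm\epsilon$, a detail the paper's closing ``combine and rearrange'' step leaves implicit.
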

\begin{proof}
Since $\PP$ is the true energy minimizing extension, we know that $\MM (\PP \hat
x_C)$, will be zero on the coordinates of $F$. Since $(\PPtil - \PP)\hat x_C +
x_F$ is supported on $F$ (property 2 of Definition~\ref{def:mapping}), we can expand
\begin{align*}
(\PPtil \hat x_C + &x_F)^\top \MM (\PPtil \hat x_C + x_F) =\\
&= (\PP \hat x_C +
(\PPtil-\PP) \hat x_C + x_F)^\top \MM (\PP \hat x_C + (\PPtil-\PP) \hat x_C + x_F) \\
&= (\PP \hat x_C)^\top \MM (\PP \hat x_C) + ((\PPtil-\PP) \hat x_C + x_F)^\top \MM
((\PPtil-\PP) \hat x_C + x_F) \\
&= ((\PP \hat x_C)^\top \MM (\PP \hat x_C) + x_F^\top \MM x_F + 2 ((\PPtil-\PP) \hat
x_C)^\top \MM
x_F + ((\PPtil-\PP) \hat x_C)^\top \MM ((\PPtil-\PP) \hat x_C).
\end{align*}
The first property of $\PPtil$ in Definition~\ref{def:mapping}, implies that
$$((\PPtil-\PP) \hat x_C)^\top \MM ((\PPtil-\PP) \hat x_C) \leq \epsilon^2 (\PP\hat
x_C)^\top\MM(\PP\hat x_C).$$
We can upper bound the contribution of the cross-term as
\begin{align*}
2 ((\PPtil-\PP) \hat x_C)^\top \MM x_F 
    &\leq \frac{1}{\epsilon} ((\PPtil-\PP) \hat x_C)^\top \MM ((\PPtil-\PP) \hat x_C) +
        \epsilon x_F^\top \MM x_F \\
    &\leq \epsilon (\PP \hat x_C)^\top \MM (\PP \hat x_C) +
        \epsilon x_F^\top \MM x_F. 
\end{align*}
Similarly, we can lower bound the contribution of the cross-term as
\begin{align*}
2 ((\PPtil-\PP) \hat x_C)^\top \MM x_F 
    &\geq -\frac{1}{\epsilon} ((\PPtil-\PP) \hat x_C)^\top \MM ((\PPtil-\PP) \hat x_C) 
        -\epsilon x_F^\top \MM x_F \\
    &\geq -\epsilon (\PP \hat x_C)^\top \MM (\PP \hat x_C) 
        -\epsilon x_F^\top \MM x_F. 
\end{align*}
Combining these inequalities and rearranging terms concludes the proof.
\end{proof}
We can now use this lemma to prove that by decoupling the problem and
using approximate Schur complements does not reduce the quality of the solution
by more than a constant.
\begin{theorem}\label{thm:subproblem}
Algorithm $\textsc{OptimizeChain}$ implements a $O(\log n)$-oracle,
and runs in time $\Otil(m)$.
\end{theorem}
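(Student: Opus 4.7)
The plan is to verify both conditions of Definition~\ref{def:oracle} for the output $x^{(1)}$ with $k = O(\log n)$, and then bound the overall runtime. The $\ell_\infty$ bound will follow from a straightforward backward induction along the chain; the objective bound will come from repeatedly applying Lemma~\ref{lem:decouple} together with the spectral approximation guarantees built into Definition~\ref{def:sparschain}; the runtime will follow from the work bound in Definition~\ref{def:chainwork} and Theorem~\ref{thm:sparschain}.

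For the $\ell_\infty$ bound, I would induct backwards from $i = d$ down to $i = 1$. At the base, $\textsc{TrivialSolve}$ returns $x^{(d)}$ with $\|x^{(d)}\|_\infty \le 1$. At each subsequent level, $x^{(i)} = \PPtil^{(i)} x^{(i+1)} + x_F^{(i)}$; property (3) of Definition~\ref{def:mapping} guarantees $\|\PPtil^{(i)} x^{(i+1)}\|_\infty \le \|x^{(i+1)}\|_\infty$ because each coordinate is a convex combination of coordinates of $x^{(i+1)}$ and $0$, while $\textsc{FastSolve}$ yields $\|x_F^{(i)}\|_\infty \le 2$. Hence $\|x^{(i)}\|_\infty \le \|x^{(i+1)}\|_\infty + 2$, so $\|x^{(1)}\|_\infty \le 2d + 1 = O(\log n)$ by the depth bound in Theorem~\ref{thm:sparschain}.

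For the objective bound, the plan is to show by backward induction that $x^{(i)}$ is a constant-factor approximate minimizer of $q^{(i)}(z) = \tfrac{1}{2}z^\top\MM^{(i)} z + (b^{(i)})^\top z$ over the box $\|z\|_\infty \le 1$. Given any feasible $z$, I would decompose $z = \PPtil^{(i)} \hat z_C + z_F$ with $z_F$ supported on $F_i$; by the triangle inequality and property (3) of Definition~\ref{def:mapping} the resulting pieces satisfy $\|\hat z_C\|_\infty \le 1$ and $\|z_F\|_\infty \le 2$. Lemma~\ref{lem:decouple} upper bounds $z^\top \MM^{(i)} z$ by $(1+\eps_i+\eps_i^2)(\PP \hat z_C)^\top \MM^{(i)} (\PP \hat z_C) + (1+\eps_i) z_F^\top \MM^{(i)}_{[F_i,F_i]} z_F$, and since $(\PP\hat z_C)^\top \MM^{(i)} (\PP\hat z_C) = \hat z_C^\top \Sc(\MM^{(i)},F_i) \hat z_C \le e^{\eps_i}\, \hat z_C^\top \MM^{(i+1)} \hat z_C$ by chain property (2), the objective splits into an independent $F_i$-problem (approximately optimized by $\textsc{FastSolve}$, using chain property (3) that $\MM^{(i)}_{[F_i,F_i]}$ is $\Omega(1)$-SDD) and a recursive level-$(i+1)$ problem on $\MM^{(i+1)}$. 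The rescalings $b^{(i+1)} = (\PPtil^{(i)})^\top b^{(i)}/(e^{\eps_i}(1+\eps_i+\eps_i^2))$ and $b^{(1)} = b/e^{\eps_0}$ are engineered to exactly compensate the multiplicative overshoots, so that an approximate minimizer of the recursive problem pulls back through $\PPtil^{(i)}$ to one of the level-$i$ problem. Telescoping over the $d$ levels and using $2\sum_i \eps_i \le 1/2$ from Definition~\ref{def:chainwork} keeps the accumulated multiplicative factor bounded by a constant, delivering the $\tfrac{1}{2}$-factor required by Definition~\ref{def:oracle}. The main obstacle is exactly this bookkeeping: one must check that the telescoping of approximation errors and rescaling factors composes to a constant-factor loss rather than an $e^{O(\log n)}$ blowup, and this is where the precise choice of scaling constants in the algorithm must be leveraged.

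For the runtime, each level costs $O(m_i \log(1/\eps_i))$ to build $\PPtil^{(i)}$, apply it and its transpose, and run $\textsc{FastSolve}$ with accuracy $\eps_i$. Summing gives $\sum_i m_i \log(1/\eps_i) \le O(\log \alpha_{\max}) \sum_i m_i \log_{\alpha_i}(1/\eps_i) = \Otil(W) = \Otil(n) = \Otil(m)$ by Definition~\ref{def:chainwork} and Theorem~\ref{thm:sparschain}. Since the chain itself is constructed in $\Otil(m)$ time by Theorem~\ref{thm:sparschain}, the total running time is $\Otil(m)$, completing the proof.
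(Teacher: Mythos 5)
Your proposal follows the same route as the paper's proof and correctly identifies every ingredient: the backward induction for the $\ell_\infty$ bound via property (3) of Definition~\ref{def:mapping}, the decomposition $z = \PPtil^{(i)}\hat z_C + z_F$ with $\|\hat z_C\|_\infty\le 1$ and $\|z_F\|_\infty\le 2$, Lemma~\ref{lem:decouple} for energy decoupling, the chain guarantee $\MM^{(i+1)}\approx_{\eps_i}\Sc(\MM^{(i)},F_i)$, the role of the $b^{(i)}$-rescalings, and the work bound from Definition~\ref{def:chainwork} together with Theorem~\ref{thm:sparschain} for the runtime. The $\ell_\infty$ and runtime halves are essentially complete and match the paper.

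The gap --- which you flag yourself --- is that the objective-value half remains a plan. Nearly all of the paper's effort goes into carrying out precisely the bookkeeping you defer: it introduces per-level quadratics $H_i$, $H'_i$, $H''_i$, $G_i$, derives five relations among them from Lemma~\ref{lem:decouple}, the approximation $\MM^{(i+1)}\approx_{\eps_i}\Sc(\MM^{(i)},F_i)$, the guarantees of $\textsc{FastSolve}$ and $\textsc{TrivialSolve}$, and the identity $H''_i=(1+\eps_i)(1+\eps_i+\eps_i^2)H_{i+1}$ (which is what the factor $e^{\eps_i}(1+\eps_i+\eps_i^2)$ in the definition of $b^{(i+1)}$ is for), and then proves by downward induction that $H_i(x^{(i)})\le e^{-5\sum_{j\ge i}\eps_j}\min_{\|x\|_\infty\le 1}H_i(x)$. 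Condition (2) of Definition~\ref{def:oracle} then follows because Theorem~\ref{thm:sparschain} permits $\sum\eps_i\le\delta$ for \emph{any} constant $\delta$, and $e^{-5\delta}\ge\tfrac12$ once $\delta<(\ln 2)/5$. Note that your specific invocation of ``$2\sum\eps_i\le\tfrac12$'' would not suffice on its own (taking $\delta=\tfrac12$ gives only $e^{-5/4}<\tfrac12$); the freedom to shrink $\delta$, noted right after Theorem~\ref{thm:sparschain}, is what closes the argument, and making that dependence explicit is exactly the ``precise choice of scaling constants'' you acknowledge but do not carry out.
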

\begin{proof}
By construction and the triangle inequality we have that
$$\|x^{(0)}\|_\infty \leq \|x^{(d)}\|_\infty + \sum_{i=1}^d
\|x_F^{(i)}\|_\infty \leq 1 + 2d \leq O(\log n).$$

We will define the following functions to reason about our approximation guarantees:
\begin{align*}
H_i(x)&= x^\top \MM^{(i)} x + \langle b^{(i)}, x \rangle \\
H'_i(x) &=(1+\epsilon_i+\epsilon_i^2) x^\top \Sc(\MM^{(i)},F_i)
    x + \langle  b^{(i)}, \PPtil^{(i)}x \rangle\\
H''_i(x) &=e^{\eps_i}(1+\epsilon_i+\epsilon_i^2) x^\top \MM^{(i+1)}
    x + \langle  b^{(i)}, \PPtil^{(i)}x \rangle\\
G_i(x) &= (1+\eps_i) x^\top\MM_{[F_i,F_i]}^{(i)}x + \langle b_F^{(i)},x\rangle
\end{align*}
We can now derive the following facts about these functions:
\begin{enumerate}
\item Directly applying Lemma~\ref{lem:decouple}, for any  $\hat x_C$, $x_F$,
$$H_i(\PPtil\hat x_C + x_F) \leq H'_i(\hat x_C) + G_i(x_F)\,{.}$$
\item  Using the fact that any $x$ with $\|x\|_\infty\leq 1$ can be written as $\PPtil
\hat x_C + x_F$ for $\|\hat x_C\|_\infty\leq 1$ and $\|x_F\|_\infty\leq 2$, and
Lemma~\ref{lem:decouple},
$$\min_{\|\hat x_C\|_\infty\leq 1} H'_i(\hat x_C) + \min_{\|x_F\|_\infty\leq 2}
G_i(x_F)\leq \left(\frac{1-\eps_i}{1+\eps_i}\right)\min_{\|x\|_\infty\leq 1} H_i(x)\,{.}$$
\item By the definition of the vertex sparsifier chain, $\MM^{(i+1)}
\approx_{\eps_i} \Sc(\MM^{(i)}, F_i)$, implying for any $x$
$$H_i'(x) \leq H''_i(x)\,{.}$$
\item Again, from the fact that $\MM^{(i+1)} \approx_{\eps_i} \Sc(\MM^{(i)}, F_i)$,
$$\min_{\|x\|_\infty\leq 1} H_i''(x) \leq e^{-2\eps_i} \min_{\|x\|_\infty\leq 1}
H'_i(x)\,{.}$$
\item  By the definition of $b^{(i)}$,
$$H''_i(x) = (1+\eps_i) (1+\eps_i+\eps_i^2) H_{i+1}(x)\,{.}$$
\end{enumerate}

We are going to combine these facts to show that for any $i\in[1,d]$,
$$H_i(x^{(i)})\leq e^{\sum_{j=i}^{d-1} -5\eps_j}\min_{\|x\|_\infty \leq 1} H_i(x).$$
We procceed by induction from $d$ to $1$. Recall that $\eps_i\leq 1/2$.\\
For the case of $i=d$ it trivially holds by the guarantee of
$\textsc{TrivialSolve}$:
$$H_d(x^{(d)}) \leq \min_{\| x \|_\infty \leq 1} H_d(x)$$
Assuming that it holds for any $j > i$, by the guarantees of
$\textsc{FastSolve}$:
\begin{align*}
H_i(x^{(i)}) &\leq H_i'(x^{(i+1)}) + G_i(x_F^{(i)})\\
    &\leq H_i''(x^{(i+1)}) + G_i(x_F^{(i)})\\
    &= \frac{1}{e^{\eps_i}(1+\eps_i+\eps_i^2)}H_{i+1}(x^{(i+1)}) + G_i(x_F^{(i)})\\
    &\leq \frac{e^{\sum_{j=i+1}^{d-1}-5\eps_j}}{e^{\eps_i}(1+\eps_i+\eps_i^2)}\min_{\|x\|_\infty \leq 1} H_{i+1}(x) +
    (1-\eps_i)\min_{\|x\|_\infty \leq 2} G_i(x)\\
    &\leq e^{\sum_{j=i+1}^{d-1}-5\eps_j}\min_{\|x\|_\infty \leq 1}H''_i(x) + e^{-2\eps_i}\min_{\|x\|_\infty \leq 2}G_i(x)\\
    &\leq e^{\sum_{j=i+1}^{d-1}-5\eps_j}e^{-2\eps_i}\min_{\|x\|_\infty \leq 1} H'_i(x) + e^{-2\eps_i}\min_{\|x\|_\infty \leq 2} G_i(x)\\
    &\leq e^{\sum_{j=i+1}^{d-1}-5\eps_j}e^{-2\eps_i}\epsilon^{-3\eps_i}\min_{\|x\|_\infty \leq 1}H_i(x)\\
    &= e^{\sum_{j=i}^{d-1}-5\eps_j}\min_{\|x\|_\infty \leq 1} H_i(x).
\end{align*}

Finally, we can similarly argue about $H_0(x^{(1)})$: it is at most $e^{\eps_0} H_1(x^{(1)})$
while $\min_{\|x\|_\infty \leq 1} H_1(x) \leq e^{-3 \eps_0} \min_{\|x\|_\infty \leq 1} H_0(x)$.
By the guarantees of the vertex sparsifier chain, we know that $\sum_{i=0}^{d-1}
\eps_i \leq \delta$ for some constant $\delta$ of our choice. By choosing the
right constant we can ensure that our multiplicative error is less than $1/2$.

In order to bound the runtime we notice that for every $i$ we need to compute
$\textsc{ApproxMapping}$ and $\textsc{FastSolve}$ which both take
time $O(m_i\log(1/\eps_i))$. By the bound on the work $W$ of the chain we get
that applying the chain takes time $O(n)$, while constructing it takes time $\Otil(m)$.
\end{proof}

%!TEX root = main.tex
\section{Matrix Scaling and Balancing with Exponential Cone Programming}
\label{sec:ipm}
The algorithm developed in the previous sections is essentially optimal in the
regime where the ratio between the scaling factors is relatively small (say
polynomial in $n$). Since there are matrices for which this ratio is
exponential, we develop a complementary algorithm with negligible runtime
dependence on this ratio, at the cost of a mild increase in the dependence on
$m$. The algorithm is based on interior point methods.

Although interior point methods would seem like a natural option for the
problems of matrix scaling and balancing, standard formulations require solving
linear systems involving various rescalings of the input matrix. A priori, it is
not clear whether these can be solved faster than matrix multiplication time.
However, it turns out that a somewhat nonstandard formulation requires solving
linear systems for more structured matrices. Particularly, we will see that
these matrices admit a decomposition involving only matrices that are easy to
invert (triangular matrices, solvable by back substitution, and SDD matrices
which can be tackled via a standard Laplacian solver). 
Notably, a similar observation was made by Daitch and Spielman~\cite{DS08}, in
the case of interior point methods applied to flow problems on graphs.
\cite{KK96}~also consider a formulation similar to ours for the matrix scaling
problem, however they don't
prove exact convergence bounds or state the algorithm rigorously. Moreover,
since nearly-linear SDD solvers where not known at the time, this algorithm
provides no benefit compared to other approaches.

The main result of this section is the following.
\begin{theorem} \label{thm:ipmsc}
Given a nonnegative matrix $\AA \in \R^{n\times n}$, one can:
\begin{enumerate}
\item compute an $\epsilon$-balancing in time $$\Otil(m^{3/2} \ln (\wA \epsilon^{-1}))\,{,}$$
\item if the matrix is almost $(r,c)$-scalable, compute a $\epsilon$-$(r,c)$-scaling in time $$\Otil(m^{3/2} \ln (\uA \epsilon^{-1})  )\,{.}$$ 
\end{enumerate}
\end{theorem}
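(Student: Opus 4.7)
The plan is to cast both problems as exponential cone programs and apply a standard primal path-following interior point method, exploiting the graph structure of the input so that each Newton step reduces to a single Laplacian solve. Concretely, for each $(i,j) \in \supp(\AA)$ I introduce an auxiliary variable $t_{ij}$ and the conic constraint $t_{ij} \geq \AA_{ij} e^{s_{ij}}$, where $s_{ij} = x_i - x_j$ for balancing and $s_{ij} = x_i - y_j$ for scaling. The objective is linear in $(t,x)$ or $(t,x,y)$: $\sum_{ij} t_{ij}$ for balancing, and $\sum_{ij} t_{ij} - \sum_i r_i x_i + \sum_j c_j y_j$ for scaling, and in both cases minimizing over $t$ at fixed $x,y$ recovers the convex potentials of~\eqref{eq:fbal} and~\eqref{eq:scaling_fn}. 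Equipping each cone with its standard $O(1)$-self-concordant barrier yields total barrier parameter $\nu = O(m)$, so $O(\sqrt{m}\log(1/\epsilon_{\mathrm{gap}}))$ outer IPM iterations suffice to reach duality gap $\epsilon_{\mathrm{gap}}$. Picking $\epsilon_{\mathrm{gap}}$ polynomially small and invoking Lemma~\ref{lem:balancing_requirement} or Lemma~\ref{lem:scaling_accuracy} converts this to an $\epsilon$-balancing or $\epsilon$-$(r,c)$ scaling, contributing an extra $\log(\wA/\epsilon)$ or $\log(\uA/\epsilon)$ factor.

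Each outer iteration needs to solve a Newton system $\HH \Delta z = g$ with $z = (x,t)$ or $(x,y,t)$, where $\HH$ is the Hessian of the barrier together with the centering term. The decisive structural fact is that every $t_{ij}$ participates in exactly one exponential cone, so in the partition $(T,X)$ of coordinates, $\HH_{[T,T]}$ is diagonal and every column of $\HH_{[T,X]}$ is supported on the two indices $\{i,j\}$ (one row-index and one column-index in the scaling case). Hence the Schur complement onto the $X$ coordinates,
\[
\Sc(\HH, T) \;=\; \HH_{[X,X]} - \HH_{[X,T]}\HH_{[T,T]}^{-1}\HH_{[T,X]},
\]
is a sum of rank-one edge contributions $c_{ij}(\ve_i - \ve_j)(\ve_i - \ve_j)^\top$, i.e.\ a weighted graph Laplacian (bipartite in the scaling case). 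Any of the nearly-linear-time Laplacian solvers~\cite{SpielmanT04,CohenKMPPRX14,KyngS16} then solves this system to sufficient accuracy in $\Otil(m)$ time, and the $t$-component of the Newton direction is recovered by a single $O(m)$ back-substitution through the diagonal $\HH_{[T,T]}$. Multiplying the per-iteration cost by the $\Otil(\sqrt{m})$ iteration count gives the claimed $\Otil(m^{3/2})$ runtime.

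The main obstacle I anticipate is propagating the inexactness of the Laplacian solve through the IPM analysis: standard convergence proofs require the Newton step to satisfy a small residual in the local Hessian norm, which translates into a polylogarithmically small target accuracy for the Laplacian solver and is comfortably absorbed into $\Otil$. Secondary technicalities include producing a feasible interior starting point (standard via big-$M$ or self-dual embedding; in the scaling case the almost-scalability hypothesis guarantees a finite minimizer to anchor the construction), verifying that the exponential-cone barrier is $O(1)$-self-concordant (classical), and bounding the edge weights $c_{ij}$ appearing in $\Sc(\HH, T)$ so that the Laplacian solver's logarithmic condition-number factor stays within $\log(\wA/\epsilon)$ or $\log(\uA/\epsilon)$. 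This last bound follows because the central-path conditions keep each $t_{ij}$ within polynomial factors of $\AA_{ij} e^{s_{ij}}$ and the slacks $\log(t_{ij}/\AA_{ij}) - s_{ij}$ are controlled by $\nu/\mu$ along the path, allowing the widths to be uniformly bounded in terms of the natural problem parameters.
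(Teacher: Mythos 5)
Your proposal matches the paper's Section~\ref{sec:ipm} essentially point for point: an exponential cone reformulation with auxiliary variables $t_{ij}$, an $O(m)$-self-concordant barrier giving $O(\sqrt{m})$ path-following iterations, Schur-complementing the diagonal $t$-block onto the $x$-block to expose a Laplacian/SDD structure (the paper phrases this as a factorization $\HH = \UU\ms\UU^\top$ in Lemma~\ref{lem:factor}, but it is the same elimination), a nearly-linear SDD solve per Newton step, and an inexact-Newton error analysis (Lemma~\ref{lem:ns}). The one small technical difference is that the paper sidesteps big-$M$/self-dual embedding by adding explicit bounds $|x_i|\le B_x$ and $t_{ij}\le 3U$ to the formulation~\eqref{eq:opt1}, which compactifies the domain, supplies a trivial interior starting point (Lemma~\ref{lem:init}), uniformly lower-bounds the Hessian (Lemma~\ref{lem:hesslb}), and makes the Schur complement a Laplacian plus a nonnegative diagonal (i.e.\ SDD) rather than a pure Laplacian.
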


This is as a matter of fact a consequence of the fact that  a specific class of functions, which capture both balancing and scaling, can be minimized efficiently. We capture this result in the following Theorem.

\begin{theorem}\label{thm:fopt}
Let $\AA \in \R^{n\times n}$ be a nonnegative matrix with $m$ nonzero entries,  let $f$ be the function
\begin{equation}\label{eq:exp0}\tag{F1}
f(x) = \sum_{(i,j) \in \supp(\AA)} \AA_{ij} e^{x_i - x_j} - \langle d, x \rangle  \,{,}
\end{equation}
 and let $B_x$ be a positive real number. There exists an algorithm which, for any $\epsilon > 0$, finds a vector $x$ such that $f(x) - f(x^{*(B_x)}) \leq \epsilon$ (where
 $x^{*(B_x)}$ is the optimum of $f$ over the region $\| x \|_\infty \leq B_x$) in time
$$\Otil \left( m^{3/2} \ln \left ( 2 + B_x + \frac{s_\AA}{\epsilon} + \frac{\Vert d \Vert_1}{\epsilon} \right ) \right)\,{.}$$
\end{theorem}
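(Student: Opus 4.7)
The plan is to cast the minimization of $f$ in \eqref{eq:exp0} over the box $\|x\|_\infty \leq B_x$ as a conic program by introducing an auxiliary variable $t_{ij}$ for each $(i,j) \in \supp(\AA)$, with the constraint $t_{ij} \geq e^{x_i - x_j}$, and then minimizing the linear objective $\sum_{(i,j)} \AA_{ij} t_{ij} - \langle d, x \rangle$ over this feasible region. For each exponential-cone constraint I use the standard self-concordant barrier $-\ln(\ln t_{ij} - x_i + x_j) - \ln t_{ij}$ (see~\cite{ben2001lectures}), and for the box constraints the usual log-barrier $-\ln(B_x - x_i) - \ln(B_x + x_i)$. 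The resulting barrier has self-concordance parameter $\nu = O(m+n) = O(m)$, so running a standard primal path-following interior point method produces an $\epsilon'$-central-path point in $O(\sqrt{m}\log(\nu/\epsilon'))$ Newton iterations; the logarithmic factor absorbs the dependence on $B_x$, $s_\AA$, $\Vert d\Vert_1$, and $1/\epsilon$ once a quantitative bound on the initial duality gap is established.

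The key technical step is implementing each Newton iteration in nearly linear time. To that end, I would compute the Hessian of the composite barrier block by block: each exponential-cone term contributes a dense $3\times 3$ block on the variables $(x_i, x_j, t_{ij})$, whose $(x,x)$-submatrix has the Laplacian-type form $\alpha_{ij}(e_i - e_j)(e_i - e_j)^\top$ for some scalar $\alpha_{ij} > 0$ and whose coupling to $t_{ij}$ is a scalar multiple of $(e_i - e_j)$. Since distinct $t_{ij}$'s do not interact, the $t$-block $\HH_{tt}$ of the full Hessian is a positive diagonal matrix. Eliminating the $t$ variables via the Schur complement $\HH_{xx} - \HH_{xt}\HH_{tt}^{-1}\HH_{tx}$ then produces a matrix on the $x$ coordinates that is a sum over $(i,j) \in \supp(\AA)$ of rank-one terms of the form $\beta_{ij}(e_i - e_j)(e_i - e_j)^\top$, plus the nonnegative diagonal contribution from the box-barriers. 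This is exactly an SDD matrix with nonpositive off-diagonals, supported on $\supp(\AA)$, so the Newton direction on $x$ can be computed approximately in $\Otil(m)$ time using any nearly-linear-time Laplacian/SDD solver~\cite{SpielmanT04,KyngS16}; the $t$ update is then recovered by back substitution in $O(m)$ time. Combining $O(\sqrt{m})$ outer iterations with $\Otil(m)$ cost per iteration yields the claimed $\Otil(m^{3/2})$ running time.

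For initialization I would take $x^{(0)} = 0$ together with $t^{(0)}_{ij}$ chosen large enough that $t^{(0)}_{ij} > e^{x^{(0)}_i - x^{(0)}_j} = 1$ strictly, and begin with a small barrier parameter that is then geometrically increased along the central path; a standard path-following analysis bounds the number of centering phases in terms of the initial duality gap, which is polynomial in $s_\AA$, $B_x$, and $\Vert d\Vert_1$, and therefore contributes only to the logarithmic factor in the running time. Rounding the final central-path iterate to a box-feasible $x$ with $f(x) - f(x^{*(B_x)}) \leq \epsilon$ is standard.

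The main obstacle I anticipate is verifying the SDD-after-Schur-complement structure rigorously rather than schematically: one has to write down the exact $3 \times 3$ Hessian block of the exponential-cone barrier and check that the $t$-$x$ coupling really is parallel to $(e_i - e_j)$ (so that the Schur correction is a nonnegative rank-one Laplacian edge rather than something denser), and that the scalar $\beta_{ij}$ emerging from the Schur complement is nonnegative. Once this direct calculation is carried out and combined with the sparsity bound $\nnz(\AA) = m$, invoking a nearly-linear-time SDD solver together with the $O(\sqrt{m})$ iteration bound of the IPM completes the argument.
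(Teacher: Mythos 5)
Your approach is essentially the same as the paper's: you reformulate minimization of $f$ over the box as a conic program with auxiliary $t_{ij}$ variables, replace the exponential constraints by exponential-cone barriers, obtain an $O(m)$-self-concordant barrier, run a primal path-following IPM in $O(\sqrt{m}\log(\cdot))$ iterations, and implement each Newton step by eliminating the $t$ variables (each $t_{ij}$ couples only to $x_i - x_j$, so the Schur complement onto the $x$-block is SDD with nonpositive off-diagonals) and solving the resulting system with a nearly-linear Laplacian/SDD solver plus back substitution. This is precisely the content of Lemma~\ref{lem:factor}, which the paper phrases as the factorization $\HH=\UU\ms\UU^\top$ with $\UU$ unit lower-triangular and $\ms$ SDD; the explicit $3\times3$-block computation there confirms that the $t$--$x$ coupling is parallel to $(e_i-e_j)$ and that the scalar emerging from the Schur correction is nonnegative, resolving the obstacle you flagged.

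One piece of your argument is hand-waved in a way that matters: you start from a region $\{t_{ij}\geq e^{x_i-x_j},\ \|x\|_\infty\leq B_x\}$ that is \emph{unbounded} in $t$, and you take a two-term exponential-cone barrier $-\ln(\ln t-y)-\ln t$ which is unbounded below as $t\to\infty$. Thus the analytic center does not exist, and Theorem~\ref{thm:nesterovipm} (the iteration bound you implicitly invoke, which is stated in terms of $\|\nabla\xi(v_0)\|_{\nabla^2\xi(v_0)^{-1}}$ at the barrier's minimizer $v_0$) does not directly apply; your statement that the number of centering phases ``is polynomial in $s_\AA,B_x,\|d\|_1$'' is left unsubstantiated. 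The paper closes this by adding the redundant but crucial constraint $t_{ij}\leq 3U$ with $U = s_\AA + \|d\|_1 B_x$ (its formulation~\ref{eq:opt1}) and including $-\ln(3U - t_{ij})$ in the barrier $\psi$; Lemmas~\ref{lem:hesslb} and~\ref{lem:init} then give explicit lower bounds on $\nabla^2\xi$ and an explicit strictly interior starting point $(t,x)=(2U\cdot\allones,\allzeros)$ with $\ln\|\nabla\xi(v)\|_{\nabla^2\xi(v)^{-1}}=O(\ln(2+m+B_x))$, which is exactly what produces the logarithmic factor in the final running time. Adding this bound (or switching to a self-dual homogeneous IPM that avoids the analytic center) would repair your initialization argument and bring it in line with the paper.
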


Using this result, one can then conclude the proof of Theorem~\ref{thm:ipmsc}.
\begin{proof}[Proof of Theorem~\ref{thm:ipmsc}]
For the balancing objective, we first decompose the nonzero entries of $\AA$
into strongly connected components. For each component, we will call
$\ref{thm:fopt}$ with $d = 0$. From Corollary~\ref{cor:cond} we have that $\Vert
x^* \Vert_\infty = O(n \ln \wA)$. Plugging this in, along with Lemma~\ref{lem:balancing_requirement}, we obtain a total running time of $\Otil(m^{3/2} \ln (\wA  \epsilon^{-1}))$.

For the $\epsilon$-$(r,c)$-scaling objective, we set $d = (r, -c)^\top$, and run
the interior point method on the matrix $\begin{bmatrix} \mzero & \AA \\
\mzero & \mzero \end{bmatrix}$. Lemma~\ref{lem:rcbounds} ensures that the
entries of the $(r,c)$ scalings exist within a polynomially bounded
$\ell_\infty$-ball.
Using Lemma~\ref{lem:scaling_accuracy}, this yields the conclusion.
\end{proof}

We prove Theorem~\ref{thm:fopt} by showing that an interior point method defined and analyzed by Nesterov~\cite{nesterov1998introductory} can be efficiently implemented. In order to do so, we require two components. The former involves providing a formulation for minimizing the function in~\ref{eq:exp0} for which the interior point method can produce an iterate that is close in value to optimum within a small number of iterations. The latter involves showing how to efficiently implement these iterations. Generally they involve solving a linear system; in our case, we show that such iterations can be executed by solving an SDD linear system to constant accuracy.

\subsection{Setting Up the Interior Point Method, and Bounding the Number of Steps}

In order to apply an interior point method, we first reformulate the problem in~\ref{eq:exp0} in an equivalent form. 

\begin{lemma}[Equivalent Formulation]
Let $B_x$ be a promise on the magnitude of the entries in the optimal solution of~\ref{eq:exp0}: $$\Vert x^* \Vert \leq B_x\,{.}$$ Also, let $$U = (s_{\AA} + \Vert d \Vert_1 B_x)\,{.}$$ Then the objective
\begin{equation}\label{eq:opt1}\tag{F2}
\begin{split}
\min_{(t,x) \in S} \langle \allones, t \rangle - \langle d, x \rangle  \quad \textnormal{where} \quad S =\big\{ (t,x) \in \R^m \times \R^n : \AA_{ij} e^{x_i - x_j} \leq t_{ij} \leq 3U \,\textnormal{, for all }(i,j) \in \supp(\AA), \\
-B_x \leq x_i \leq B_x\, \textnormal{, for all }1 \leq i \leq n
 \big\}\,{,}
\end{split}
\end{equation}
has an identical value and solution to~\ref{eq:exp0}.
\end{lemma}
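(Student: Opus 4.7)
The plan is to establish the equivalence by constructing an explicit correspondence between optimizers of \ref{eq:exp0} and of \ref{eq:opt1}, and verifying that the objective values match at these points. I would argue equality of the two optima in both directions.

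For the direction that the minimum of \ref{eq:opt1} is at most the minimum of \ref{eq:exp0}, let $x^*$ denote the minimizer of $f$, which by hypothesis satisfies $\| x^* \|_\infty \leq B_x$. The natural lift to \ref{eq:opt1} is $t^*_{ij} = \AA_{ij} e^{x^*_i - x^*_j}$ for every $(i,j) \in \supp(\AA)$. Then $(t^*, x^*)$ automatically satisfies the box constraints on $x$ and the lower bound on $t$, and the objective evaluates to $\langle \allones, t^* \rangle - \langle d, x^* \rangle = f(x^*)$. The only nontrivial verification is the upper bound $t^*_{ij} \leq 3U$. For this I would use the fact that $0$ is a feasible point of \ref{eq:exp0} with $f(0) = s_\AA$, so by optimality of $x^*$,
$$\sum_{(i,j) \in \supp(\AA)} \AA_{ij} e^{x^*_i - x^*_j} = f(x^*) + \langle d, x^* \rangle \leq f(0) + \| d \|_1 \| x^* \|_\infty \leq s_\AA + \| d \|_1 B_x = U\,{,}$$
which gives the coordinate-wise bound $t^*_{ij} \leq U \leq 3U$, placing $(t^*, x^*) \in S$.

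For the reverse direction, let $(t^*, x^*)$ be any optimizer of \ref{eq:opt1}. The objective $\langle \allones, t \rangle - \langle d, x \rangle$ is strictly increasing in each coordinate $t_{ij}$, and the only constraint linking $t_{ij}$ to the remaining variables is the lower bound $\AA_{ij} e^{x_i - x_j} \leq t_{ij}$. Hence at any optimum we must have $t^*_{ij} = \AA_{ij} e^{x^*_i - x^*_j}$; otherwise one could strictly decrease $t^*_{ij}$ while staying in $S$ and improve the objective. Substituting this back, the objective at $(t^*, x^*)$ equals $f(x^*)$, and since the hypothesis $\| x^* \|_\infty \leq B_x$ guarantees that the box constraint $\| x \|_\infty \leq B_x$ does not exclude the unconstrained minimizer of $f$, the minimum of \ref{eq:opt1} cannot strictly undercut that of \ref{eq:exp0}. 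Combining the two inequalities yields identical minimum values, and the $x$-component of any optimizer of \ref{eq:opt1} is an optimizer of \ref{eq:exp0}.

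The only real subtlety is choosing the upper bound on $t$: it has to be loose enough that no optimizer of \ref{eq:exp0} is cut off, which is precisely what the $f(x^*) \leq f(0)$ estimate above secures, while also leaving interior room for the logarithmic barrier used later by the interior point method---this is why the bound is written as $3U$ rather than a tight $U$. Once the quantity $U = s_\AA + \| d \|_1 B_x$ is identified, the rest of the argument reduces to the elementary observation that the objective is monotonically increasing in $t$, so $t$ saturates its lower bound at every optimizer, and the formulations collapse onto one another.
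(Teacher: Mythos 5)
Your proof is correct and takes essentially the same route as the paper's: both hinge on (i) the $t$ coordinates saturating their lower bound $t_{ij} = \AA_{ij}e^{x_i-x_j}$ at any optimum because the objective is increasing in $t$, and (ii) the estimate $\sum_{ij} t^*_{ij} = f(x^*) + \langle d, x^*\rangle \leq s_\AA + \lVert d\rVert_1 B_x = U$, obtained from the feasibility of $(x,t)=(\allzeros,\AA_{ij})$, which makes the $3U$ cap slack. You simply lay this out as two explicit inequalities rather than the paper's terser ``all added constraints are non-binding'' argument, with no substantive difference in content.
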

\begin{proof}
Given the promise, the bounds on $x$ are redundant. This is also the case with the upper bounds on $t_{ij}$, since setting $x=\allzeros$ and $t_{ij}=\AA_{ij}$ yields a solution of value $s_{\AA}$. Therefore, setting $t^*_{ij} = \AA_{ij} e^{x^*_i -x^*_j}$, the value of $\langle \allones, t^*_{ij} \rangle$ must be at most $s_{\AA} + \langle  d, x^* \rangle \leq s_{\AA} + \Vert d \Vert_1 B_x < U$.
\end{proof}

The second step is to replace the hard constraints with appropriate barrier functions, whose value blows up when approaching the boundary of the feasible set $S$ (see~\cite{ben2001lectures, nesterov1998introductory} for more details). More precisely, we consider the barrier functions
\begin{equation}
\phi_{ij}(t, x) = -\ln(\ln t - (\ln \AA_{ij} + x_i - x_j))
\end{equation}
for all $(i,j) \in \supp(\AA)$, and
\begin{equation}
\psi(t,x) = -\sum_{(i,j) \in \supp(\AA) } \ln(3U - t_{ij}) - \sum_{1\leq i \leq n} (\ln(B_x - x_i)  
+ \ln(x_i + B_x) ) \,{.}
\end{equation}
The former blow up when $t$ approaches $\exp(\ln(\AA_{ij} + x_i -x_j))$ from above, and are  standard in exponential cone programming~\cite{ben2001lectures}. The barrier $\psi$ handles all the other inequality constraints. Very importantly, all these barrier functions are well behaved, in the sense that they satisfy a required property called \textit{self-concordance}. Since this property defines the number of iterations the method needs to execute, we highlight it below.

\begin{fact}\label{fact:sc}
The function $\xi(t,x) = \psi(t,x) + \sum_{(i,j) \in \supp(\AA)} \phi_{ij}(t,x)$ is an $O(m)$-self-concordant barrier for the set $S$ defined in~\ref{eq:opt1}.
\end{fact}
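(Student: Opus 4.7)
The plan is to invoke the classical fact that sums of self-concordant barriers on their respective feasibility regions are self-concordant on the intersection, with parameter equal to the sum of individual parameters (see Nesterov--Nemirovski). Consequently, since $S$ is presented as an intersection of $m$ exponential-cone constraints and $m+2n$ halfspace constraints, it suffices to exhibit an $O(1)$-self-concordant barrier for each one and then add up.

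First I would dispatch the polyhedral pieces appearing in $\psi$. Each term $-\ln(3U - t_{ij})$ is the standard $1$-self-concordant barrier for the halfspace $\{t_{ij} < 3U\}$, and similarly $-\ln(B_x - x_i)$ and $-\ln(x_i + B_x)$ are $1$-self-concordant barriers for their respective halfspaces; this is the elementary fact that $-\ln$ composed with an affine function is $1$-self-concordant. Summed up, these $m + 2n$ terms contribute a parameter of $O(m)$.

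The main technical step is to verify that each $\phi_{ij}$ is an $O(1)$-self-concordant barrier for the exponential-type constraint $\{(t_{ij}, x_i, x_j) : \AA_{ij} e^{x_i - x_j} \le t_{ij}\}$. To check both the self-concordance inequality $|D^3 \phi_{ij}[h,h,h]| \le 2(D^2 \phi_{ij}[h,h])^{3/2}$ and the barrier inequality $|D\phi_{ij}[h]|^2 \le \nu\cdot D^2 \phi_{ij}[h,h]$ with $\nu = O(1)$, I would fix a direction $h = (h_t, h_{x_i}, h_{x_j})$, write $g = \ln t_{ij} - \ln \AA_{ij} - x_i + x_j$, and introduce the shorthand $b = h_t/t_{ij}$ and $a = h_{x_j} - h_{x_i}$, so that along the parametrized line $g' = a+b$, $g'' = -b^2$, $g''' = 2b^3$. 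Plugging into the derivatives of $-\ln g$ reduces both required inequalities to uniform algebraic bounds in the three scalars $(a,b,g)$ with $g > 0$, which are exactly the standard exponential-cone calculations of Nesterov--Nemirovski.

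Summing the $m$ exponential-cone contributions of $O(1)$ each with the $m + 2n$ polyhedral contributions of $1$ each yields the claimed $O(m)$ self-concordance parameter for $\xi$ on $S$. The main obstacle is the algebraic verification for $\phi_{ij}$: unlike a barrier of the form $-\ln(\text{affine})$, the $\ln t_{ij}$ sitting inside the outer logarithm makes the third derivative delicate to bound, and one must verify that the negative curvature contributed by $g'' = -b^2$ does not destabilize the self-concordance ratio; however, after the reduction above, it becomes a routine one-parameter inequality that is already present in the treatment of the exponential cone in Nesterov's monograph.
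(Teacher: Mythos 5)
Your overall strategy --- decompose the barrier into affine log-barrier pieces and exponential-type pieces, verify each has $O(1)$ parameter, and invoke additivity --- is the natural one for this statement, which the paper states without a written proof, deferring to the exponential-cone-programming literature. The gap is in the claim that each $\phi_{ij} = -\ln(\ln t_{ij} - (\ln \AA_{ij} + x_i - x_j))$ is by itself self-concordant: it is not. Along the direction $h=(h_{t},h_{x_i},h_{x_j})=(1,0,0)$, writing $w = \ln t_{ij} - (\ln \AA_{ij} + x_i - x_j)$, one computes $D^2\phi_{ij}[h,h] = (w+1)/(w^2 t_{ij}^2)$ and $|D^3\phi_{ij}[h,h,h]| = (2w^2+3w+2)/(w^3 t_{ij}^3)$, so the self-concordance inequality $|D^3|\le 2(D^2)^{3/2}$ requires $2w^2+3w+2 \le 2(w+1)^{3/2}$ --- which fails for \emph{every} $w>0$ (near $w=0$ the left side exceeds the right by $\tfrac{5}{4}w^2 + O(w^3)$, and for large $w$ the left grows as $w^2$ against $w^{3/2}$). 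This is exactly why the Nesterov--Nemirovski exponential-cone barrier is $-\ln(\ln t - s) - \ln t$ rather than $-\ln(\ln t - s)$: the extra $-\ln t$ adds $1/t^2$ to $D^2$ and $-2/t^3$ to $D^3$, turning the target inequality into $2w^3+2w^2+3w+2 \le 2(w^2+w+1)^{3/2}$, which does hold for all $w\ge 0$. So the ``routine one-parameter inequality'' you invoke from Nesterov's treatment is for the barrier \emph{including} $-\ln t$, not for $\phi_{ij}$ alone.

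To repair the argument along your lines, augment $\psi$ with $-\sum_{(i,j)\in\supp(\AA)} \ln t_{ij}$: this leaves the feasible set $S$ unchanged (positivity of $t_{ij}$ is already implied by the exponential constraints), keeps the total term count at $O(m)$, and makes each exponential piece the standard $2$-self-concordant exponential-cone barrier, after which additivity yields the claim. One cannot instead hope that the existing $-\ln(3U - t_{ij})$ piece of $\psi$ compensates on the bounded region: when $t_{ij}\ll U$ its contribution to the Hessian is negligible compared to that of $\phi_{ij}$, which is precisely the regime where the inequality above fails.
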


With the barrier function set up, the method has to solve a sequence of subproblems of the form 
\begin{equation}\label{eq:centralpath}
\min_{t,x} f_{\mu}(t,x) \quad \textnormal{where} \quad
f_{\mu}(t,x) = \mu \cdot (\langle \allones, t \rangle - \langle d, x \rangle) + \xi(t,x)
\end{equation}
while increasing $\mu$ until it becomes sufficiently large that the solution we produce is close to the optimum of the initial constrained problem. 

What is essential here is the number of iterations of the method, which depends mostly on the quality of the barrier function, and little on in initialization and accuracy to which we want to solve. More precisely, we apply the following theorem which follows from~\cite{nesterov1998introductory}, Theorems 4.2.9 and 4.2.11.
\begin{theorem}\label{thm:nesterovipm}
Given an initial point $v$ in the strict interior of $\mathcal{D}$ with a
$\nu$-self-concordant barrier $\xi$, the problem $\min_{v \in \mathcal{D}}
c^\top v$ can be solved to within $\epsilon$ additive error in $$O\left(
\sqrt{\nu} \ln \left ( 2 + \norm{\nabla \xi(v_0)}_{\nabla^2 \xi(v_0)^{-1}} + \nu \frac{\Vert c \Vert_{{\nabla^2 \xi(v_0)^{-1}}}}{\epsilon} \right) \right)$$ iterations, where $v_0$ is the minimizer over $\mathcal{D}$ of $\xi(v)$.
\end{theorem}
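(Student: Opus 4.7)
The plan is to follow the standard path-following interior point method analysis from Nesterov--Nemirovski theory, specializing to short-step updates that exploit the $\nu$-self-concordance of $\xi$. Define the central path $v^*(\mu) = \arg\min_{v \in \mathcal{D}} \bigl[\mu c^\top v + \xi(v)\bigr]$ for $\mu \geq 0$, noting that $v^*(0) = v_0$ by definition, and the standard duality gap estimate $c^\top v^*(\mu) - \min_{v \in \mathcal{D}} c^\top v \leq \nu/\mu$ (which follows from the $\nu$-self-concordant barrier property applied to the optimality condition $\mu c + \nabla \xi(v^*(\mu)) = 0$). The natural proximity measure to track is $\lambda_\mu(v) = \|\mu c + \nabla \xi(v)\|_{[\nabla^2 \xi(v)]^{-1}}$, which equals the Newton decrement of $f_\mu(v) = \mu c^\top v + \xi(v)$ and which, by self-concordance, controls how close $v$ is to $v^*(\mu)$ in the Dikin metric.

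Next I would establish the two key lemmas underlying short-step path-following. First, a local convergence lemma: if $\lambda_\mu(v) \leq \beta$ for some small absolute constant $\beta$ (say $\beta \leq 1/9$), then one undamped Newton step on $f_\mu$ produces $v^+$ with $\lambda_\mu(v^+) \leq 2 \lambda_\mu(v)^2$; this is the standard quadratic-convergence theorem for self-concordant minimization and follows from the affine-invariant properties of self-concordance. Second, a centering-preservation lemma: updating $\mu \mapsto \mu^+ = \mu\bigl(1 + \gamma/\sqrt{\nu}\bigr)$ for a sufficiently small constant $\gamma$ yields $\lambda_{\mu^+}(v) \leq \lambda_\mu(v) + \gamma$, using the bound $\|c\|_{[\nabla^2 \xi(v)]^{-1}} \leq \|\nabla \xi(v)/\mu + c\|_{[\nabla^2 \xi(v)]^{-1}} + \|\nabla \xi(v)\|_{[\nabla^2 \xi(v)]^{-1}}/\mu$ together with the fundamental bound $\|\nabla \xi(v)\|_{[\nabla^2 \xi(v)]^{-1}} \leq \sqrt{\nu}$ that holds for any $v$ in the interior of $\mathcal{D}$ (this is precisely the defining property of a $\nu$-self-concordant barrier). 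Together these guarantee that one Newton step per $\mu$-update suffices to stay within the quadratic-convergence region.

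The proof then splits into two phases. In the \emph{auxiliary} (centering) phase, starting at $v_0$ with $\mu_0 = 0$, I need to reach a point $v$ and a parameter $\mu_{\mathrm{start}} > 0$ with $\lambda_{\mu_{\mathrm{start}}}(v) \leq \beta$. A standard trick is to run the short-step scheme in reverse against an auxiliary objective, whose analysis produces an iteration count of $O(\sqrt{\nu} \log(2 + \|\nabla \xi(v_0)\|_{[\nabla^2 \xi(v_0)]^{-1}}))$; this is where the first term inside the logarithm in the theorem statement comes from. In the \emph{main} phase, I repeatedly apply the two lemmas above, multiplying $\mu$ by $(1 + \gamma/\sqrt{\nu})$ each step until $\mu_{\mathrm{end}} \gtrsim \nu/\epsilon$, at which point the duality gap bound $\nu/\mu$ together with proximity to the central path gives $c^\top v - \min c^\top v \leq \epsilon$. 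The number of such iterations is $O(\sqrt{\nu} \log(\mu_{\mathrm{end}}/\mu_{\mathrm{start}}))$, and bounding $1/\mu_{\mathrm{start}}$ by $O(\|c\|_{[\nabla^2 \xi(v_0)]^{-1}})$ via the auxiliary-phase analysis yields the $\nu \|c\|_{[\nabla^2 \xi(v_0)]^{-1}}/\epsilon$ factor in the logarithm.

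The hardest part is the auxiliary-phase analysis: producing a clean, affine-invariant bound that goes from the arbitrary initial proximity $\|\nabla \xi(v_0)\|_{[\nabla^2 \xi(v_0)]^{-1}}$ to a constant proximity in $O(\sqrt{\nu} \log(\cdots))$ Newton steps, without incurring extra polynomial factors. The standard way around this is to introduce a shifted barrier $\tilde{\xi}(v) = \xi(v) - \langle \nabla \xi(v_0), v \rangle$, for which $v_0$ is already the analytic center, and then run short-step path-following on a homotopy that gradually eliminates the shift; this reduces the auxiliary phase to another application of the main-phase machinery and lets both logarithmic terms in the final bound emerge from the same general short-step counting argument.
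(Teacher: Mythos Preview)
The paper does not prove this theorem: it is stated as a black-box citation (``\dots the following theorem which follows from~[Nesterov], Theorems 4.2.9 and 4.2.11'') and no argument is given. Your sketch is essentially a correct outline of the standard two-phase short-step path-following analysis that those cited theorems contain --- main phase with multiplicative $\mu$-updates of size $(1+\gamma/\sqrt{\nu})$ together with a Newton step, and an auxiliary phase via the shifted-barrier / auxiliary-objective trick --- so you are reconstructing the source proof rather than the paper's own. One small slip: in the theorem statement $v_0$ is the analytic center and the given initial point is a separate $v$; your write-up starts the auxiliary phase ``at $v_0$,'' effectively assuming we already sit at the analytic center, which makes the auxiliary phase trivial. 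The honest version (and what the paper's Lemma~6.5 actually uses) bounds the auxiliary-phase cost in terms of $\|\nabla\xi\|_{(\nabla^2\xi)^{-1}}$ at the \emph{given} starting point, not at $v_0$; either treat the theorem's $v_0$ as a typo for the initial $v$, or make explicit that your shifted-barrier homotopy starts from the supplied interior point and reaches (a neighborhood of) $v_0$.
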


In what follows we bound the quantities involved in the  above statement. In order to have a bound on the number of iterations required for our cone program, we require lower bounding $\nabla^2 \xi(v_0)$. In order to do so, we lower bound the Hessian everywhere.

\begin{lemma}\label{lem:hesslb}
The Hessian $\nabla^2 \xi$ is lower bounded everywhere by the diagonal matrix with $\frac{1}{9U^2}$ on $t$ variables
and $\frac{2}{B_x^2}$ on on $x$ variables.
\end{lemma}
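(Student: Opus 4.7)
The plan is to decompose $\xi = \psi + \sum_{(i,j) \in \supp(\AA)} \phi_{ij}$ and to argue that (a)~each $\phi_{ij}$ contributes a positive semidefinite Hessian, hence may be dropped when proving a lower bound, and (b)~the Hessian of $\psi$ is already diagonal, so the required bound reduces to verifying an entry-wise lower bound on its diagonal entries.

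First, I would verify that each barrier $\phi_{ij}(t,x) = -\ln\!\bigl(\ln t_{ij} - \ln \AA_{ij} - x_i + x_j\bigr)$ is convex, and therefore $\nabla^2 \phi_{ij} \pgeq \mzero$. This follows from the standard fact that $-\ln(\cdot)$ is a decreasing convex function on $\R_{>0}$, precomposed with the concave function $h_{ij}(t,x) = \ln t_{ij} - \ln \AA_{ij} - x_i + x_j$ (linear in $x$, concave in $t$); the composition of a decreasing convex function with a concave function is convex. Consequently,
$$\nabla^2 \xi(t,x) \pgeq \nabla^2 \psi(t,x)\,{,}$$
and it suffices to lower bound $\nabla^2 \psi$.

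Second, observe that $\psi$ splits as a sum of univariate terms: one term per coordinate $t_{ij}$, and the pair $-\ln(B_x - x_i) - \ln(x_i + B_x)$ per coordinate $x_i$. Hence $\nabla^2 \psi$ is a diagonal matrix whose entries are
$$(\nabla^2 \psi)_{t_{ij}, t_{ij}} = \frac{1}{(3U - t_{ij})^2}\,{,} \qquad (\nabla^2 \psi)_{x_i, x_i} = \frac{1}{(B_x - x_i)^2} + \frac{1}{(x_i + B_x)^2}\,{.}$$
Since the claimed lower bound is also diagonal, we only need to bound these expressions pointwise. For the $t$-entries, inside the strict interior of $S$ we have $0 < t_{ij} < 3U$, hence $3U - t_{ij} \leq 3U$ and thus $(\nabla^2 \psi)_{t_{ij}, t_{ij}} \geq \tfrac{1}{9U^2}$. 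For the $x$-entries, applying the AM-GM inequality gives
$$\frac{1}{(B_x - x_i)^2} + \frac{1}{(x_i + B_x)^2} \geq \frac{2}{(B_x - x_i)(x_i + B_x)} = \frac{2}{B_x^2 - x_i^2} \geq \frac{2}{B_x^2}\,{,}$$
where the last step uses $-B_x < x_i < B_x$. Combining these two bounds completes the proof.

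The only mildly nontrivial step is the convexity of $\phi_{ij}$; the rest is a coordinate-by-coordinate computation, and the fact that the cross-couplings introduced by the $\phi_{ij}$ barriers can be discarded (since their Hessians are PSD) is what makes the argument clean.
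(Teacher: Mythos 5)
Your proof is correct and follows the same approach as the paper's: drop $\nabla^2 \phi_{ij} \succeq \mzero$ since each barrier is convex, observe that $\nabla^2 \psi$ is diagonal, and lower bound its diagonal entries pointwise. You supply some details the paper leaves implicit (the composition argument for convexity of $\phi_{ij}$, the observation that $0 < t_{ij} < 3U$ in the interior, and the AM-GM step for the $x$-entries), but the structure of the argument is identical.
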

\begin{proof}
Since $\nabla^2 \xi = \nabla^2 \phi + \nabla^2 \psi \succeq \nabla^2 \psi$, and by calculation we see that $\nabla^2 \psi$ is diagonal, and
\begin{align*}
[\nabla^2 \psi(t,x)]_{t_{ij},t_{ij}} &= \frac{1}{(3U - t_{ij})^2} \geq \frac{1}{9U^2}\\
[\nabla^2 \psi(t,x)]_{x_{i}, x_{i}} &= \frac{1}{(B_x - x_i)^2} + \frac{1}{(B_x + x_i)^2} \geq \frac{2}{B_x^2}\,{.}\\
\end{align*}
Therefore we have that this gives a lower bound on  $\nabla^2 \psi$, and thus on $\HH$.
\end{proof}

We also show how to pick the initial point for our particular problem, which turns out to be a trivial task, since the only requirement is that it lies in the strict interior of $\mathcal{D}$. The more challenging part is upper bounding the $\nabla \xi$ at that point in Hessian inverse norm.
\begin{lemma}\label{lem:init}
The point $v = (t, x)$, where
 $t_{ij} =  2U$,
 $x = \allzeros$, 
 belongs to the strict interior of $S$. Furthermore, $\ln \Vert \nabla \xi(v) \Vert_{\nabla^2 \xi(v)^{-1}} = O(\ln(2 + m + B_x))$.
\end{lemma}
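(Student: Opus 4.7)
The plan is to handle the two claims separately. For strict interior membership, I would verify each defining inequality of $S$ directly at $v = (t, x)$ with $t_{ij} = 2U$ and $x = \allzeros$. The $x$-box constraints $-B_x < 0 < B_x$ are immediate. For the exponential-cone constraints $\AA_{ij} e^{x_i - x_j} < t_{ij} < 3U$, at $x = \allzeros$ the left-hand side equals $\AA_{ij} \leq s_\AA$, and from the definition $U = s_\AA + \|d\|_1 B_x$ we have $\AA_{ij} \leq s_\AA \leq U < 2U = t_{ij} < 3U$, settling the interiority claim.

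For the norm bound, I would invoke Lemma~\ref{lem:hesslb}: its spectral lower bound on $\nabla^2 \xi$ inverts to the spectral upper bound on $\nabla^2 \xi(v)^{-1}$ by a diagonal matrix with entries $9U^2$ on each $t$-coordinate and $B_x^2/2$ on each $x$-coordinate. This yields
$$\|\nabla \xi(v)\|_{\nabla^2 \xi(v)^{-1}}^2 \;\leq\; 9U^2\, \|\nabla_t \xi(v)\|_2^2 \;+\; \tfrac{B_x^2}{2}\, \|\nabla_x \xi(v)\|_2^2,$$
reducing the task to polynomially bounding the two Euclidean gradient norms.

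Direct differentiation at $v$ gives $\partial_{t_{ij}} \xi(v) = \tfrac{1}{3U - 2U} - \tfrac{1}{2U \cdot \ln(2U/\AA_{ij})}$. The first term equals $1/U$, and the second is $O(1/U)$ since $\AA_{ij} \leq s_\AA \leq U$ forces $\ln(2U/\AA_{ij}) \geq \ln 2$. Summing over the $m$ nonzero entries gives $\|\nabla_t \xi(v)\|_2^2 = O(m/U^2)$, so the $t$-contribution to the target norm is $O(m)$. For the $x$-partials, the $\psi$-part $\tfrac{1}{B_x - x_k} - \tfrac{1}{B_x + x_k}$ vanishes at $x_k = 0$ by symmetry, leaving only contributions from the $\phi_{ij}$ terms; each such contribution is $\pm 1/\ln(2U/\AA_{ij})$, bounded in magnitude by $1/\ln 2$. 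Hence $|\partial_{x_k} \xi(v)| = O(\deg(k))$, and $\|\nabla_x \xi(v)\|_2^2 = O\!\left(\sum_k \deg(k)^2\right) = O(m^2)$, so the $x$-contribution is $O(B_x^2 m^2)$.

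Putting these together, $\|\nabla \xi(v)\|_{\nabla^2 \xi(v)^{-1}}^2 = O(m + B_x^2 m^2)$, whose logarithm is $O(\ln(2 + m + B_x))$, as required. The only subtle points are the symmetry-based cancellation of the $\psi$-derivative at $x = \allzeros$ and the uniform lower bound $\ln(2U/\AA_{ij}) \geq \ln 2$ coming from $U \geq s_\AA$; without the latter, terms like $1/\ln(2U/\AA_{ij})$ could be arbitrarily large when some $\AA_{ij}$ is close to $2U$. Everything else is routine calculus, so I do not anticipate a major obstacle.
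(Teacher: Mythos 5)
Your proposal is correct and follows the same strategy as the paper: verify the defining inequalities of $S$ directly at $v$, then combine Lemma~\ref{lem:hesslb} (inverted to a spectral upper bound on $\nabla^2 \xi(v)^{-1}$) with termwise bounds on the gradient entries, using $U \geq s_\AA$ to keep $\ln(2U/\AA_{ij}) \geq \ln 2$ and noting the $\psi$-part of the $x$-gradient vanishes at $x=\allzeros$. The paper's proof is merely a terser version of the same calculation (and is in fact slightly looser on the per-entry bounds), so your write-up is consistent with and somewhat more careful than the paper's.
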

\begin{proof}
First we verify that the point belongs to the strict interior. As we set $x = \allzeros$, no constraint on $x$ is tight. $3U > t_{ij} = 2U > 0$.

For the second part, we may simply bound the contribution from each term of the barrier to each entry of the gradient.  The $t$ entries end up
bounded by at most $\frac{O(m)}{U}$, while the $x$ entries end up bounded by $O(m)+\frac{O(m)}{B_x}$, providing the claimed bound.
\end{proof}

\subsection{Implementing an Iteration of the Interior Point Method}  

The steps mentioned in the statement of Theorem~\ref{thm:fopt} consist only of standard Newton steps, i.e. minimizing a second order local approximation of the function $f_\mu(x)$. These steps are generally expensive, since they involve applying the inverse  of $\nabla^2 f_\mu$ to a vector. In our case, fortunately, we are able to exploit the structure of $f$ in order to do this in nearly linear time in the sparsity of $\nabla^2 f_\mu$.

Below we give a precise statement concerning our ability to solve linear systems involving the Hessian matrix.

\begin{theorem}\label{thm:hessian_solver} 
For any $\epsilon>0$, and any $\HH = \nabla^2 f_\mu(v) = \nabla^2 \xi(v)$, where $v=(t,x)$ is a point in the strict interior of the feasible region $S$ (see~\ref{eq:opt1}), and any vector $b \in \R^{m+n}$, one can, with high probability, compute  in $\tilde{O}(m \log\epsilon^{-1})$ 
time a vector $y$ such that $\Vert y-y^* \Vert_{\HH}\leq\epsilon\Vert y^{*}\Vert_{\HH}$,
where $y^{*}$ is the solution to $\HH y^* = b$.
\end{theorem}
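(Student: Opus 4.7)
The plan is to exploit the block structure of $\HH = \nabla^2 \xi(v)$. Because each barrier $\phi_{ij}$ couples only the three variables $(t_{ij},x_i,x_j)$ while $\psi$ is separable, partitioning the coordinates into the $m$ slacks $F = \{t_{ij}\}_{(i,j)\in\supp(\AA)}$ and the $n$ potentials $C = \{x_i\}$ makes the block $\HH_{[F,F]}$ \emph{diagonal}, and each row of $\HH_{[F,C]}$ has exactly two nonzero columns (at $x_i$ and $x_j$). This lets me write the block $LDL^\top$ factorization
\[
\HH \;=\; \begin{pmatrix} \II & 0 \\ \HH_{[C,F]}\HH_{[F,F]}^{-1} & \II \end{pmatrix} \begin{pmatrix} \HH_{[F,F]} & 0 \\ 0 & \Sc(\HH,F) \end{pmatrix} \begin{pmatrix} \II & \HH_{[F,F]}^{-1}\HH_{[F,C]} \\ 0 & \II \end{pmatrix},
\]
so that applying $\HH^{-1}$ reduces to one application of $\Sc(\HH,F)^{-1}$ plus $O(m)$ exact arithmetic in the triangular and diagonal factors.

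The central step is to prove that $\Sc(\HH,F)$ is an SDDM matrix with $O(m)$ nonzeros, so that any off-the-shelf Laplacian solver inverts it to relative accuracy $\delta$ in time $\Otil(m\log(1/\delta))$. Since $\HH_{[F,F]}$ is diagonal, the correction $\HH_{[C,F]}\HH_{[F,F]}^{-1}\HH_{[F,C]}$ splits as a sum of edge-indexed rank-one terms $\alpha_{ij}(e_i-e_j)(e_i-e_j)^\top$, each arising from a single $\phi_{ij}$. Computing the $3\times 3$ Hessian of $\phi_{ij}$ and Schur-eliminating $t_{ij}$ shows that this rank-one correction never exceeds the $\tfrac{1}{u_{ij}^2}(e_i-e_j)(e_i-e_j)^\top$ contribution of $\phi_{ij}$ to $\HH_{[C,C]}$, where $u_{ij} = \ln t_{ij} - \ln\AA_{ij} - x_i + x_j > 0$ in the interior; the leftover Laplacian weight equals $1/(u_{ij}(1+u_{ij}))$ when the $\psi$ piece of $\HH_{[F,F]}$ is ignored, and is only larger once that nonnegative correction is put back in. Combined with the strictly positive diagonal that $\psi$ contributes to $\HH_{[C,C]}$, this shows $\Sc(\HH,F)$ is a nonnegatively weighted Laplacian plus a positive diagonal, hence SDDM.

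Given this structure, I would define the linear operator $\mathcal{Z}$ that composes the exact triangular and diagonal factors above with a Laplacian solve for $\Sc(\HH,F)$ run to a small constant relative accuracy. Because spectral approximation is preserved under congruence, $\mathcal{Z} \approx_c \HH^{-1}$ for any chosen constant $c>0$ at cost $\Otil(m)$ per application, so preconditioned Richardson iteration $y^{(k+1)} = y^{(k)} + \mathcal{Z}(b - \HH y^{(k)})$ contracts the error in $\HH$-norm by a constant factor per step; $O(\log(1/\epsilon))$ such iterations reach $\|y - y^*\|_\HH \le \epsilon\|y^*\|_\HH$, giving the claimed $\Otil(m\log(1/\epsilon))$ runtime. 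The main obstacle I anticipate is the algebraic verification in the previous paragraph: carrying out the $3\times 3$ Schur elimination for the exponential-cone barrier $\phi_{ij}$ cleanly and confirming that the leftover edge weight stays nonnegative once the $\psi$ contribution to $\HH_{[F,F]}$ is reintroduced. Everything else (block $LDL^\top$ factorization, nearly-linear-time SDD solvers, and the standard preconditioned-iteration error recursion) is off-the-shelf.
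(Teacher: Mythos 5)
Your proposal is substantively the same as the paper's: you partition coordinates into the slacks $t_{ij}$ and potentials $x_i$, observe $\HH_{[F,F]}$ is diagonal so the Schur complement onto the $x$-block decomposes edgewise into Laplacian rank-one pieces, verify the leftover weights are nonnegative (and only grow when the $\psi$ contribution to $\HH_{[F,F]}$ is restored), and then compose exact triangular/diagonal solves with a Laplacian solve for $\Sc(\HH,F)$ --- this is precisely the $\HH = \UU\ms\UU^\top$ factorization of Lemma~\ref{lem:factor}. The one small deviation is at the end: the paper calls the SDD solver once at accuracy $\epsilon$ and propagates the guarantee through the congruence $\|z\|_\HH = \|\UU^\top z\|_{\ms}$, whereas you run a constant-accuracy solve inside a preconditioned Richardson loop with $O(\log(1/\epsilon))$ iterations; both give the same $\Otil(m\log(1/\epsilon))$ bound, so this is a stylistic choice rather than a different route.
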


In order to achieve this result, we leverage the power of Laplacian solvers.  From the algorithmic point of view, the crucial property of the
Laplacian  is that it is symmetric and \emph{diagonally dominant}. This
enables us to use fast approximate solvers for symmetric and diagonally
dominant linear systems. Namely, there is a long line of work \cite{SpielmanT04,KoutisMP10,KoutisMP11,KelnerOSZ13,CohenKMPPRX14,KyngLPSS16,KyngS16} that builds on an earlier work of Vaidya \cite{Vaidya90} and Spielman and Teng \cite{SpielmanT03}, that designed an SDD linear system solver. We employ as a black box the following theorem, which follows from~\cite{KoutisMP11}, and constructs an operator that approximates $\MM^+$.
\begin{theorem}
\label{thm:vanilla_SDD_solver} For any $\epsilon>0$, and any SDD  matrix $\MM \in \R^{n \times n}$ with $m$ nonzero entries, and any vector $b$ in the image of $\MM$, one can, with high probability, compute  in $\Otil(m\log\epsilon^{-1})$
time a vector $x$ such that $\Vert x-x^* \Vert_{\MM}\leq\epsilon\Vert x^* \Vert_{\MM}$, where $x^*$ is the solution of $\MM x^* = b$. Furthermore, a given choice of random bits produces a correct result for all $b$ simultaneously, and makes $x$ linear in $b$.
\end{theorem}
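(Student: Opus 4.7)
The plan is to follow the preconditioned Chebyshev recursive framework of Koutis--Miller--Peng. The statement is invoked as a black box citing prior work, but I sketch how one establishes it.

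First, I would reduce SDD to Laplacian. An SDD matrix $\MM$ decomposes into a Laplacian piece (from the negative off-diagonals), a nonnegative diagonal (from the excess diagonal), and a term handling any positive off-diagonals; via Gremban's reduction (doubling the dimension and appropriately signing copies of each variable) one obtains a Laplacian $\LL'$ on $2n$ vertices and $O(m)$ edges such that solving $\LL' y = b'$ for a suitably lifted $b'$ yields the solution to $\MM x = b$, with multiplicative error preserved up to constants. So it suffices to solve Laplacian systems within the stated time bound.

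Second, for a Laplacian $\LL_G$ I would construct a preconditioner chain. Start with a low-stretch spanning tree $T$ of $G$ (Abraham--Neiman) whose total stretch is $\Otil(m)$. Then produce a sequence $G = G_0, G_1, \ldots, G_d$, where each $G_{i+1}$ is obtained from $G_i$ by (i) spectrally sparsifying the off-tree edges via effective-resistance sampling guided by the tree, so that $\Otil(n)$ sampled edges suffice for an $O(1)$-spectral approximation, and (ii) partially eliminating low-degree vertices, which shrinks the size geometrically. After $d = O(\log n)$ levels the graph has constant size and can be solved directly.

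Third, I would use a recursive preconditioned Chebyshev iteration: solving in $\LL_{G_i}$ is done by $O(1)$ applications of $\LL_{G_{i+1}}^{+}$, each obtained by recursively invoking the level-$(i+1)$ solver at constant accuracy. The standard recursive analysis, together with the bound on the total work across the chain, shows that a single high-accuracy solve costs $\Otil(m \log \epsilon^{-1})$: the outer Chebyshev iteration reduces error in the $\MM$-norm by a constant factor per step, so $O(\log(1/\epsilon))$ outer steps suffice to drive $\|x - x^*\|_{\MM}$ below $\epsilon \|x^*\|_{\MM}$.

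Finally, the linearity and simultaneity claims follow because, once the chain and the random bits defining its sampling are fixed, each level's solver is a fixed linear operator in its right-hand side (a polynomial in $\LL_{G_{i+1}}^{+}$ applied to the input), so the composition producing $x$ is linear in $b$; the spectral guarantees certifying correctness are chain-level properties that hold with high probability uniformly in $b$. The main obstacle is proving the spectral and sparsity invariants of the chain simultaneously---namely that effective-resistance sampling combined with partial elimination preserves an $O(1)$-spectral approximation while geometrically shrinking the size---which is the technical heart of the KMP construction and the only nontrivial ingredient beyond classical iterative methods.
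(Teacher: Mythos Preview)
The paper does not prove this theorem; it invokes it as a black box, stating that it ``follows from~\cite{KoutisMP11}.'' Your proposal correctly identifies this and supplies a broadly faithful high-level sketch of the Koutis--Miller--Peng framework (Gremban reduction to Laplacians, low-stretch-tree--based sparsifier chain with partial elimination, recursive preconditioned iteration, and linearity once the random bits are fixed), so there is no paper-side argument to compare against.
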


The result follows using this tool, and the following structural lemma, whose proof can be found in Appendix~\ref{sec:app3}.
\begin{lemma}\label{lem:factor}
The Hessian $\nabla^2 \xi$ has a factorization
$${\HH} = \UU \ms \UU^\top\,{,}$$ where $\ms$ is SDD, $\UU$ is lower triangular, and each of then has $O(m)$ nonzero entries. Furthermore,  this factorization can be computed in $O(m)$ time.
\end{lemma}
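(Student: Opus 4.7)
My plan is to obtain the factorization by performing one block step of Gaussian elimination that pivots out the $t$-variables. Ordering coordinates as $(t,x)$, I would first verify that $\nabla^2 \xi$ has the block form
$$ \HH = \begin{pmatrix} \md_t & \mK^\top \\ \mK & \HH_x \end{pmatrix}, $$
where $\md_t \in \R^{m \times m}$ is diagonal and both $\mK \in \R^{n \times m}$ and $\HH_x \in \R^{n \times n}$ decompose as $m$ per-edge contributions of constant support. This is immediate from the structure of the barrier: $\psi$ is a sum of univariate logs and contributes only to the diagonals, while each $\phi_{ij}$ touches only the three coordinates $(t_{ij}, x_i, x_j)$. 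Letting $u_{ij} = \ln t_{ij} - \ln\AA_{ij} - x_i + x_j$, direct differentiation of $-\ln u_{ij}$ shows that the column of $\mK$ indexed by $(i,j)$ is $\tfrac{1}{u_{ij}^2 t_{ij}}(\ve_j - \ve_i)$, and that the $x$-$x$ contribution of $\phi_{ij}$ is $\tfrac{1}{u_{ij}^2}(\ve_i - \ve_j)(\ve_i - \ve_j)^\top$, so that $\HH_x = \md_\psi + \sum_{(i,j)\in\supp(\AA)} \tfrac{1}{u_{ij}^2}(\ve_i - \ve_j)(\ve_i - \ve_j)^\top$ for a positive diagonal $\md_\psi$ coming from $\psi$.

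Given this block form, the standard block $LDL^\top$ decomposition pivoting on the diagonal $\md_t$ yields
$$ \HH = \underbrace{\begin{pmatrix} \II & \mzero \\ \mK \md_t^{-1} & \II \end{pmatrix}}_{\UU} \underbrace{\begin{pmatrix} \md_t & \mzero \\ \mzero & \HH_x - \mK \md_t^{-1} \mK^\top \end{pmatrix}}_{\ms} \UU^\top, $$
with $\UU$ lower triangular. Nonzero counts come for free: $\mK$ (and hence $\mK\md_t^{-1}$) has $2m$ nonzeros, so $\UU$ has $O(m+n)$ nonzeros, while $\mK\md_t^{-1}\mK^\top$ is a per-edge sum of rank-one matrices each supported on a single edge $(i,j)$, so the Schur complement shares the $O(m)$-sparsity pattern of $\HH_x$. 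Since all the per-edge arithmetic is $O(1)$, the full factorization is produced in $O(m)$ time.

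The only substantive step is checking that $\ms$ is \SDDM. The $\md_t$ block is a positive diagonal, hence trivially so. Assembling the per-edge contributions to the other block gives
$$ \HH_x - \mK\md_t^{-1}\mK^\top = \md_\psi + \sum_{(i,j)\in\supp(\AA)} \left(\frac{1}{u_{ij}^2} - \frac{1}{u_{ij}^4 t_{ij}^2 [\md_t]_{(i,j),(i,j)}}\right)(\ve_i - \ve_j)(\ve_i - \ve_j)^\top, $$
so it suffices to show each per-edge coefficient is nonnegative, i.e.\ $[\md_t]_{(i,j),(i,j)} \geq \tfrac{1}{u_{ij}^2 t_{ij}^2}$. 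The $\phi_{ij}$ contribution alone to this diagonal entry is $\tfrac{1}{u_{ij}^2 t_{ij}^2} + \tfrac{1}{u_{ij} t_{ij}^2}$ (the first from the rank-one piece $\tfrac{1}{u^2}\nabla u\,\nabla u^\top$ in the $t$-position, the second from $-u_{tt}/u$), and the $\psi$-contribution is strictly positive in the interior, so the inequality holds with slack. Hence the Schur complement is a positive diagonal plus a nonnegative combination of edge-Laplacians, which is \SDDM. I expect this sign check to be the only delicate point; the reason it works is that the $\phi_{ij}$ barrier donates enough diagonal mass to the $t_{ij}$ pivot that Schur-complementing cannot destroy the edge-Laplacian sign structure on the $x$-block.
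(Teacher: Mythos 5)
Your proof is correct and takes essentially the same route as the paper: both eliminate the $t$-variables by Schur complementation against the diagonal $t$-block and then verify that the resulting $x$-block is a positive diagonal plus a nonnegatively weighted edge Laplacian, hence SDD. The only presentational difference is that you perform a single block $LDL^\top$ step on the full $(t,x)$ system, whereas the paper factors each per-edge $3\times 3$ submatrix $\nabla^2\phi_{ij}+\DD_{ij}$ and observes that these local eliminations sum to the global one (which is the same computation, since the $t$-block is diagonal and each $t_{ij}$ appears in exactly one $\phi_{ij}$); your sign check via $[\md_t]_{(i,j)} \geq 1/(u_{ij}^2 t_{ij}^2)$ is equivalent to the paper's $\beta>\alpha$ observation.
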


With this in hand, proving Theorem~\ref{thm:hessian_solver} is immediate. First note that, given $\widetilde{\ms}$, we can actually choose a $\widetilde{\ms}^{-1}$ satisfying the needed properties: a linear-operator based graph Laplacian solver, such as~\cite{KoutisMP11}. Having access to the linear operator $\widetilde{\ms}^{-1}$, we consider the error produced by applying the operator  $\UU^{-1}\widetilde{\ms}(\UU^\top)^{-1}$:
\begin{align*}
\Vert  (\UU^\top)^{-1}\widetilde{\ms}^{-1}\UU^{-1} b -   \HH^{-1}b  \Vert_{\HH} &=  
\Vert  (\UU^\top)^{-1}(\widetilde{\ms}^{-1} - \ms^{-1}) \UU^{-1} b \Vert_{\HH} 
= \Vert (\widetilde{\ms}^{-1} - \ms^{-1}) \UU^{-1} b 
\Vert_{\ms} \\
&\leq \epsilon \Vert  \ms^{-1} \UU^{-1} b \Vert_{\ms} 
= \epsilon \Vert (\UU^\top)^{-1} \ms^{-1} \UU^{-1} b \Vert_{\UU \ms \UU^\top}  
= \epsilon \Vert \HH^{-1} b \Vert_{\HH}\,{.}
\end{align*}

\subsection{Error Tolerance}

While the classical analysis of Newton's method used for iterations of interior point methods assumes exact computations, in our case the Laplacian solver we employ adds some error. We quickly show that this error does not hurt us, and as a matter of fact it is sufficient to solve these systems to constant accuracy.

First, we require understanding the guarantees of Newton's method, and its requirements.
\begin{fact}[Progress via Newton Steps]\label{fact:pvns}
Let $v$ be a point in the interior of the feasible region such that $$\Vert \nabla f_\mu (v)\Vert_{\HH_v^{-1}} \leq \frac{1}{4}\,{.}$$
 Then, applying one step of the interior point method consists of producing a new iterate 
$$v' = v - \HH_v^{-1} \nabla f_\mu(v)\, \quad \textnormal{which provably satisfies} \quad \Vert \nabla f_\mu (v')\Vert_{\HH_{v'}^{-1}} \leq \frac{1}{8}\,{.}$$
In order to make progress it is sufficient that $\Vert \nabla f_\mu(v')\Vert_{\HH_{v'}^{-1}} \leq \frac{1}{6}$.
\end{fact}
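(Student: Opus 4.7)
The plan is to prove this as a standard consequence of the self-concordance of $\xi$ (established in Fact~\ref{fact:sc}). First I would observe that $f_\mu(t,x) = \mu(\langle \allones, t\rangle - \langle d, x\rangle) + \xi(t,x)$ differs from $\xi$ only by a linear term, so $\nabla^2 f_\mu = \nabla^2 \xi = \HH$ and $f_\mu$ is itself $O(m)$-self-concordant on the strict interior of $S$. Define the Newton decrement $\lambda(v) \defeq \Vert \nabla f_\mu(v) \Vert_{\HH_v^{-1}}$, so the hypothesis reads $\lambda(v) \le 1/4$ and the Newton step is $v' = v - \HH_v^{-1} \nabla f_\mu(v)$.

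Next I would invoke the standard local quadratic convergence lemma for self-concordant functions (Nesterov, \emph{Introductory Lectures on Convex Optimization}, Theorem 4.1.14): whenever $\lambda(v) < 1$, the point $v'$ lies in the Dikin ellipsoid of $v$, stays in the domain, and satisfies
\begin{equation*}
\lambda(v') \;\le\; \left( \frac{\lambda(v)}{1-\lambda(v)} \right)^{2}.
\end{equation*}
The proof of that lemma proceeds by controlling the Hessian along the line segment $v + s(v'-v)$ using the one-dimensional self-concordance inequality $|\phi'''(s)| \le 2\phi''(s)^{3/2}$ applied to $\phi(s) = f_\mu(v + s(v'-v))$, then integrating to bound the gradient at $v'$ in the norm induced by $\HH_{v'}$. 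I would cite this rather than reprove it from scratch, since $\xi$ satisfies exactly the self-concordance hypothesis required.

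Substituting $\lambda(v) \le 1/4$ into the contraction bound gives
\begin{equation*}
\lambda(v') \;\le\; \left(\frac{1/4}{3/4}\right)^{2} \;=\; \frac{1}{9} \;<\; \frac{1}{8},
\end{equation*}
which is the claimed guarantee. The auxiliary remark that $\lambda(v') \le 1/6$ suffices for further progress follows because $1/6 < 1/4$, so the hypothesis of the lemma is re-established at $v'$ and the method can be iterated; quantitatively, Nesterov's analysis of the primal path-following method shows that as long as $\lambda(\cdot)$ stays below a universal constant (any value strictly below $(3-\sqrt{5})/2 \approx 0.38$ works, and $1/6$ is comfortably below this), one can increase $\mu$ by a factor of $1 + \Theta(1/\sqrt{\nu})$ between iterations while preserving the invariant, which is exactly what Theorem~\ref{thm:nesterovipm} uses to obtain the $O(\sqrt{\nu})$ iteration count.

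I expect no real obstacle here: the statement is a direct translation of a textbook fact once one checks that the path-following potential $f_\mu$ inherits self-concordance from $\xi$. The only mild care needed is to verify that the Newton step does not leave the strict interior of $S$, which again follows from $\lambda(v) < 1$ placing $v'$ inside the Dikin ellipsoid of $v$, a region contained in $\mathrm{dom}(\xi)$ by self-concordance.
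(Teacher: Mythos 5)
Your proposal is correct and matches the paper's approach, which is simply to treat this as a standard fact about damped Newton steps for self-concordant functions and refer to a textbook (the paper points to Ben-Tal and Nemirovski, you to Nesterov; these are equivalent sources for the contraction bound $\lambda(v') \le \bigl(\lambda(v)/(1-\lambda(v))\bigr)^2$, and your arithmetic $\left(1/3\right)^2 = 1/9 < 1/8$ checks out). Your observation that $f_\mu$ inherits self-concordance from $\xi$ because they differ by a linear term, and that the threshold $1/6$ is only used later so that the \emph{approximate} Newton step of Lemma~\ref{lem:ns} still lands the iterate safely inside the region where the path-following invariant can be restored, is exactly the intended reading.
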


We can easily show that applying the inverse matrix with the solver guarantees we give in Theorem~\ref{thm:hessian_solver} is sufficient in order to make progress.

\begin{lemma}\label{lem:ns}
Let $v$ be a point in the interior of the feasible region such that $$\Vert \nabla f_\mu (v)\Vert_{\HH_v^{-1}} \leq \frac{1}{4}\,{.}$$
Letting  $v'' = v - \Delta$ such that
$$ \Vert  \Delta -  \HH_v^{-1} \nabla f_{\mu}(v) \Vert_{\HH_v} \leq \epsilon \Vert  \nabla f_\mu(v) \Vert_{\HH_v^{-1}}\,{,} $$ for $\epsilon \leq 0.1$, we get that
$$  \Vert \nabla f_{\mu}(v'')  \Vert_{ \HH_{v''}^{-1} } \leq \frac{1}{6} \,{.} $$
\end{lemma}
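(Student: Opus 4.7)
The plan is to bound $\nabla f_\mu(v'')$ by relating $v''$ to the exact Newton iterate $v' = v - \HH_v^{-1}\nabla f_\mu(v)$ via the $O(m)$-self-concordance of $\xi$ (and hence of $f_\mu$, since $f_\mu$ only adds a linear term to $\xi$). Write $e = \Delta - \HH_v^{-1}\nabla f_\mu(v)$, so that $v'' = v' - e$ and, by hypothesis, $\lVert e\rVert_{\HH_v} \leq \epsilon \cdot \lVert\nabla f_\mu(v)\rVert_{\HH_v^{-1}} \leq \epsilon/4 \leq 1/40$. By Fact~\ref{fact:pvns}, $\lVert\nabla f_\mu(v')\rVert_{\HH_{v'}^{-1}}\leq 1/8$. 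The task then is to propagate this bound through the perturbation $-e$ and through the Hessian change from $v'$ to $v''$.

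First I would transfer the bound on $e$ from $\HH_v$ to $\HH_{v'}$. The Newton displacement satisfies $\lVert v'-v\rVert_{\HH_v} = \lVert\nabla f_\mu(v)\rVert_{\HH_v^{-1}} \leq 1/4 < 1$, so the standard self-concordance Hessian inequality gives
\begin{equation*}
(1-\tfrac14)^2\,\HH_v \;\pleq\; \HH_{v'} \;\pleq\; (1-\tfrac14)^{-2}\HH_v \;=\; \tfrac{16}{9}\HH_v,
\end{equation*}
hence $\lVert e\rVert_{\HH_{v'}} \leq \tfrac43\lVert e\rVert_{\HH_v} \leq 1/30$. Next I apply the self-concordance gradient estimate: for a self-concordant $f_\mu$ and any displacement $u$ with $\lVert u\rVert_{\HH_x}=r<1$,
\begin{equation*}
\lVert \nabla f_\mu(x+u)-\nabla f_\mu(x)-\HH_x u\rVert_{\HH_x^{-1}} \;\leq\; \frac{r^2}{1-r}.
\end{equation*}
Taking $x=v'$ and $u=-e$ and combining with the triangle inequality gives
\begin{equation*}
\lVert \nabla f_\mu(v'')\rVert_{\HH_{v'}^{-1}} \;\leq\; \lVert\nabla f_\mu(v')\rVert_{\HH_{v'}^{-1}} + \lVert e\rVert_{\HH_{v'}} + \frac{\lVert e\rVert_{\HH_{v'}}^{2}}{1-\lVert e\rVert_{\HH_{v'}}} \;\leq\; \tfrac18 + \tfrac{1}{30} + \tfrac{1/900}{29/30}.
\end{equation*}

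Finally I convert the norm from $\HH_{v'}^{-1}$ to $\HH_{v''}^{-1}$, using self-concordance once more with the displacement from $v'$ to $v''$: since $\lVert v''-v'\rVert_{\HH_{v'}} = \lVert e\rVert_{\HH_{v'}} \leq 1/30$, we have $\HH_{v''}\pgeq (29/30)^2\,\HH_{v'}$, so $\HH_{v''}^{-1}\pleq (30/29)^2\,\HH_{v'}^{-1}$, and
\begin{equation*}
\lVert \nabla f_\mu(v'')\rVert_{\HH_{v''}^{-1}} \;\leq\; \tfrac{30}{29}\,\lVert\nabla f_\mu(v'')\rVert_{\HH_{v'}^{-1}}.
\end{equation*}
Plugging in the numbers gives a quantity strictly below $1/6$, yielding the claim. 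The only subtle point in this argument is that the various self-concordance constants all compose multiplicatively and the bound is tight; I expect the main obstacle to be verifying that the chain $\epsilon \leq 1/10 \Rightarrow \lVert e\rVert_{\HH_v}\leq 1/40 \Rightarrow \lVert e\rVert_{\HH_{v'}}\leq 1/30$ together with the $1/8$ guarantee from Fact~\ref{fact:pvns} fits under the $1/6$ threshold after the final $30/29$ factor is applied, which requires being careful with each constant rather than using coarse estimates.
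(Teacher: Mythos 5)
Your proof is correct and follows essentially the same route as the paper's: bound the exact Newton iterate $v'$ via Fact~\ref{fact:pvns}, transfer the error norm from $\HH_v$ to $\HH_{v'}$ using the Hessian stability bound $\HH_{v'}\pleq(4/3)^2\HH_v$, then propagate through the perturbation to $v''$ and convert to the $\HH_{v''}^{-1}$ norm. The only cosmetic difference is that you split the self-concordance gradient estimate as $r + r^2/(1-r)$ whereas the paper writes the equivalent $r/(1-r)$ directly; the final constants match.
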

Since the proof is rather standard, we defer it to Appendix~\ref{sec:pfofns}.

\subsection{Putting Everything Together}
We can combine the results from this section in order to provide a proof for Theorem~\ref{thm:fopt}.
\begin{proof}[Proof of Theorem~\ref{thm:fopt}]
By combining 
Theorem~\ref{thm:nesterovipm}, along with Fact~\ref{fact:sc}, Lemma~\ref{lem:hesslb} and Lemma~\ref{lem:init}, we see that we can approximately minimize the function defined in Equation~\ref{eq:exp0} by performing 
$$O\left( \sqrt{m} \ln \left ( 2 + m + B_x + \frac{s_\AA}{\epsilon} + \frac{\Vert d \Vert_1}{\epsilon} \right ) \right)$$
iterations of the interior point method referenced in Theorem~\ref{thm:nesterovipm}.

From Theorem~\ref{thm:hessian_solver} and the iteration accuracy required 
by Fact~\ref{fact:pvns} and Lemma~\ref{lem:ns} we see that each iteration of the interior point method referenced in Theorem~\ref{thm:fopt} can be implemented in time $\Otil(m)$.  This yields the conclusion.
\end{proof}

\appendix
%!TEX root = main.tex
\section{Deferred Proofs}

\subsection{Proof of Theorem~\ref{thm:newtonoracle}}\label{sec:pfofnewtonoracle}
\begin{proof}[Proof of Theorem~\ref{thm:newtonoracle}]
The iteration we are going to implement is 
\begin{align}\label{it:boxoracle}
x_{i+1} &= x_{i} + \frac{1}{k} \cdot \oracle \left(\frac{e^2}{k^2} \nabla^2 f(x_i), \frac{1}{k} \nabla f(x) \right)\,{.}
\end{align}
Since $f$ is second-order robust with respect to $\ell_\infty$ by definition (see Definition~\ref{def:sor}), we know that within an $\ell_\infty$-ball centered at $x$ the function $f$ is lower and upper bounded by $f_L$ and $f_U$, respectively, where:
\begin{align*}
f_L(x') &= f(x) + \langle \nabla f(x), x' - x\rangle + \frac{1}{2 e^2}(x'-x)^\top \nabla^2 f(x) (x'-x)\,{,} \\
f_U(x') &= f(x) + \langle \nabla f(x), x' - x\rangle + \frac{e^2}{2}(x'-x)^\top \nabla^2 f(x) (x'-x)\,{.}
\end{align*}

Also, define $x_L$ and $x_U$ to be the minimizers of $f_L$ and $f_U$, respectively, over the $\ell_\infty$-ball of radius $\frac{1}{k}$ centered at $x$, i.e.
\begin{align*}
x_L = \argmin_{\Vert z-x \Vert_{\infty} \leq \frac{1}{k}} f_L(z)\quad { \textnormal{and}}\quad x_U = \argmin_{\Vert z-x \Vert_{\infty} \leq \frac{1}{k}} f_U(z)\,{.} 
\end{align*}

Next, we see how much $f$ decreases when we move from $x$ to $x' = x + \frac{1}{k} \Delta $, where $\Delta$ is obtained via the oracle call $\oracle \left(\frac{e^2}{k^2} \nabla^2 f(x_i), \frac{1}{k} \nabla f(x) \right)$.   We know from Definition~\ref{def:oracle} that 
\begin{equation}\label{eq:bd1}
\frac{1}{k} \langle \nabla f(x), \Delta \rangle + \frac{e^2}{2k^2} \Delta^\top \nabla^2 f(x) \Delta \leq \frac{1}{2}\left(\langle \nabla f(x), x_U - x \rangle + \frac{e^2}{2} (x_U - x)^\top \nabla^2 f(x) (x_U - x) \right) 
\end{equation}
as the function the oracle is approximately minimizing is precisely that.
Expanding $f_U \left ( x+\frac{1}{k}\Delta \right)$ we have
\begin{align*}
f_U(x')-f_U(x) &= f_U \left (x+\frac{1}{k}\Delta \right )-f_U(x) \\
&= \frac{1}{k} \langle \nabla f(x), \Delta \rangle + \frac{e^2}{2k^2} \Delta^\top \nabla^2 f(x) \Delta \\
&\leq \frac{1}{2} \left(\langle \nabla f(x), x_U - x \rangle + \frac{e^2}{2} (x_U - x)^\top \nabla^2 f(x) (x_U - x) \right) \\
&= \frac{1}{2} (f_U(x_U)-f_U(x))\,{.}
\end{align*}
 
Since $f(x) = f_U(x)$, we see that
\begin{equation}\label{eq:bd2}
f(x)-f(x') \geq f_U(x) - f_U(x') \geq \frac{1}{2}(f_U(x) - f_U(x_U))\,{.}
\end{equation}

Also, we have that
\begin{align*}
f_U(x_U) - f_U(x) &\leq  f_U \left ( x+\frac{x_L-x}{e^4} \right ) - f_U(x)\\
&= \left \langle \nabla f(x), \frac{x_L-x}{e^4} \right \rangle + \frac{1}{2 e^4}\left ( \frac{x_L-x}{e^4} \right )^\top \nabla^2 f(x) \left ( \frac{x_L-x}{e^4} \right ) \\
&= \frac{1}{e^4} \left ( \langle \nabla f(x), x_L-x \rangle + \frac{1}{2 e^2} (x_L-x)^\top \nabla^2 f(x) (x_L-x) \right ) \\
&= \frac{1}{e^4} (f_L(x_L) - f_L(x))\,{.}
\end{align*}
Combining this with Equation~\ref{eq:bd2} gives
\begin{equation}\label{eq:bd3}
f(x)-f(x') \geq \frac{1}{2e^4} (f_L(x) - f_L(x_U))\,{.}
\end{equation}

Finally, we show that this amount of progress is comparable to that achievable by making a
large step towards $x^*$. More precisely, we have from the $R_\infty$ condition that
$\Vert x-x^* \Vert_\infty \leq R_\infty$. Thus, letting $\hat{x} = x + \frac{1}{\max(k R_\infty, 1)} (x^* - x)$, we have that $\Vert \hat{x} - x^*\Vert_\infty \leq \frac{1}{k}$. Therefore, $f_L(
\hat{x}) \geq f_L(x_L)$, since $x_L$ was a minimizer of $f_L$ over the $\ell_\infty$-ball of radius $\frac{1}{k}$ around $x$. Also, since $f_L$ lower bounds $f$ over this $\ell_\infty$-ball, 
\begin{equation}\label{eq:bd4}
f_L(x_L) \leq f_L(\hat{x}) \leq f(\hat{x}).
\end{equation}
Combining Equations~\ref{eq:bd3} and ~\ref{eq:bd4}, we see that
\begin{align*}
f(x)-f(x') \geq \frac{1}{2e^4} (f(x) - f_L(x_L)) \geq \frac{1}{2e^4 k} (f(x) - f(\hat{x})) \geq \frac{1}{2 e^4 \max(k R_\infty, 1)} (f(x)-f(x^*))\,{,}
\end{align*}
where the last inequality follows from convexity.
This implies that at every iteration $f(x) - f(x^*)$ is decreased by a factor of
$(1-\Omega(1/(kR_\infty+1)))$,
implying that after $$T = O\left((kR_\infty+1) \cdot \log\left( \frac{f(x_0)-f(x^*)}{\epsilon}  \right)\right) $$ iterations, we have that $f(x_T) - f(x^*) \leq \epsilon$.
\end{proof}

\subsection{Proof of Lemma~\ref{lem:scaling_accuracy}}\label{sec:pfofscaling_accuracy}
\begin{proof}[Proof of Lemma~\ref{lem:scaling_accuracy}]
Suppose that, without loss of generality, the $i^{th}$ row of $\MM$ has a very large violation of the scaling constraint: letting $\gamma := (r_\MM)_i - r_i$ we have $\abs{\gamma} \geq \epsilon/\sqrt{2n}$. 

In order to improve the solution, we can make an update to the corresponding coordinate of $x$ which makes the largest possible improvement in function value. More precisely by setting $x_i' = x_i + \delta$, and $x_j' = x_j$ whenever $j \neq i$, we have that
\begin{align*}
f(x) - f(x') = (r_\MM)_i (1-e^\delta) + r_i \delta\,{.}
\end{align*}
Optimizing for the largest possible decrease, we set $\delta = \ln(r_i/(r_\MM)_i)$ which
shows that we can decrease $f$ by
\begin{align*}
f(x) - f(x') = (r_\MM)_i - r_i - r_i\ln \left(1+\frac{(r_\MM)_i-r_i}{r_i}\right) =
r_i \left( \frac{\gamma}{r_i} -\ln \left(1+\frac{\gamma}{r_i}\right)\right)\,{.}
\end{align*}

Since we have $\gamma/r_i \geq -1$, we can lower bound the improvement by
 $$f(x) - f(x') \geq r_i \cdot \frac{(\gamma/r_i)^2}{4}  \geq \frac{1}{r_i} \cdot \frac{\epsilon^2}{2n}\,{,}$$
 whenever $\gamma/r_i \leq 1.62$,
 and by
 $$f(x) - f(x') \geq r_i \cdot \frac{\gamma /r_i}{3} = \frac{\gamma}{3} \geq \frac{\epsilon}{3\sqrt{2n}}  \,{,}$$
 whenever $\gamma  / r_i > 1.62$.
 
 Since by assumption $\|r\|_\infty \leq 1$, this change improves function value by at least $\min\{  \epsilon^2 /(2n), \epsilon/(3\sqrt{2n})\}$, which contradicts the fact that $f(x) - f^*\leq \epsilon^2 / 3n$. Therefore all rows and columns are within $\epsilon/\sqrt{2n}$ away from being correctly scaled. Hence this is a $\epsilon$-$(r,c)$ scaling.

\end{proof}

\subsection{Proof of Lemma~\ref{lem:scalreg}}\label{sec:pfofscalreg}

\begin{proof}
The proof is similar to the one for Lemma~\ref{lem:balreg}. The first point holds by the same argument. 

For the third part, all $(x,y)$ for which $\widetilde{f}(x,y) \leq \widetilde{f}(0,0)$  must satisfy:
$$\frac{\epsilon^2}{36n^2e^B} (e^{x_i} + e^{-x_i}) \leq f(0,0) + \frac{\epsilon^2}{36n^2e^B} \cdot 4 = \uA + \frac{\epsilon^2}{9ne^B}\,{,}$$
and similarly for $y$. Therefore
$$\abs{x_i} \leq \ln\left(  \frac{36n^2e^B}{\epsilon^2}\uA + 4 \right) = O( B \ln (n\uA\epsilon^{-1}) )\,{,}$$ and similarly for $y_i$. 

The second part follows by the nonnegativity of the regularizer and the observation that 
$\widetilde{f}(z^*_\epsilon) \leq f(z^*_\epsilon) + \epsilon^2/(36n^2e^B)\cdot n \cdot 4e^B
\leq f(z^*_\epsilon) + \epsilon^2/(9n)$. By the third property we know that the level set in
bounded and thus $\widetilde f$ attains its minimum, and that minimum can only be better that
$x^*$, which concludes the proof.
\end{proof}

\subsection{Proof of Theorems~\ref{thm:scaling} and~\ref{thm:escaling}}\label{sec:mainscal}
\begin{proof}[Proof of Theorem~\ref{thm:escaling}]
By Lemma~\ref{lem:scaling_accuracy} and Lemma~\ref{lem:scalreg} we get that in order to obtain a $2\epsilon$-approximate scaling, it is sufficient to minimize $\widetilde{f}$ up to $\epsilon^2/(2n)$ additive error.
Furthermore, from Lemma~\ref{lem:scalreg} we get that the $R_\infty$ bound required for Theorem~\ref{thm:sor_sdd} is $R_\infty = O(B\ln(n\uA \epsilon^{-1}))$.
Finally, since $f(0) = \uA$ and $f(z^*)\geq O(n + B)$, we see that, initializing at $(x_0, y_0) = (0,0)$, the total
running time of the method is upper bounded by
$$\Otil\left(  m B \ln^2 (\uA\epsilon^{-1})   \right)\,{.}$$
\end{proof}

\begin{proof}[Proof of Theorem~\ref{thm:scaling}]
We can directly prove this theorem by applying Theorem~\ref{thm:escaling} to the optimal
solution promised. Since $z^*$ exactly $(r,c)$-scales $\AA$, we know that it must be a
minimizer of $f$ and thus $f(z^*) = f^*$. 
Moreover, by definition we have the bound,
$\|z^*\| = B \leq \ln (\kappa(\UU^*_{\epsilon}) + \kappa(\VV^*_\epsilon))$,
which concludes that proof.
\end{proof}

\subsection{Proof of Lemma~\ref{lem:almostdss}}\label{sec:pfofalmostdss}
\begin{proof}
By Lemma~\ref{lem:scalcond}, we know that any almost scalable
matrix can be written as a block lower triangular matrix, whose diagonal blocks are exactly
scalable.
By Lemma~\ref{lem:kkn}, every such block can be scaled to doubly stochastic, using factors
with a ratio at most $O(n_i\log(1/\lA))$, where $n_i$ is the number of vertices in block $i$.

The infimum of the function value is exactly the sum of the function values for the diagonal
block problems, since the contribution of the entries below the diagonal can be made
arbitrarily close to $0$. We observe that it suffices to ensure that the contribution of each
such edge is at most $\eps^2/3n^3$, since then the total contribution will be at most
$\eps^2/3n$ which is the additive error we can tolerate. Scaling the off-diagonal entries can
be done in a very simple way. For any block, we can scale all the columns down by a fixed
amount and all the columns up by the same amount. This will not affect the contribution of
the block's entries to the function and will only decrease the contribution of all the
off-diagonal blocks in the same columns. By choosing the ratio between any two consecutive
blocks to be $\log(n^3\uA/\eps^2)$, we can ensure that the entries contained in the
interesection of the rows and columns of these blocks contribute less than $\eps^2/3n^3$
each. That ratio between any two factors of this new scaling is at most
$$O\left(n\log(n\uA/\eps) + \sum_i n_i\log(1/\lA)\right)\leq O(n\log(n\wA/\eps)).$$
\end{proof}

\subsection{Proof of Lemma~\ref{lem:balancing_requirement}}\label{sec:pfofbalreq}
\begin{proof}[Proof of Lemma~\ref{lem:balancing_requirement}]
First we observe that since the Hessian of $f$ is SDD, it is spectrally upper bounded by two times its
diagonal and therefore by the identity matrix multiplied by twice the trace, that is
$\nabla^2f(x)\preceq 2\cdot\tr(\nabla^2f(x))\cdot\mI$. Since, by construction,
$\tr(\nabla^2 f(x)) = \sum_i
(r_\MM + c_\MM)_i = 2f(x)$, we have that
 $$\nabla^2 f(x) \preceq 4 f(x) \mI\,{.}$$
Therefore for any $y$ with $f(y)\leq f(x)$, we have that for some $t\in[0,1]:$
\begin{align*}
f(y)  &= f(x) + \langle \nabla f(x), y - x \rangle + \frac{1}{2} \cdot (y - x)^\top
          \nabla^2 f(x+t(y - x)) (y - x) \\
&\leq f(x) + \langle \nabla f(x), y - x \rangle + 2\|x-y\|_2^2 \cdot f(x+t(y-x))   \\
&\leq f(x) + \langle \nabla f(x), y - x \rangle + 2\|x-y\|_2^2 \cdot f(x).
\end{align*}
It is straightforward to reason that
\begin{align*}
f_* &= \inf_y f(y) 
    \leq f(x) + \min_y \left\{\langle \nabla f(x), y - x \rangle
        + 2\|x-y\|_2^2 f(x)\right\}
    = f(x) - \frac{\|\nabla f(x)\|_2^2}{8f(x)},
\end{align*}
and thus,
$$\frac{\|\nabla f(x)\|_2}{f(x)} \leq \sqrt{\frac{8(f(x) - f_*)}{f(x)}} \,{.}$$
Finally, we lower bound $f(x)$. Since the matrix can be balanced, its corresponding graph is strongly connected. Therefore it contains a cycle, and thus some edge $(i,j)$ satisfies $e^{x_i - x_j} \geq 1$. Hence $f(x) \geq \AA_{ij} \geq \lA$. 
Plugging in this lower bound, we get that $$f(x) - f_* \leq \frac{\epsilon^2 \lA}{8} \leq
\frac{\epsilon^2}{8} f_*\,{.}$$
Hence
$$\frac{\| \nabla f(x)\|_2}{f(x)} \leq  \sqrt{\frac{\eps^2 f_*}{f_*}}\leq \eps\,{,}$$
which is equivalent to the fact that $\mdiag(\exp(x))$ yields an $\epsilon$-balancing for $\AA$.
We note that a similar bound also follows from~\cite{ORY17}, using a different argument.
 \end{proof}

\subsection{Proof of Theorems \ref{thm:balancing} and
\ref{thm:ebalancing}}\label{sec:mainbal}
\begin{proof}[Proof of Theorem~\ref{thm:ebalancing}]
By Lemma~\ref{lem:balancing_requirement} and Lemma~\ref{lem:balreg} we get that optimizing
$\widetilde f$ up to an additive error of $\eps^2\AA_{\min}/24$, suffices to get an
$\eps$-balancing of the matrix.

Furthermore, from Lemma~\ref{lem:balreg} we see that the $R_\infty$ bound required for
Theorem~\ref{thm:sor_sdd} is $R_\infty = O(B \log (n\wA \epsilon^{-1}))$. Finally, using the
fact that $f(0) = \uA$, we see that, initializing at $x_0 = 0$, the total running time of the
method is
$$\Otil(m B \log^2(\wA \epsilon^{-1}) )\,{.}$$
\end{proof}

\begin{proof}[Proof of Theorem~\ref{thm:balancing}]
Having proved Theorem~\ref{thm:ebalancing}, this theorem is a simple corollary. Consider
$x^*$ to be the vector such that $\DD^* = \mdiag(\exp(x^*))$. That implies that $\nabla
f(x^*)=0$ and therefore (by the convexity of $f$) $x^*$ is a minimizer of $f$ implying that
$f(x^*) = f^*$. Moreover, $B = \max_i \abs{\log\DD^*_{ii}} = O(\log \kappa(\DD^*))$,
which concludes that proof by applying Theorem~\ref{thm:ebalancing}.
\end{proof}

\subsection{Proof of Lemma~\ref{lem:baldiam}}\label{sec:pfofbaldiam}
\begin{proof}
Consider the optimal solution $x$ to the optimization problem described in Equation~\ref{eq:fbal}, for which we know that $\DD^* = \mdiag(\exp(x))$ via Lemma~\ref{lem:balancing_requirement}. Since this is a minimizer, we know that
\begin{align*}
\sum_{(i,j) \in \supp(\AA)} \AA_{ij} e^{x_i - x_j} = f(x) \leq f(0) = \uA\,{.}
\end{align*}
Therefore, for any $(i,j) \in \supp(\AA)$, one has that 
$$ x_i - x_j \leq \ln (\uA / \AA_{ij}) \leq \ln \wA\,{.} $$
Since there is a directed path of length at most $k$ from any vertex to any other, we get
that
$$\ln \kappa(\DD^*) = \max_i x_i - \min_j x_j = O(k \ln \wA)\,{.}$$
\end{proof}

%\section{Facts on Interior Point Methods}

\subsection{Proof of Lemma~\ref{lem:ns}}\label{sec:pfofns}\label{sec:ipmproofs}

\begin{proof}
The first part is a standard property of Newton's method applied to self-concordant functions. We refer the reader to~\cite{ben2001lectures} for details. 

What we want to prove is that Newton's is robust to errors in the solution to the linear system involving the Hessian.
Indeed, first we see that the Hessian at $v''$ approximates the one at $v'$. To simplify notation, we write $\HH_v = \nabla^2 f(v)$. Since $f_\mu$ is self-concordant, we have that
\begin{equation}\label{eq:move0}
\HH_{v'} \cdot (1-  \Vert v'-v'' \Vert_{\HH_{v'}} )^2 \preceq \HH_{v''} \preceq \HH_{v'} \cdot \frac{1}{(  1 - \Vert v'-v'' \Vert_{\HH_{v'}}  )^2} 
\end{equation}
and similarly
\begin{equation}\label{eq:move}
\HH_{v} \cdot (1-  \Vert v-v' \Vert_{\HH_{v'}} )^2 \preceq \HH_{v'} \preceq \HH_{v} \cdot \frac{1}{(  1 - \Vert v-v' \Vert_{\HH_{v}}  )^2} 
\end{equation}
the latter of which can be written equivalently as
\begin{equation*}
\HH_{v} \cdot (1-  \Vert \nabla f_\mu(v) \Vert_{\HH_{v}^{-1}} )^2 \preceq \HH_{v'} \preceq \HH_{v} \cdot \frac{1}{(  1 - \Vert \nabla f_\mu(v) \Vert_{\HH_{v}^{-1}}  )^2} \,{,}
\end{equation*}
so
\begin{equation}\label{eq:hessappx}
\HH_v \cdot \frac{9}{16} \preceq \HH_{v'} \preceq \HH_v \cdot \frac{16}{9}\,{.}
\end{equation}

The error guarantee on $v''$ equivalently gives us  that
\begin{equation}
\Vert v'-v'' \Vert_{\HH_v} \leq \epsilon \Vert v'-v \Vert_{\HH_v}\,{,}
\end{equation}
so combining with~\ref{eq:hessappx} we obtain that
\begin{equation}\label{eq:normmess}
\Vert v' - v'' \Vert_{\HH_{v'}} \leq \frac{4}{3} \Vert v'-v'' \Vert_{\HH_v} \leq \frac{4}{3}\epsilon \Vert v'-v \Vert_{\HH_v} 
=\frac{4}{3}\epsilon \Vert  \nabla f_\mu(v) \Vert_{\HH_v^{-1}}  \leq \frac{\epsilon}{3}\,{.}
\end{equation}

Also, since for any $z$
\begin{align}
\nabla f_\mu(z + w) = \nabla f_\mu(z) + \int_0^1 \HH_{z+t w} w \cdot dt \nonumber
\end{align}
we get, by applying triangle inequality and~\ref{eq:move}, that
\begin{equation}\label{eq:grad_change}
\Vert \nabla f_\mu(z + w) \Vert_{\HH_z^{-1}} \leq \Vert \nabla f_\mu(z) \Vert_{\HH_z^{-1}} +  \frac{1}{1-\Vert w \Vert_{\HH_z} }  \left\Vert  \HH_z w \right\Vert_{\HH_z^{-1}} \,{.}
\end{equation}

Therefore, using~\ref{eq:move0} and~\ref{eq:grad_change} where we substitute $v'$ for $z$:
\begin{align*}
\Vert \nabla f_\mu(v'') \Vert_{\HH_{v''}^{-1}} 
&\leq  \Vert  \nabla f_\mu(v'') \Vert_{\HH_{v'}^{-1}} \cdot
 \frac{1}{1-\Vert v'-v'' \Vert_{\HH_{v'}}}   \\
&\leq \left( \Vert  \nabla f_\mu(v') \Vert_{\HH_{v'}^{-1}} + 
\frac{\left\Vert  \HH_{v'} (v'-v'') \right\Vert_{\HH_{v'}^{-1}} }{1-\Vert v'-v'' \Vert_{\HH_{v'}}}  
 \right) \cdot
 \frac{1}{1-\Vert v'-v'' \Vert_{\HH_{v'}^{-1}}}   \\
 &\leq \left( \frac{1}{8} + \frac{\epsilon / 3}{1-\epsilon /3 } \right)\cdot \frac{1}{1-\epsilon/3}\\
 &\leq \frac{1}{6}\,{.}
\end{align*}

\end{proof}

\subsection{Proof of Lemma~\ref{lem:factor}}\label{sec:app3}
\begin{proof}
First we note that the nonzero submatrix of
 $\nabla^2 \phi_{ij}$ is (where rows/columns correspond to $x_i$, $x_j$, $t_{ij}$, in this order):
\begin{align*}
\nabla^2 \phi_{ij}(x_i, x_j, t_{ij}) &= 
\begin{bmatrix} 
 \alpha & -\alpha & -\alpha/t_{ij} \\
 -\alpha & \alpha & \alpha/t_{ij} \\
 -\alpha/t_{ij} & \alpha/t_{ij} & \beta/t_{ij}^2 \\
\end{bmatrix}
\end{align*}
such that $$\alpha = \frac{1}{( \ln t_{ij} - (\ln \AA_{ij} + x_i -x_j ) )^2}\quad\textnormal{and}\quad \beta = \alpha + \sqrt{\alpha} + 1\,{.} $$

Furthermore, this submatrix can be factored, by Schur complementing the last row and column, as:
\begin{align*}
\nabla^2{\phi_{ij}}(x_i,x_j,t_{ij})
&=
\begin{bmatrix} 
1 & 0 & -\frac{\alpha}{\beta} \\
0 & 1 & \frac{\alpha}{\beta} \\
0 & 0 & 1
\end{bmatrix}
\begin{bmatrix} 
\alpha^2(1-
\frac{1}{\beta}) & -\alpha^2(1-
\frac{1}{\beta}) & 0 \\
-\alpha^2(1-
\frac{1}{\beta}) & \alpha^2(1-
\frac{1}{\beta}) & 0 \\
0 & 0 & \beta/t_{ij} 
\end{bmatrix}
\begin{bmatrix} 
1 & 0 & 0\\
0 & 1 & 0 \\
-\frac{\alpha}{\beta}  & \frac{\alpha}{\beta}  & 1
\end{bmatrix}\,{,}
\end{align*}
and thus one can easily notice that the Schur complement is SDD.

Furthermore, since $\psi$ is a standard logarithmic barrier, its Hessian is a diagonal matrix with nonnegative entries. Therefore, we can split the contribution of the diagonal matrix $\nabla^2 \psi(x,t)$ into pieces $\DD_{ij}$ which contains nonzeroes only at $x_i$, $x_j$ and $t_{ij}$. In other words, we can write ${\HH} = \sum_{(i,j)\in\supp(\AA)} (\nabla^2 \phi_{ij}(x,t) + \DD_{ij})$.
Since the Schur complement of $t_{ij}$ of the matrix $\nabla^2 \phi(x,t)$ is SDD, we also have that the Schur complement of $t_{ij}$ of the matrix $\nabla^2 \phi(x,t) + \DD_{ij}$ is SDD, so the matrix can also be factored similarly to the factoring above, and all of these factorizations can be computed in overall $O(m+n) = O(m)$ time.

Finally, since each of these factorizations is computed by Schur complementing a unique $t_{ij}$, which is nonzero in a single matrix, we see that the Schur complement of the block ${\HH}(t)$ of the matrix ${\HH}$ is equal to the sum of Schur complements of the block $t_{ij}$ of the matrices $\nabla^2 \phi_{ij} (x,t) + \DD_{ij}$. This holds similarly, for the corresponding lower and upper diagonal matrices, which yields the desired factorization simply by summing up.
\end{proof}

\bibliographystyle{plain}
\bibliography{ref}

\end{document}